\newcommand{\boundary}{\partial}
\newcommand{\dmax}{d_{\max}}
\newcommand{\downlap}{L^{\mathrm{down}}}
\newcommand{\proj}{\Pi}
\newcommand{\R}{\mathbb{R}}
\newcommand{\Rtot}{R_{\tot}}
\newcommand{\wpartial}{\widetilde\partial}
\DeclareMathOperator{\im}{im}
\DeclareMathOperator{\tot}{tot}
\DeclareMathOperator{\poly}{poly}
\definecolor{BetterGreen}{rgb}{0.0, 0.8, 0.4}
\definecolor{BetterBlue}{rgb}{0.38, 0.44, 1.0}
\theoremstyle{plain}
\newtheorem{theorem}{Theorem}[section]
\newtheorem{lemma}[theorem]{Lemma}
\newtheorem{corollary}[theorem]{Corollary}
\theoremstyle{definition}
\theoremstyle{remark}
\newcommand{\kharmonic}[2]{H_{#1}^{(#2)}}
\definecolor{ibm-blue}{HTML}{648FFF}
\definecolor{ibm-purple}{HTML}{785EF0}
\definecolor{ibm-magenta}{HTML}{DC267F}
\definecolor{ibm-orange}{HTML}{FE6100}
\newcommand{\first}[1]{{\textcolor{ibm-orange}{\textbf{#1}}}}
\newcommand{\second}[1]{{\textcolor{ibm-magenta}{\textbf{#1}}}}
\newcommand{\third}[1]{{\textcolor{ibm-purple}{\textbf{#1}}}}
\icmltitlerunning{Biharmonic Distance of Graphs and its Higher-Order Variants}
\begin{document}

\twocolumn[
\icmltitle{Biharmonic Distance of Graphs and its Higher-Order Variants: Theoretical Properties with Applications to Centrality and Clustering}



\icmlsetsymbol{equal}{*}

\begin{icmlauthorlist}
\icmlauthor{Mitchell Black}{osu}
\icmlauthor{Lucy Lin}{osu}
\icmlauthor{Weng-Keen Wong}{osu}
\icmlauthor{Amir Nayyeri}{osu}
\end{icmlauthorlist}

\icmlaffiliation{osu}{School of Electrical Engineering and Computer Science, Oregon State University, Corvallis, Oregon, USA}

\icmlcorrespondingauthor{Mitchell Black}{blackmit@oregonstate.edu}

\icmlkeywords{Machine Learning, ICML}

\vskip 0.3in
]



\printAffiliationsAndNotice{} 

\begin{abstract}
Effective resistance is a distance between the vertices of a graph that is both theoretically interesting and useful in applications. We study a variant of effective resistance called the biharmonic distance~\cite{lipman2010biharmonic}. While the effective resistance measures how well-connected two vertices are, we prove several theoretical results suggesting that the biharmonic distance measures how important an edge is to the global topology of the graph. Our theoretical results connect the biharmonic distance to well-known measures of connectivity of a graph like its total resistance and sparsity. Based on these results, we introduce two clustering algorithms using the biharmonic distance. Finally, we introduce a further generalization of the biharmonic distance that we call the $k$-harmonic distance. We empirically study the utility of biharmonic and $k$-harmonic distance for edge centrality and graph clustering.
\end{abstract}

\section{Introduction}
\label{sec:introduction}

Many areas of machine learning, such as clustering \cite{SheMalik2000SpecClust}, semi-supervised learning \cite{ZhuEtal2003SemiSupHarmonic} and graph mining \cite{Chakrabarti2006survey}, require measuring the distance between two vertices in a graph. Simple distance metrics like the shortest-path distance only consider a single path between vertices but do not contain any information about the global topology of the graph. For many tasks, a better choice of a distance metric is one that accounts for all paths connecting a pair of vertices, thereby capturing aspects of the connectivity of the graph. One commonly used distance metric with this property is effective resistance \cite{kirchhoff1847resistance}.

\begin{figure}[t]
    \centering
    \includegraphics[height=1.75in]{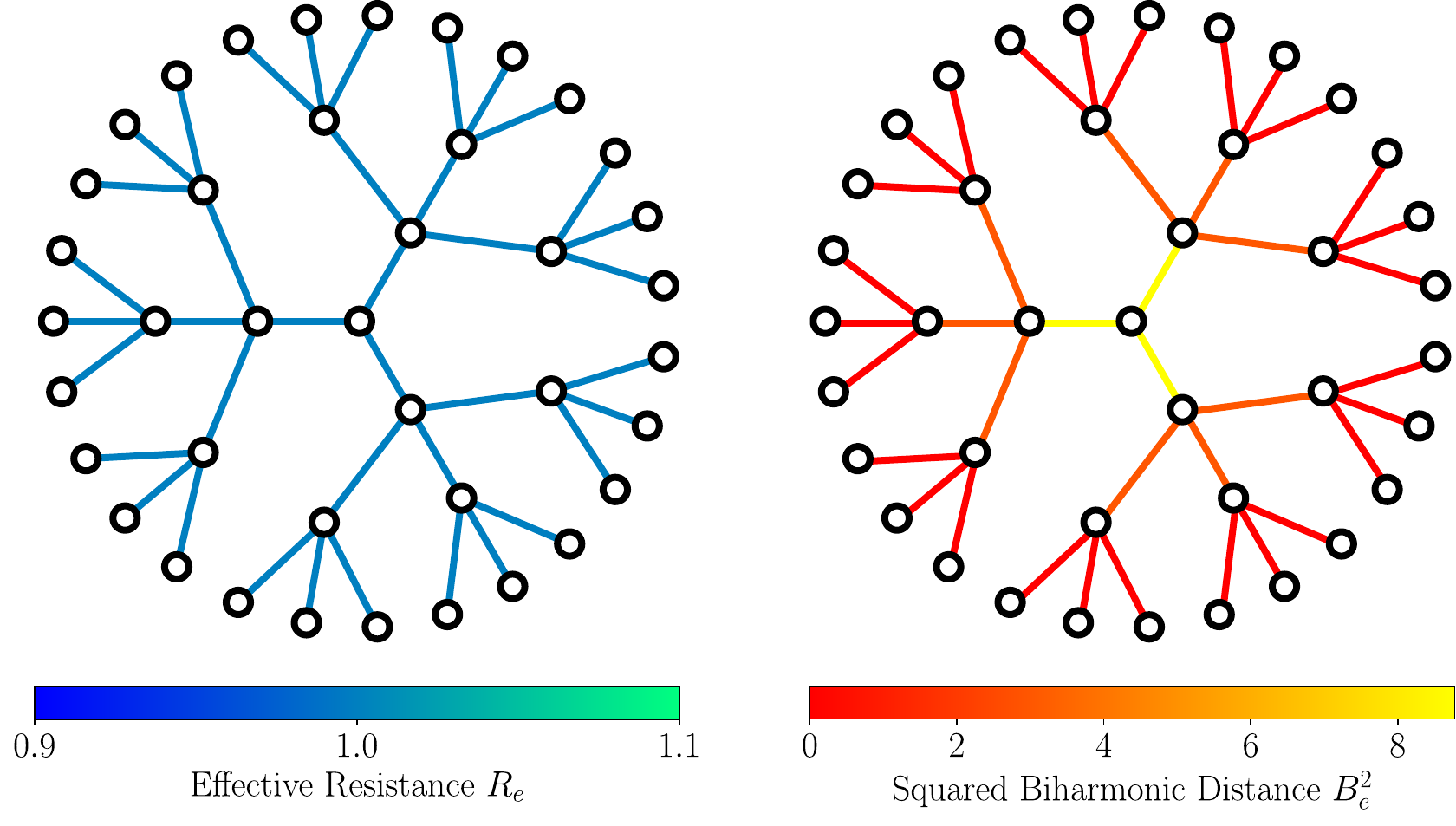}
    \caption{The effective resistance vs.~squared biharmonic distance for edges in a tree. While the effective resistance is 1 on all edges, the biharmonic distance is higher on edges closer to the root, demonstrating that biharmonic distance is aware of the global topology of a graph. See~\Cref{apx:examples} for more examples.}
    \label{fig:trees}
\end{figure}

The \textit{\textbf{effective resistance}} between vertices $s$ and $t$ in a graph is a distance measure defined
$$
    R_{st} = (1_s-1_t)^{T}L^{+}(1_s-1_t)
$$
where $L^{+}$ is the pseudoinverse of the graph Laplacian and $1_v$ is the indicator vector of a vertex $v$.
\par 
While effective resistance was originally introduced by \citet{kirchhoff1847resistance} in the context of electrical networks, it has since been discovered that by treating the mathematics abstractly, effective resistance has many interesting theoretical properties and applications in graph theory. 
An intuitive explanation of effective resistance is that 

\begin{center}
\textit{Effective resistance measure how well-connected a pair of vertices $s$ and $t$ are.}
\end{center}

because the more, and shorter, paths there are connecting $s$ and $t$, the lower the effective resistance between $s$ and $t$. Theoretical works on the effective resistance have reinforced this intuition. For example, it has been shown that effective resistance is proportional to the commute time of random walks between two vertices~\cite{chandra1996resistance} and is proportional to the probability an edge is in a random spanning tree~\cite{kirchhoff1847resistance}. Beyond its theoretical appeal, effective resistance is used in various applications, notably in contexts requiring a measure of connectivity, such as clustering~\cite{yen2005clustering, khoa2012large, AlevAnaryOveis2017GraphClustEffRes}, graph sparsification~\cite{SpielmanSrivastava2011SpecSpars}, edge centrality~\cite{teixeira2013spanning,mavroforakis2015spanning}, semi-supervised learning on graphs~\cite{ZhuEtal2003SemiSupHarmonic}, link prediction~\cite{lu2011link}, and as a positional encoding for graph neural networks~\cite{zhang2023rethinking, velingker2024affinity}. 

There are other graph distances similar to effective resistance. For example, the \textit{\textbf{diffusion distance}} (at time $t$)~\cite{coifman2006diffusion} is $D^{(t)}_{st}=\sqrt{(1_s-1_t)^{T}e^{-tL}(1_s-1_t)}$. Generally, we can consider spectral distances $\sqrt{(1_s-1_t)^{T} f(L)(1_s-1_t)}$ for any function $f$. However, it is an open question which function $f$ defines the best distance for different applications. Unfortunately, besides the effective resistance and diffusion distance, spectral distances are not well-studied and little is known about their theoretical properties or applicability.
\par 
\textit{\textbf{Biharmonic distance}} is one such spectral distance, defined 
$$
    B_{st} = \sqrt{(1_s-1_t)^{T} (L^{+})^{2} (1_s-1_t)}.
$$
The biharmonic distance was introduced by \citet{lipman2010biharmonic} in the context of geometry processing, where it was noted that the biharmonic distance was seemingly more aware of the global structure of a graph than effective resistance. Subsequent works have begun to use it in other applications, including consensus networks~\citep{yi2018consensus,yi2021consensus} and as a centrality measure~\citep{li2018kirchhoff, yi2018biharmonic}. While researchers have begun to study the theoretical properties of the biharmonic distance~\citep{lin2022biharmonic, wei2021biharmonic}, biharmonic distance is not nearly as well-understood theoretically as effective resistance.

\subsection{Contributions}

\begin{table*}[th]
    \centering
    \begin{tabular}{|p{0.2\linewidth}|p{0.25\linewidth}|p{0.45\linewidth}|}
        \hline
        & \textbf{Effective Resistance} & \textbf{Biharmonic Distance}  \\
        \hline
        Definition & $R_{st} = (1_s-1_t)^{T}L^{+}(1_s-1_t)$ &  $B_{st}=\sqrt{(1_s-1_t)^{T}L^{2+}(1_s-1_t)}$ \\
        \hline 
        Intuitive Explanation & \centering\textit{Measures how well-connected the vertices $s$ and $t$ are.}  & \centering \textit{Measures how important the edge $\{s,t\}$ is to the global topology of the graph.} \cr
        \hline 
        Bounds & $\frac{1}{n} \leq R_{st} \leq n-1$ & $\frac{1}{n^{2}}\leq B^{2}_{st}\leq n^{3}$ \hfill (\Cref{thm:lower_bound_biharmonic,thm:upper_bound_biharmonic})
        \\
        \hline 
        Bounds on Edges & $R_{e} \leq 1$ & $B^{2}_{e}\leq n$ \hfill (\Cref{thm:upper_bound_biharmonic_edge})
        \\
        \hline 
        Sum over Edges & $\sum_{e\in E} w_eR_e = n-1$ &  $n\sum_{e\in E} w_eB^{2}_{e} = \Rtot$ \hfill (\Cref{cor:biharmonic_foster})\\
        \hline 
        Electrical Flows & $R_{st} = \sum_{e\in E} f_{st}^{2}(e)/w_e$ & $nw_eB_{e}^{2} = \sum_{s,t\in V} f_{st}(e)^{2}/w_e$ \hfill (\Cref{thm:biharmonic_distance_is_squared_electrical_flow}) \\
        \hline 
        Cut Edges & $R_e = 1$ & $B^{2}_e=(\text{Cut Sparsity})^{-1}$ \hfill (\Cref{thm:biharmonic_of_cutedge_and_sparsity})\\
        \hline 
        Edges \& Sparsity & - & $B^{2}_e\sim (\text{Cut Sparsity})^{-1}$ \hfill (\Cref{thm:sparse_cut_implies_large_biharmonic,thm:large_biharmonic_edge_implies_sparse_cut})\\
        \hline
    \end{tabular}
    \caption{A comparison of the effective resistance and biharmonic distance.}
\end{table*}

We present several new theoretical properties of the biharmonic distance\footnotemark. Our results mainly concern the biharmonic distance of \textit{edges} in a graph. We propose the following intuitive explanation: 
\footnotetext{While the biharmonic distance $B_{st}$ has the property of being a metric, all of our theoretical results are more naturally expressed using the \textbf{squared} biharmonic distance $B^{2}_{st}$, suggesting this is the more interesting measure, even though it is not a metric.}

\begin{center}
\textit{Biharmonic distance measures how important an edge is to the global topology of a graph.}
\end{center}

We present several theoretical results that reinforce this intuition. Accordingly, we believe that the biharmonic distance may be a superior choice over other distances like effective resistance for applications where we need to capture the global topology of a graph with respect to the edges. 

\paragraph{Electrical Flow and Edge Centrality.} We prove that the biharmonic distance of an edge is proportional to its total usage in electrical flows between all pairs of vertices in the graph (\Cref{thm:biharmonic_distance_is_squared_electrical_flow}).
This theorem suggests that edges with high biharmonic distance are important to the connection between many pairs of vertices. This also partially explains the success of biharmonic distance as a measure of centrality. Our theorem also implies a generalization of the well-known Foster's Theorem to biharmonic distance.

\paragraph{Clustering and Sparse Cuts.} We establish a connection between the biharmonic distance of edges and the sparsity of cuts in the graph. In particular, we prove that sparse cuts necessarily contain edges with high biharmonic values, and conversely, edges with high biharmonic values are contained in sparse cuts.
(\Cref{thm:sparse_cut_implies_large_biharmonic,thm:large_biharmonic_edge_implies_sparse_cut}).
This result suggests the use of biharmonic distance in graph clustering algorithms, an idea we explore in~\Cref{sec:application_clustering}

\paragraph{Higher-Order Harmonic Distances.} We introduce a generalization of the biharmonic distance called the \textit{\textbf{k-harmonic distance}}, defined 
$$
    \kharmonic{st}{k} = \sqrt{(1_s-1_t)^{T} L^{k+} (1_s-1_t)}.
$$
Some of our theoretical results for biharmonic distance generalize to $k$-harmonic distance. 

\paragraph{Experiments.} We compare biharmonic and $k$-harmonic distance to other centrality measures in terms of correlation and resilience to change. We also compare our clustering algorithms with other clustering algorithms. We conducted experiments on both synthetic and real world data and observed that algorithms using biharmonic distance consistently outperform effective resistance methods. Furthermore, we observed that $k$-harmonic clustering algorithms achieve optimal results at larger values of $k$. 

\subsection{Related work} 

\paragraph{Biharmonic Distance.}
Previous works have also hinted at our proposed interpretation of biharmonic distance. These works establish a connection between the \textit{\textbf{total resistance}} $\Rtot$ (aka the \textit{\textbf{Kirchhoff index}}) of a graph and the biharmonic distance of an edge. The total resistance of a graph is the sum of effective resistance between all pairs of vertices in a graph, i.e.~$\Rtot = \sum_{s,t\in V} R_{st}$. Because effective resistance measures how well-connected two vertices are, the total resistance is often used as a measure of the connectivity or robustness of the entire graph~\cite{Klein1993resistance, GhoshBoydSaber08MinEffRes, ellens2011effective, summers2015topology, black2023understanding}. These works have shown that the amount adding an edge or changing an edge's weight changes total resistance is proportional to the (squared) biharmonic distance of the edge, meaning the higher the biharmonic distance, the more important the edge is to the connectivity of a graph. For example,~\citet{GhoshBoydSaber08MinEffRes} show:

\begin{restatable}[\citet{GhoshBoydSaber08MinEffRes}]{theorem}{rtotpartialderivative}
\label{thm:rtot_partial_derivative}
    Let $G=(V,E,w)$ be a weighted graph with $n$ vertices. Let $e\in E$ be an edge with weight $w_e$. Then
    $$
        \frac{\partial\Rtot}{\partial w_e} = -nB_e^{2}.
    $$ 
\end{restatable}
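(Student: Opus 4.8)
The plan is to differentiate the total resistance $\Rtot = \sum_{s,t\in V} R_{st} = \sum_{s,t\in V}(1_s-1_t)^T L^+ (1_s-1_t)$ with respect to the edge weight $w_e$, which reduces to understanding $\partial L^+/\partial w_e$. First I would rewrite $\Rtot$ in a more tractable form: expanding the sum over all ordered pairs, $\sum_{s,t}(1_s-1_t)^T L^+(1_s-1_t) = 2n\,\Tr(L^+) - 2\,(\sum_s 1_s)^T L^+(\sum_t 1_t)$, and since $\sum_v 1_v = \mathbf{1}$ is in the kernel of $L^+$, the second term vanishes. Hence $\Rtot = 2n\,\Tr(L^+)$ (or $n\,\Tr(L^+)$ depending on whether pairs are ordered), and the problem becomes computing $\partial \Tr(L^+)/\partial w_e$.

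Next I would use the standard matrix-calculus fact for the derivative of a pseudoinverse of a symmetric matrix whose kernel is locally constant: since adding weight to an edge does not change $\ker L = \operatorname{span}(\mathbf 1)$ (the graph stays connected, or at least the component structure is unchanged for an infinitesimal perturbation of an existing edge), we have $\partial L^+/\partial w_e = -L^+ (\partial L/\partial w_e) L^+$, exactly as in the invertible case but with $L^+$ in place of $L^{-1}$. The derivative $\partial L/\partial w_e$ is simply $b_e b_e^T$ where $b_e = 1_s - 1_t$ is the signed incidence vector of $e=\{s,t\}$, because $L = \sum_{e} w_e b_e b_e^T$. Therefore $\partial \Tr(L^+)/\partial w_e = -\Tr(L^+ b_e b_e^T L^+) = -b_e^T (L^+)^2 b_e = -B_e^2$.

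Combining, $\partial \Rtot/\partial w_e = -n B_e^2$ (with the constant absorbed appropriately into the convention for $\Rtot$; I would check the factor-of-two bookkeeping against the paper's definition $\Rtot = \sum_{s,t\in V} R_{st}$ so that the stated identity comes out exactly). I expect the main technical point to be justifying the pseudoinverse derivative formula — specifically, that because $\mathbf 1$ remains in the kernel throughout the perturbation, the projection-correction terms that appear in the general Golub–Pereyra formula for $\partial A^+/\partial t$ all drop out, leaving the clean expression $-L^+ \dot L\, L^+$. This can be argued either by citing the standard result or by differentiating the identities $L^+ L L^+ = L^+$ and $L L^+ = I - \frac1n \mathbf{1}\mathbf{1}^T$ (a constant) and solving, using that $b_e \perp \mathbf 1$ so $\frac1n\mathbf{1}\mathbf{1}^T b_e b_e^T = 0$. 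Everything else is routine linear algebra.
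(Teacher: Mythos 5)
Your argument is correct, and it is essentially the standard one: the paper itself does not prove this statement but imports it from \citet{GhoshBoydSaber08MinEffRes}, whose derivation is the same computation you outline, namely writing $\Rtot$ as a multiple of $\Tr(L^{+})$ and differentiating a (pseudo)inverse against the rank-one perturbation $\partial L/\partial w_e = b_eb_e^{T}$, so that $\partial\Tr(L^{+})/\partial w_e = -b_e^{T}(L^{+})^{2}b_e = -B_e^{2}$. Two small points to settle. First, the factor-of-two bookkeeping you flag resolves in your favor: for the stated identity one needs $\Rtot = n\Tr(L^{+})$, i.e.\ each pair counted once (this is also the convention under which the paper's \Cref{cor:biharmonic_foster} and \Cref{thm:biharmonic_distance_is_squared_electrical_flow} are consistent, and it matches the definition $\sum_{i<j}R_{ij}$ in \citet{GhoshBoydSaber08MinEffRes}); with ordered pairs you would get $-2nB_e^{2}$. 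Second, your justification of $\partial L^{+}/\partial w_e = -L^{+}(\partial L/\partial w_e)L^{+}$ is sound as sketched — the Golub--Pereyra correction terms vanish because the kernel (spanned by component indicators, hence containing $\mathbf 1$) is locally constant and $b_e$ is orthogonal to it — but if you want to avoid pseudoinverse calculus altogether you can use the rank-one shift trick from the original reference: for connected $G$, $L^{+} = (L+\tfrac1n\mathbf 1\mathbf 1^{T})^{-1} - \tfrac1n\mathbf 1\mathbf 1^{T}$, so $\Rtot = n\Tr\bigl((L+\tfrac1n\mathbf 1\mathbf 1^{T})^{-1}\bigr) - n$ and the ordinary derivative-of-inverse formula gives the result directly, again using $b_e\perp\mathbf 1$.
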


Additionally, it has been proved by various authors~\citep{summers2015topology, li2018kirchhoff, black2023understanding} that if an edge $e$ is added to a connected graph, the change in the total resistance is proportional to the biharmonic distance.

\begin{theorem}[Various Authors]
\label{thm:change_total_resistance}
Let $G=(V,E)$ be a connected unweighted graph with $n$ vertices. Let $e\in E$ be an edge such that $G\setminus\{e\}$ is connected. Then
$$
 \Rtot(G) -\Rtot(G\setminus\{e\}) = -n\frac{B_e^{2}}{1+R_e}.
$$
\end{theorem}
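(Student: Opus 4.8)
The plan is to use that the Laplacians of $G$ and $G\setminus\{e\}$ differ by a rank-one matrix, apply the Sherman--Morrison formula for the Laplacian pseudoinverse, and combine this with the elementary identity $\Rtot = n\,\Tr(L^{+})$. To see the identity, expand $R_{st}=L^{+}_{ss}-2L^{+}_{st}+L^{+}_{tt}$ and sum over all pairs $s,t$: the diagonal terms contribute $n\,\Tr(L^{+})$ and the off-diagonal contribution is $\mathbf{1}^{T}L^{+}\mathbf{1}=0$ because $L^{+}\mathbf{1}=0$. Hence $\Rtot(G)-\Rtot(G\setminus\{e\}) = n\bigl(\Tr(L^{+})-\Tr(L'^{+})\bigr)$, where $L'$ is the Laplacian of $G\setminus\{e\}$, and the whole problem reduces to evaluating $\Tr(L^{+})-\Tr(L'^{+})$.

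Writing $e=\{s,t\}$ and $b_e:=1_s-1_t$, and using that $G$ is unweighted so $w_e=1$, we have $L = L' + b_eb_e^{T}$. Because $G\setminus\{e\}$ is connected, $\ker(L')=\ker(L)=\operatorname{span}(\mathbf{1})$, so $\operatorname{range}(L')=\mathbf{1}^{\perp}$; and $b_e\perp\mathbf{1}$, so $b_e$ lies in the range of $L'$. This is exactly the condition validating the rank-one update formula for pseudoinverses: restricted to $\mathbf{1}^{\perp}$, both $L'$ and $L$ are positive definite with range $\mathbf{1}^{\perp}$, so Sherman--Morrison on that subspace yields
\[
 L^{+} \;=\; L'^{+} \;-\; \frac{L'^{+}b_eb_e^{T}L'^{+}}{\,1+b_e^{T}L'^{+}b_e\,}.
\]
Taking the trace and using its cyclicity, $\Tr(L'^{+}b_eb_e^{T}L'^{+}) = b_e^{T}(L'^{+})^{2}b_e$. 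Denoting by $R_e'=b_e^{T}L'^{+}b_e$ and $(B_e')^{2}=b_e^{T}(L'^{+})^{2}b_e$ the effective resistance and squared biharmonic distance of the pair $\{s,t\}$ measured in $G\setminus\{e\}$, we obtain
\[
 \Rtot(G)-\Rtot(G\setminus\{e\}) \;=\; -\,n\,\frac{(B_e')^{2}}{\,1+R_e'\,},
\]
which is the stated formula once the distances are read off in the appropriate graph; if one prefers the quantities of $G$ itself, the same identity gives $R_e = R_e'/(1+R_e')$ and $B_e^{2} = (B_e')^{2}/(1+R_e')^{2}$, which rewrites the right-hand side in terms of $R_e$ and $B_e^{2}$.

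An alternative derivation, using a result already in the excerpt, treats $G\setminus\{e\}$ as $G$ with the weight of $e$ lowered from $1$ to $0$: by \Cref{thm:rtot_partial_derivative}, $\partial\Rtot/\partial w_e=-nB_e^{2}$, and Sherman--Morrison expresses $B_e^{2}$ as the function $(B_e')^{2}/(1+w_eR_e')^{2}$ of $w_e$, so integrating over $w_e\in[0,1]$ gives the same answer since $\int_0^1(1+w_eR_e')^{-2}\,dw_e=(1+R_e')^{-1}$. I expect the main obstacle to be not any single computation but the careful handling of the pseudoinverse rank-one update: verifying that deleting $e$ does not enlarge the kernel — which is precisely where the hypothesis that $G\setminus\{e\}$ is connected enters, and which also ensures $1+R_e'$ is finite and positive — together with keeping straight in which graph $R_e$ and $B_e$ are being evaluated. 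The remaining pieces (the trace identity for $\Rtot$, cyclicity of trace, and the one-line algebra relating the $G$ and $G\setminus\{e\}$ quantities) are routine.
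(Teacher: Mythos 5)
The paper itself gives no proof of this statement: it is imported as background from prior work (``Various Authors''), so there is no internal argument to compare yours against; what you wrote is, in substance, the standard Sherman--Morrison proof used in that literature, and it is correct. Writing $L=L'+b_eb_e^{T}$ with $L'$ the Laplacian of $G\setminus\{e\}$, your justification of the rank-one pseudoinverse update is exactly right (connectivity of $G\setminus\{e\}$ gives $\ker L'=\ker L=\operatorname{span}(\mathbf{1})$, hence $b_e\in\operatorname{range}(L')$), and combined with $\Rtot=n\,\Tr(L^{+})$ (the unordered-pairs reading of $\sum_{s,t}R_{st}$, which is also what makes the constant $n$ in the theorem and in \Cref{cor:biharmonic_foster} come out right) it yields $\Rtot(G)-\Rtot(G\setminus\{e\})=-n\,(B_e')^{2}/(1+R_e')$ with primed quantities measured in $G\setminus\{e\}$. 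Your closing caveat is in fact the crux, and you resolve it correctly: the displayed identity with denominator $1+R_e$ is true only when $B_e$ and $R_e$ are read in $G\setminus\{e\}$ (as the surrounding prose about \emph{adding} an edge suggests); if one insists on reading them in $G$, your own conversion identities $R_e=R_e'/(1+R_e')$ and $B_e^{2}=(B_e')^{2}/(1+R_e')^{2}$ turn the right-hand side into $-n\,B_e^{2}/(1-R_e)$, not $-n\,B_e^{2}/(1+R_e)$ --- a quick check on $K_3$ minus one edge ($\Rtot$ drops from $2$ to $4$, i.e.\ difference $-2$, while $-nB_e^{2}/(1+R_e)=-2/5$) confirms that the literal-$G$ reading of the printed formula is not the intended one. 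Your alternative derivation, integrating \Cref{thm:rtot_partial_derivative} over $w_e\in[0,1]$ with $B_e^{2}(w_e)=(B_e')^{2}/(1+w_eR_e')^{2}$, is also sound and has the merit of leaning only on a result the paper actually states.
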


\paragraph{p-Resistance.}
Effective resistance has also been generalized to another distance called the \textit{\textbf{p-resistance}}~\citep{herbster2009predicting}, where the $p$-resistance between $s$ and $t$ is the minimum $p$-norm of any $st$-flow. (Effective resistance is then the 2-resistance.) The $p$-resistance is unrelated to our $k$-harmonic distance, even when $p=k$.

\subsection{Overview of the Paper}
In \Cref{sec:background}, we introduce the necessary background for this paper. In \Cref{sec:new_formula}, we give a new formula for biharmonic distance that connects biharmonic distance with an operator from algebraic topology called the down Laplacian. In~\Cref{sec:biharmonic_and_electrical_flows}, we prove a connection between biharmonic distance and the electrical flows used to define effective resistance. We then argue how this connection explains the success of biharmonic distance as a centrality measure. In~\Cref{sec:high_biharm_and_sparse_cuts}, we prove a connection between biharmonic distance and sparsity in a graph. Based off these theoretical results, in~\Cref{sec:application_clustering}, we introduce two graph clustering algorithms that use the biharmonic distance. In~\Cref{sec:karmonic}, we introduce a generalization of the biharmonic distance called the $k$-harmonic distance. In~\Cref{sec:experiments}, we test the clustering algorithms introduced in~\Cref{sec:application_clustering} on synthetic and real-world graph clustering datasets.

\section{Background} 
\label{sec:background}

\begin{figure*}[ht]
    \centering
    \includegraphics[width=0.9\linewidth]{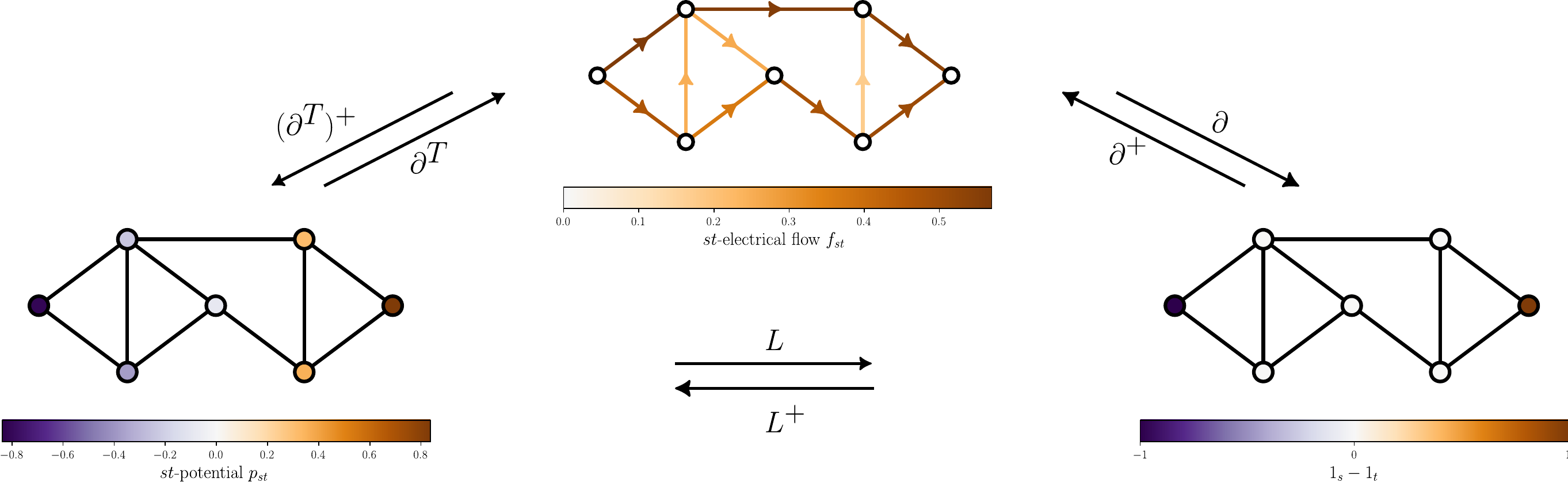}
    \caption{From right to left: the vector $1_s-1_t$, the electrical flow $f_{st} = \partial^+(1_s - 1_t)$, the potentials $p_{st}=L^{+}(1_s-1_t)$, and the maps that connect them.}
    \label{fig:potentials_flows_net_currents}
\end{figure*}

Let $G=(V,E,w)$ be a undirected, weighted graph with $n=|V|$ and $m=|E|$. The \textit{\textbf{adjacency matrix}} is the matrix $A\in\R^{n\times n}$ where $A_{uv}=w(\{u,v\})$ if $\{u,v\}\in E$ and 0 otherwise. The \textit{\textbf{degree matrix}} is the diagonal matrix $D\in\R^{n\times n}$ such that $D_{vv} = \sum_{u\in V} w(\{u,v\}) =\deg(v)$, the degree of $v$. The \textit{\textbf{Laplacian}} is the matrix $L=D-A\in\R^{n\times n}$. 
We index elements of the vectors in $\R^n$ and $\R^{m}$ by vertices and edges  of $G$, where we assume a bijection between $V$ and $E$ and orthonormal bases of $\R^{n}$ and $\R^{m}$.
\par 
There is another common, equivalent way of defining the graph Laplacian. For each edge $\{s,t\}\in E$, fix an arbitrary order of its vertices $e=(s,t)$. The \textit{\textbf{boundary matrix}} (aka \textit{\textbf{signed incidence matrix}}) is the matrix $\boundary:\R^{m}\to\R^{n}$ such that for an edge $e=(s,t)\in E$, the column $\boundary 1_e = 1_s-1_t$, where $1_{x}$ is the indicator column vector of a vertex or edge $x$. The \textit{\textbf{graph Laplacian}} is $L = \boundary W \boundary^{T}$, where $W\in\R^{m\times m}$ is the diagonal weight matrix. To simplify notation, in this paper, we denote $\wpartial = \partial W^{1/2}$, hence, $L = \wpartial\wpartial^T$.
\par 
For an edge $e=\{s,t\}\in E$, we use the notation $R_e$ and $B_e$ as shorthand for the effective resistance or biharmonic distance between its endpoints, i.e.~$B_e=B_{st}$.

\paragraph{Electrical Interpretation of Effective Resistance}
As the name suggests, the effective resistance has an interpretation in terms of electrical networks. For a graph $G$ with weights $w$, assume that each edge is a wire with resistance $w^{-1}_{e}$ (i.e.~conductance $w_e$). If we insert a unit of current into $s$ and remove a unit of current from $t$, then $R_{st}$ is the resistance between $s$ and $t$. 
\par 
There are other quantities associated with the electrical interpretation of effective resistance.
The \textit{\textbf{st-potential}} is the function $p_{st}:V\to\R$ defined $p_{st} = L^{+}(1_s-1_t)$; in the interpretation, $p_{st}(v)$ is the voltage at the vertex $v$. The \textit{\textbf{st-electrical flow}} between $s$ and $t$ is the function $f_{st}:E\to\R$ defined $f_{st} = W\partial^{T} L^{+}(1_{s}-1_{t})$; $f_{st}(e)$ is the amount of current flowing through the edge $e$. For unweighted graphs, we can simply this to $f_{st} = \partial^{+}(1_s-1_t)$. See~\Cref{fig:potentials_flows_net_currents}.
\par 
The following are well-known properties of $st$-potentials and $st$-electrical flows. For completeness, proofs of these lemmas can be found in~\Cref{apx:background}.

\begin{restatable}{lemma}{propertiesofpotentials}
\label{lem:properties_of_potentials}
\emph{(Properties of $st$-potentials)} 
\vspace{-0.25cm}
    \begin{enumerate} 
        \item $B_{st}^{2} = \| p_{st} \|^{2}$
        \item $R_{st} = p_{st}(s) - p_{st}(t)$
        \item $s = \arg\max_{v\in V} p_{st}(v)$ and $t = \arg\min_{v\in V} p_{st}(v)$
        \item $\sum_{v\in V} p_{st}(v) = 0$
    \end{enumerate}
\end{restatable}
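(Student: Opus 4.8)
The plan is to get items 1, 2, and 4 straight from the definition $p_{st}=L^{+}(1_s-1_t)$ together with two standard facts about the pseudoinverse of a symmetric matrix: $L^{+}$ is symmetric, and $LL^{+}=\proj$ is the orthogonal projection onto $\im L=(\ker L)^{\perp}$, with $\ker L^{+}=\ker L$. For item 1, $\|p_{st}\|^{2}=p_{st}^{T}p_{st}=(1_s-1_t)^{T}L^{+}L^{+}(1_s-1_t)=(1_s-1_t)^{T}(L^{+})^{2}(1_s-1_t)=B_{st}^{2}$. For item 2, $p_{st}(s)-p_{st}(t)=(1_s-1_t)^{T}p_{st}=(1_s-1_t)^{T}L^{+}(1_s-1_t)=R_{st}$. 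For item 4, since $L\mathbf{1}=0$ we have $\mathbf{1}\in\ker L=\ker L^{+}$, so $\sum_{v}p_{st}(v)=\mathbf{1}^{T}L^{+}(1_s-1_t)=(L^{+}\mathbf{1})^{T}(1_s-1_t)=0$.

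The substantive statement is item 3, and the first step is the identity $Lp_{st}=1_s-1_t$: assuming $G$ is connected, $\ker L=\operatorname{span}\{\mathbf{1}\}$, and since $\mathbf{1}^{T}(1_s-1_t)=0$ the vector $1_s-1_t$ lies in $\im L$, so $Lp_{st}=LL^{+}(1_s-1_t)=\proj(1_s-1_t)=1_s-1_t$. Expanding this coordinatewise, $(Lp_{st})(v)=\sum_{u:\{u,v\}\in E}w_{uv}\bigl(p_{st}(v)-p_{st}(u)\bigr)$, so for every $v\notin\{s,t\}$ the value $p_{st}(v)$ equals the $w$-weighted average of $p_{st}$ over the neighbors of $v$ (i.e.\ $p_{st}$ is discrete-harmonic off $\{s,t\}$), at $v=s$ it strictly exceeds that weighted average, and at $v=t$ it is strictly below it. Now I would invoke the discrete maximum principle: a function harmonic at every vertex of a connected graph outside $\{s,t\}$ attains its maximum and minimum on $\{s,t\}$, and is constant if either extremum is attained at an interior vertex. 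Since $p_{st}(s)-p_{st}(t)=R_{st}>0$ by item 2 (using connectivity), $p_{st}$ is non-constant, so $\max_v p_{st}(v)\in\{p_{st}(s),p_{st}(t)\}$; but $p_{st}(t)$ is strictly less than the value of some neighbor, which rules out $p_{st}(t)$, forcing $\max_v p_{st}(v)=p_{st}(s)$. The argument for $t$ and the minimum is symmetric.

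The hard part is making item 3 fully rigorous, and in particular pinning down what "$s=\arg\max$" should mean: in general other vertices can attain the same maximum value (e.g.\ a pendant neighbor of $s$, which is electrically identified with $s$), so the honest claim is that $s$ attains the maximum, and recovering $s$ as the \emph{unique} maximizer requires either an extra hypothesis (such as $2$-connectivity) or a sharper argument tracking the level set $\{v:p_{st}(v)=\max\}$ and showing its only possible "exit vertex" is $s$. The other point to watch is the connectivity assumption itself, needed both for $\ker L=\operatorname{span}\{\mathbf{1}\}$ and for $R_{st}>0$; in the disconnected case one restricts to the component containing $s$ and $t$. Everything else is routine bookkeeping with the definitions. (An alternative route to item 3 is the random-walk identity $p_{st}(v)=R_{st}\,\Pr_v[\tau_s<\tau_t]+\text{const}$, where strict monotonicity of the hitting probability gives the same conclusion, but the maximum-principle argument avoids introducing that machinery.)
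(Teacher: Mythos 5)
Your proposal is correct and takes essentially the same route as the paper: items 1, 2 and 4 follow directly from the definitions (the paper's item 4 uses $\im L^{+}\perp \mathbf{1}$, equivalent to your $L^{+}\mathbf{1}=0$), and item 3 is obtained from the coordinatewise identity $Lp_{st}=1_s-1_t$ together with a discrete maximum-principle argument, just as in the paper. Your caveat about uniqueness of the maximizer is well taken: the paper's proof asserts that every $u\neq s$ has a strictly larger neighbor, which fails for example at a pendant neighbor of $s$ (where $Lp_{st}(u)=0$ forces equality), so the honest reading of the claim is that $s$ attains the maximum and $t$ the minimum, exactly as you observe.
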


\begin{restatable}{lemma}{propertiesofelectricalflows}
\label{lem:properties_of_electrical_flows}
\emph{(Properties of $st$-electrical flows)} 
\vspace{-0.25cm}
    \begin{enumerate}
        \item $R_{st} = f_{st}^{T}W^{-1} f_{st} = \sum_{e\in E} f_{st}^{2}(e)/w_e$
        \item  $f_{st} = \arg\min\{ f^{T}W^{-1}f : \partial f = (1_{s}-1_{t})\}$
    \end{enumerate}
\end{restatable}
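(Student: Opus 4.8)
The plan is to route both statements through the $st$-potential $p_{st}=L^{+}(1_s-1_t)$, using two textbook facts about the Moore--Penrose pseudoinverse of a matrix $A$: that $A^{+}b$ is the minimum-$\ell_2$-norm solution of $Ax=b$ whenever the system is consistent, and that $A^{+}AA^{+}=A^{+}$ (equivalently, $AA^{+}$ is the orthogonal projection onto $\im A$). Before either part I would record the one consequence of connectivity used throughout: $\ker\partial^{T}=\operatorname{span}(\mathbf 1)$, hence $\im\partial=\mathbf 1^{\perp}$, and since $\mathbf 1^{T}(1_s-1_t)=0$ the vector $1_s-1_t$ lies in $\im\partial$; thus $\partial f=1_s-1_t$ is solvable, $\partial\partial^{+}(1_s-1_t)=1_s-1_t$, and, by the same observation applied to $L$, $LL^{+}(1_s-1_t)=1_s-1_t$. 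Every identity below silently relies on these pseudoinverses acting as genuine inverses on the relevant subspace rather than as mere projections.

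For part 1 I would express the electrical flow through the potential by Ohm's law, $f_{st}=W\partial^{T}p_{st}$ --- the current on an edge equals its conductance times the potential drop across it --- so that Kirchhoff's law $\partial f_{st}=\partial W\partial^{T}p_{st}=Lp_{st}=LL^{+}(1_s-1_t)=1_s-1_t$ holds. Then
\begin{align*}
f_{st}^{T}W^{-1}f_{st}
&= p_{st}^{T}\partial W\partial^{T}p_{st}
= p_{st}^{T}L\,p_{st}\\
&= (1_s-1_t)^{T}L^{+}LL^{+}(1_s-1_t)\\
&= (1_s-1_t)^{T}L^{+}(1_s-1_t)=R_{st},
\end{align*}
using $L=\partial W\partial^{T}$ and $L^{+}LL^{+}=L^{+}$; the equality with $\sum_{e\in E}f_{st}(e)^{2}/w_e$ is this same quadratic form written componentwise, $W$ being diagonal with entries $w_e$. (Equivalently one can quote $B_{st}^{2}=\|p_{st}\|^{2}$ from \Cref{lem:properties_of_potentials} and pass through $\wpartial=\partial W^{1/2}$, but the route above is the most direct.)

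For part 2 I would invoke the variational characterization of the minimum-norm solution. The feasible set $\{f:\partial f=1_s-1_t\}$ is, by the first paragraph, a nonempty affine translate of $\ker\partial$; the strictly convex function $\|\cdot\|^{2}$ therefore attains its minimum over it at a unique point, namely the unique feasible vector lying in $(\ker\partial)^{\perp}=\im\partial^{T}$. Since $f_{st}=\partial^{+}(1_s-1_t)$ is feasible and $\im\partial^{+}=\im\partial^{T}$, that minimizer is exactly $f_{st}$; this is the graph-theoretic form of Thomson's principle.

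Both statements are routine linear algebra once set up this way; the only point I would take care with in the writeup is the appeal to connectivity in the first paragraph, since it is what makes $1_s-1_t$ lie exactly in $\im\partial=\im L$ and hence makes each composition $\partial\partial^{+}$ and $LL^{+}$ return the vector itself rather than its orthogonal projection.
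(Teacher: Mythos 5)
Your part 2 is exactly the paper's argument: the paper also disposes of it by citing the minimum-norm property of the Moore--Penrose pseudoinverse, and your variational expansion (the feasible set is a nonempty translate of $\ker\partial$, and the unique feasible point in $(\ker\partial)^{\perp}=\im\partial^{T}$ is $\partial^{+}(1_s-1_t)$) is a correct spelling-out of that one line, with the connectivity remark handling feasibility. For part 1 the paper says only that it ``follows from the definitions,'' so your Ohm's-law computation is more explicit than anything in the paper, and the algebra in it ($f^{T}W^{-1}f=p_{st}^{T}Lp_{st}=(1_s-1_t)^{T}L^{+}LL^{+}(1_s-1_t)=R_{st}$) is sound for the flow you define there.

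The genuine issue is the bridge between your two parts. In part 1 you take $f_{st}=W\partial^{T}p_{st}$ ``by Ohm's law,'' but the definition in the paper's background section---and the one your part 2 actually uses---is $f_{st}=\partial^{+}(1_s-1_t)$, and these two vectors coincide only when all edge weights are equal. Concretely, on a triangle $\{s,t,u\}$ with $w_{st}=1$ and $w_{su}=w_{ut}=2$, the vector $\partial^{+}(1_s-1_t)$ puts $2/3$ on $st$ and $1/3$ on each of $su,ut$ regardless of the weights, so $\sum_{e}f(e)^{2}/w_{e}=5/9$, whereas $R_{st}=1/2$; the Ohm's-law flow is $(1/2,1/2,1/2)$, whose energy is indeed $1/2$. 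So under the literal definition item~1 is the statement that fails for general weights, while under your Ohm's-law definition it is item~2 that fails as written, since $W\partial^{T}p_{st}$ minimizes the energy $f^{T}W^{-1}f$ (Thomson's principle), not the unweighted norm $\|f\|^{2}$. As written, your two parts therefore prove statements about two different vectors. This ambiguity is inherited from the paper, whose one-line proof glosses over the same point and whose two items hold simultaneously only in the unweighted case; to close the gap you should either state that you work with $W=I$ (in which case both parts are correct and match the paper) or fix one definition of $f_{st}$---say $f_{st}=W\partial^{T}L^{+}(1_s-1_t)$, equivalently $W^{1/2}\wpartial^{+}(1_s-1_t)$---and restate item~2 with the weighted norm $f^{T}W^{-1}f$ accordingly.
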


\section{Formula for Biharmonic Distance of Edges}
\label{sec:new_formula}

In this section, we prove a new formula for the biharmonic distance on edges of a graph in terms of a different Laplacian associated with the graph called the \textit{\textbf{down Laplacian}}\footnotemark that is closely related to the graph Laplacian. Recall that the graph Laplacian is defined $L= \partial W \partial^{T}:\R^{n}\to\R^{n}$. The down Laplacian is defined $L^{down}=W^{-1/2}\partial^T\partial W^{-1/2}=\wpartial^T\wpartial:\R^{m}\to\R^{m}$. \Cref{thm:biharmonic_down_laplacian_diagonal} shows that the biharmonic distance of an edge is proportional to the diagonal entry of the down Laplacian. The proof of this theorem can be found in~\Cref{apx:new_formula}.

\footnotetext{The down Laplacian is a part of a family of linear operators on a topological space called the \textit{\textbf{combinatorial}} or \textbf{\textit{Hodge Laplacians}}~\citep{horak2013spectra} that are important to algebraic topology as they can be used to define the homology groups.}

\begin{restatable}{theorem}{biharmonicdownlaplaciandiagonal}
\label{thm:biharmonic_down_laplacian_diagonal}
    Let $G=(V, E, w)$ be a graph, $e\in E$, and $\downlap$ the down Laplacian of $G$. Then
    $$
        w_{e}\cdot B_{e}^{2} = (\downlap)^{+}_{ee} = \|1_e^T \wpartial^+\|,
    $$
    where $1_e^T\wpartial^+$ is the row of $\wpartial^+$ that corresponds to $e$.
\end{restatable}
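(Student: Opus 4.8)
The plan is to push everything through the weighted boundary map $\wpartial = \partial W^{1/2}$ and then invoke two standard pseudoinverse identities. First I would rewrite the biharmonic distance of the edge $e=\{s,t\}$ using the defining property of the boundary matrix, $1_s-1_t = \partial 1_e$, together with $\partial = \wpartial W^{-1/2}$ and the diagonality of $W$, which give $1_s - 1_t = w_e^{-1/2}\,\wpartial 1_e$. Substituting into $B_e^2 = (1_s-1_t)^T(L^+)^2(1_s-1_t)$ and recalling $L = \wpartial\wpartial^T$ yields
\[
  w_e\,B_e^2 \;=\; 1_e^T\,\wpartial^T (L^+)^2 \wpartial\, 1_e .
\]
So it remains to show that the matrix $\wpartial^T(L^+)^2\wpartial$ equals both $\wpartial^+(\wpartial^+)^T$ and $(\downlap)^+$, after which the theorem is just reading off the $(e,e)$ entry.

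The key algebraic tool is the identity $A^+ = A^T(AA^T)^+ = (A^TA)^+A^T$, valid for any real matrix $A$ with no rank hypotheses (I would either cite it or give the one-line check from the SVD $A = U\Sigma V^T$). Applied with $A = \wpartial$, so that $AA^T = L$ and $A^TA = \downlap$, this gives $\wpartial^+ = \wpartial^T L^+ = (\downlap)^+\wpartial^T$. Using symmetry of $L^+$,
\[
  \wpartial^T(L^+)^2\wpartial \;=\; (\wpartial^T L^+)(L^+\wpartial) \;=\; (\wpartial^T L^+)(\wpartial^T L^+)^T \;=\; \wpartial^+(\wpartial^+)^T,
\]
which already establishes $w_e B_e^2 = 1_e^T\wpartial^+(\wpartial^+)^T 1_e = \|1_e^T\wpartial^+\|^2$, the squared norm of the row of $\wpartial^+$ indexed by $e$. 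For the down-Laplacian form, I would substitute $\wpartial^+ = (\downlap)^+\wpartial^T$ into $\wpartial^+(\wpartial^+)^T$ and apply the Moore--Penrose relation $M^+MM^+ = M^+$ with $M = \downlap = \wpartial^T\wpartial$:
\[
  \wpartial^+(\wpartial^+)^T \;=\; (\downlap)^+\,\wpartial^T\wpartial\,(\downlap)^+ \;=\; (\downlap)^+\,\downlap\,(\downlap)^+ \;=\; (\downlap)^+ .
\]
Taking the $(e,e)$ entry of all three expressions finishes the proof.

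I do not expect a genuine obstacle; the only care needed is the pseudoinverse bookkeeping. In particular, because the identities $A^+ = A^T(AA^T)^+$ and $M^+MM^+ = M^+$ require no rank or connectivity assumptions, the argument runs verbatim whether or not $G$ is connected — the kernels of $L$ and $\downlap$ are absorbed correctly by the pseudoinverses. It is also worth noting in the writeup that $\|1_e^T\wpartial^+\|$ in the statement denotes the \emph{squared} Euclidean norm (consistent with $w_e B_e^2$ being the diagonal entry $(\downlap)^+_{ee}$), and that this quantity is independent of the arbitrary edge orientations used to define $\partial$, since flipping an orientation conjugates $\downlap$ by a diagonal $\pm 1$ matrix and hence leaves its diagonal untouched.
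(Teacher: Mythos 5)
Your proof is correct and follows essentially the same route as the paper's: both reduce $w_e B_e^2$ to the quadratic form $1_e^T\wpartial^T (L^+)^2\, \wpartial 1_e$ and collapse it to $(\downlap)^+_{ee}$ by Moore--Penrose bookkeeping for $\wpartial$ (you via $\wpartial^+=\wpartial^T L^+=(\downlap)^+\wpartial^T$ and $M^+MM^+=M^+$, the paper via $L^+=(\wpartial^+)^T\wpartial^+$ and cancellation of the projections $\wpartial^T(\wpartial^T)^+=\wpartial^+\wpartial$). Your side remark that $\|1_e^T\wpartial^+\|$ in the statement must be read as the squared Euclidean norm is also consistent with what the paper's own derivation actually establishes.
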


One reason this formula is interesting is because it parallels the fact that the diagonals of the pseudoinverse of the Laplacian $L_{vv}^{+}$ have useful properties and have been used as a centrality measure on vertices called \textit{\textbf{current-flow closeness centrality}} or \textit{\textbf{information centrality}}~\citep{brandes2005centrality, stephenson1989rethinking}. We explore biharmonic distance as a centrality measure on edges in~\Cref{sec:centrality}.

\section{Biharmonic Distance and Electrical Flows}
\label{sec:biharmonic_and_electrical_flows}

Recall (\Cref{lem:properties_of_electrical_flows}) that the $st$-electrical flow $f_{st}$ is related to the effective resistance between $s$ and $t$ as follows:
$$
    R_{st} = f_{st}^{T}W^{-1}f_{st} = \sum_{e\in E} \frac{f_{st}(e)^{2}}{w_{e}}.
$$
As a corollary, the total resistance can also be defined using the electrical flows between all pairs of vertices $s$ and $t$.
$$
    \Rtot = \sum_{s,t\in V}\sum_{e\in E} \frac{f_{st}(e)^{2}}{w_e}
$$
Reordering the sums gives the formula
$$
    \Rtot = \sum_{e\in E}\sum_{s,t\in V} \frac{f_{st}(e)^{2}}{w_e}. 
$$
Consider the quantity $\sum_{s,t\in V} f_{st}(e)^{2}/w_{e}$ for an edge $e$. At first glance, this seems like an interesting quantity and a good measure of how important an edge is to the global connectivity of a graph. If it is high, then many pairs of vertices use the edge $e$ to send current in their electrical flow, so it is important to the connection between these vertices. We call this quantity the \textit{\textbf{squared electrical-flow centrality}} because of its connection to electrical-flow centrality; see~\Cref{sec:centrality} for details. Amazingly, the squared electrical-flow centrality is proportional to the squared biharmonic distance. 

\begin{restatable}{theorem}{biharmonicelectricalflow}
\label{thm:biharmonic_distance_is_squared_electrical_flow}
    Let $G=(V,E, w)$ be a connected weighted graph with $n$ vertices. Let $e\in E$. Then
    $$
    n\cdot w_e\cdot B_{e}^{2} = \sum_{s,t\in V} \frac{f_{st}(e)^{2}}{w_e}.
    $$
\end{restatable}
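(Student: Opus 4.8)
The plan is to expand the summand $f_{st}(e)^2/w_e$ in terms of the $st$-potentials, observe that as the pair $s,t$ ranges over $V$ the quantity $f_{st}(e)/\sqrt{w_e}$ is a ``difference pattern'' $q_e(s)-q_e(t)$ governed by a single vector $q_e\in\R^n$, and then evaluate the resulting quadratic sum using the fact that $q_e$ is centered.

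First I would use the electrical interpretation of $f_{st}$ (current through a wire equals conductance times potential difference): for the edge $e=(a,b)$,
$$
f_{st}(e) \;=\; w_e\bigl(p_{st}(a)-p_{st}(b)\bigr) \;=\; w_e\,(1_a-1_b)^T L^{+}(1_s-1_t),
$$
where $p_{st}=L^{+}(1_s-1_t)$. Setting $q_e := \sqrt{w_e}\,L^{+}(1_a-1_b) = \sqrt{w_e}\,p_{ab}\in\R^n$, this reads $f_{st}(e)/\sqrt{w_e} = q_e^T(1_s-1_t) = q_e(s)-q_e(t)$. The key point is that $q_e$ is (a scalar multiple of) the potential vector of the edge $e$ \emph{itself}: by \Cref{lem:properties_of_potentials}(4) we get $\sum_{v\in V} q_e(v)=\sqrt{w_e}\sum_v p_{ab}(v)=0$, and by \Cref{lem:properties_of_potentials}(1), $\|q_e\|^2 = w_e\|p_{ab}\|^2 = w_e\,B_{ab}^2 = w_e\,B_e^2$.

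Next I would carry out the sum over pairs. Using the elementary identity, valid for any $q\in\R^n$,
$$
\sum_{\{s,t\}\subseteq V}\bigl(q(s)-q(t)\bigr)^2 \;=\; n\sum_{v} q(v)^2 - \Bigl(\sum_v q(v)\Bigr)^2,
$$
together with $\sum_v q_e(v)=0$, yields
$$
\sum_{s,t\in V}\frac{f_{st}(e)^2}{w_e} \;=\; \sum_{\{s,t\}}\bigl(q_e(s)-q_e(t)\bigr)^2 \;=\; n\,\|q_e\|^2 \;=\; n\,w_e\,B_e^2,
$$
which is the claim. (If the left-hand sum is read over ordered pairs rather than unordered, both sides simply gain a factor of $2$.)

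The argument is short, so there is no real obstacle; the only content is the first step — identifying $q_e$ and noting it is centered. An alternative route bypasses the potentials: write the normalized flow coordinate as $f_{st}(e)/\sqrt{w_e}=1_e^T\wpartial^{+}(1_s-1_t)$ and take $q_e=\wpartial^{+T}1_e$, the $e$-th row of $\wpartial^{+}$. Then $\sum_v q_e(v)=(\wpartial^{+}\mathbf 1)_e=0$ since $\mathbf 1\in\ker L$ forces $L^{+}\mathbf 1=0$, and $\|q_e\|^2=(\downlap)^{+}_{ee}=w_e B_e^2$ by \Cref{thm:biharmonic_down_laplacian_diagonal}; the same summation identity then finishes the proof.
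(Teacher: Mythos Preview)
Your proof is correct and is essentially the same argument as the paper's, just written in scalar rather than matrix form: where the paper sums $(1_s-1_t)(1_s-1_t)^T$ to obtain the complete-graph Laplacian $L_{K_n}$ and invokes its eigenvalue $n$ on $\mathbf 1^\perp$, you apply the equivalent scalar identity $\sum_{\{s,t\}}(q(s)-q(t))^2=n\|q\|^2-(\sum_v q(v))^2$ to the centered vector $q_e=\sqrt{w_e}\,L^+(1_a-1_b)$. Your alternative via $q_e=(\wpartial^+)^T 1_e$ and \Cref{thm:biharmonic_down_laplacian_diagonal} is likewise a valid repackaging of the same computation.
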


A proof of this theorem can be found in~\Cref{apx:biharmonic_and_electrical_flows}.

\subsection{Application: Edge Centrality}
\label{sec:centrality}

Edge centrality measures are ways of assigning values to the edges in a graph to determine which edges are most important to the global connectivity of a graph. \citet{yi2018biharmonic} proposed to use the weighted squared biharmonic distance $nw_e^2B_e^2$ as a measure of edge centrality, while~\citet{li2018kirchhoff} implicitly use the biharmonic distance in their Kirchhoff Edge Centrality. Both of these papers proposed the biharmonic distance as a centrality measure because of its connection to the total resistance. 
\par 
\Cref{thm:biharmonic_distance_is_squared_electrical_flow} provides further evidence why the squared biharmonic distance is a good measure of edge centrality. Intuitively, the biharmonic distance measures how important an edge is to the connection between all pairs of vertices in the graph. This is analogous to the \textit{\textbf{edge-betweenness centrality}}, which measures what proportion of shortest paths between pairs of vertices an edge appears in. Furthermore, the squared current-flow centrality is closely related to the \textit{\textbf{current-flow centrality}} $C_{e}$ introduced by \citet{brandes2005centrality} (which is equivalent to \textit{\textbf{random-walk betweenness}} introduced by \citet{newman2005measure}) defined
$$
C_{e} = \sum_{s,t\in V}|f_{st}(e)|.
$$
As the squared biharmonic distance of an edge in an unweighted graph is $nB_{e}^{2}=\sum_{s,t\in V} f^{2}_{st}(e)$, then $B_{e}^{2}$ and $C_{e}$ are comparable as they both measure the amount of current that flows through the edge $e$ in all-pairs electrical flows. This may explain some of the success of using the biharmonic distance as a centrality measure.
\par 
However, one advantage of the biharmonic distance over the current-flow centrality is computation time. \citet{brandes2005centrality} gave an algorithm for computing the current-flow centrality of all edges in $O(n^{3}+mn\log n)$ time, where $O(n^{3})$ is the time needed to compute the pseudoinverse of an $n\times n$ matrix. In contrast, because $B_{st} = \sqrt{L_{ss}^{2+}+L_{tt}^{2+}-2L_{st}^{2+}}$, computing the biharmonic distance for all edges takes $O(n^{3}+m)$ time, as computing a single biharmonic distance takes $O(1)$ time after computing the pseudoinverse of the $L^{2}$ in $O(n^{3})$. Thus, excluding the time to invert the matrices, biharmonic distance is faster to compute than current-flow centrality. See \Cref{apx:time_comparision} for an empirical comparison of these algorithms. 
\par 
Furthermore, \citet{yi2018biharmonic} observe that biharmonic distance can be efficiently approximated using random projections and fast solvers for linear systems in the Laplacian. This algorithm uses the fact that the biharmonic distance is a Euclidean distance, meaning it can be approximated by another Euclidean distance in a lower-dimensional space via random projection. As current-flow centrality is not a Euclidean distance, random projection techniques cannot be used. While computing the pseudoinverse of the Laplacian requires solving $n$ linear system in the Laplacian, approximating the biharmonic distances only requires solving $O(\log n)$ linear systems in the Laplacian. Combined with fast Laplacian solvers~\citep{spielman2004nearly}, this algorithm takes $O(m\poly\log n)$ time to approximate the biharmonic distance of all edges.

\subsection{Corollary: A Biharmonic Foster's Theorem}

A corollary to \Cref{thm:biharmonic_distance_is_squared_electrical_flow} is an analog of Foster's theorem for biharmonic distance. Foster's Theorem connects the effective resistances on the edges of the graph to the size of the graph, while the Biharmonic Foster's Theorem connects the biharmonic distance on edges to the total resistance.

\begin{theorem}[Foster's Theorem~\citep{foster1949average}]
\label{thm:foster}
    Let $G=(V,E,w)$ be a connected graph with $n$ vertices. Then 
    $$
        \sum_{e\in E} w_e R_{e} = n-1
    $$
\end{theorem}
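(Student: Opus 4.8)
The plan is to use the trace trick. Since $R_e = (1_s-1_t)^{T} L^{+} (1_s-1_t)$ is a scalar, it equals its own trace, and by cyclicity of the trace it equals $\Tr\big(L^{+}(1_s-1_t)(1_s-1_t)^{T}\big)$. Summing over all edges $e=(s,t)\in E$ and pulling the (linear) trace outside the sum gives
$$
\sum_{e\in E} w_e R_e = \Tr\Big(L^{+} \sum_{e=(s,t)\in E} w_e\, (1_s-1_t)(1_s-1_t)^{T}\Big).
$$

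Next I would recognize the inner sum using the incidence-matrix description of the Laplacian from \Cref{sec:background}: since $\partial 1_e = 1_s - 1_t$ for $e=(s,t)$, we have $\sum_{e\in E} w_e (1_s-1_t)(1_s-1_t)^{T} = \partial W \partial^{T} = L$. Hence $\sum_{e\in E} w_e R_e = \Tr(L^{+}L)$.

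Finally, $L^{+}L$ is the orthogonal projection onto $\im(L) = \ker(L)^{\perp}$, so $\Tr(L^{+}L) = \operatorname{rank}(L)$. Because $G$ is connected, $\ker(L)$ is exactly the span of the all-ones vector, so $\operatorname{rank}(L) = n-1$, which finishes the proof. (An alternative route would invoke the fact, mentioned in the introduction, that $w_e R_e$ is the probability edge $e$ lies in a random spanning tree, so the sum is the expected number of tree edges, namely $n-1$; but the trace argument is self-contained given only the background section.)

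The computation is essentially routine once the trace identity is set up; the only two points that require care are (i) the identification $\sum_{e} w_e (1_s-1_t)(1_s-1_t)^{T} = L$, which is immediate from $L=\partial W\partial^{T}$, and (ii) the use of connectivity of $G$ to conclude $\dim\ker(L) = 1$ and hence $\operatorname{rank}(L) = n-1$. Step (ii) is the only place the connectivity hypothesis enters and is where the value "$n-1$" comes from, so it is the natural candidate for the "main obstacle," though it is itself a standard fact about graph Laplacians.
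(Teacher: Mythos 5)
Your proof is correct: the identity $\sum_{e} w_e (1_s-1_t)(1_s-1_t)^{T} = \partial W \partial^{T} = L$, the cyclic-trace step, and the fact that $L^{+}L$ is the orthogonal projector onto $\im(L)$ (of trace $\operatorname{rank}(L)=n-1$ by connectivity) are all sound, and connectivity is indeed used exactly where you say it is. Note, however, that the paper does not prove \Cref{thm:foster} itself — it is cited to Foster (1949) — and the route by which the paper \emph{recovers} it is different from yours: it is the $k=1/2$ case of the $k$-harmonic Foster's Theorem (\Cref{thm:general_foster}), whose proof evaluates the double sum $\sum_{e\in E}\sum_{s,t\in V}\bigl(1_e^T\wpartial^T(L^+)^k(1_s-1_t)\bigr)^2$ in two ways (\Cref{thm:generalized_biharmonic_st_pairs,thm:generalized_biharmonic_edges}), using that $\sum_{s,t\in V}(1_s-1_t)(1_s-1_t)^T$ is the complete-graph Laplacian $L_{K_n}$, which acts as $n$ times the identity on $\im(L^+)$. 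Your trace argument is shorter and self-contained, and it isolates the role of connectivity cleanly as $\Tr(L^{+}L)=\operatorname{rank}(L)=n-1$; the paper's double-counting argument is heavier for this single case but buys the whole family of identities relating $(2k-1)$-harmonic distances over vertex pairs to $2k$-harmonic distances over edges, of which Foster's Theorem and the biharmonic version (\Cref{cor:biharmonic_foster}) are special cases. Interestingly, both arguments ultimately rest on the same two algebraic facts — $\sum_e w_e(\partial 1_e)(\partial 1_e)^T = L$ and the spectral behavior of the relevant operator on the complement of the all-ones vector — so your proof can be viewed as the ``edge-sum only'' half of the paper's exchange-of-sums argument, with the complete-graph Laplacian replaced by the projector $L^{+}L$.
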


\begin{corollary}[Biharmonic Foster's Theorem]
\label{cor:biharmonic_foster}
    Let $G=(V,E,w)$ be a connected graph with $n$ vertices. Then 
    $$
        n\sum_{e\in E} w_e B_{e}^{2} = \Rtot
    $$
\end{corollary}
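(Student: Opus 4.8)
The plan is to obtain the Biharmonic Foster's Theorem purely by summing the identity of \Cref{thm:biharmonic_distance_is_squared_electrical_flow} over all edges and interchanging the order of summation. Concretely, \Cref{thm:biharmonic_distance_is_squared_electrical_flow} gives, for each edge $e\in E$,
$$
    n\cdot w_e\cdot B_e^2 = \sum_{s,t\in V}\frac{f_{st}(e)^2}{w_e}.
$$
Summing both sides over $e\in E$ yields
$$
    n\sum_{e\in E} w_e B_e^2 = \sum_{e\in E}\sum_{s,t\in V}\frac{f_{st}(e)^2}{w_e}.
$$

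Next I would swap the two (finite) sums on the right-hand side, which is justified since everything is nonnegative and the index sets are finite, to get
$$
    n\sum_{e\in E} w_e B_e^2 = \sum_{s,t\in V}\sum_{e\in E}\frac{f_{st}(e)^2}{w_e}.
$$
Then I apply the first part of \Cref{lem:properties_of_electrical_flows}, namely $R_{st} = \sum_{e\in E} f_{st}(e)^2/w_e$, to the inner sum, obtaining
$$
    n\sum_{e\in E} w_e B_e^2 = \sum_{s,t\in V} R_{st} = \Rtot,
$$
where the last equality is simply the definition of the total resistance $\Rtot = \sum_{s,t\in V} R_{st}$. This completes the proof.

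There is essentially no obstacle here: the entire argument is the reordering of a double sum over the same data (electrical flows on edges) that underlies both the effective resistance and, via \Cref{thm:biharmonic_distance_is_squared_electrical_flow}, the biharmonic distance. The only things to check are that $G$ is connected (so that effective resistances, $st$-electrical flows, and hence \Cref{thm:biharmonic_distance_is_squared_electrical_flow} all apply) and that the sums are finite (so the interchange is unconditional); both hold by hypothesis. This mirrors the classical derivation of Foster's Theorem from $R_{st}=\sum_e f_{st}(e)^2/w_e$, with \Cref{thm:biharmonic_distance_is_squared_electrical_flow} playing the role of the per-edge identity.
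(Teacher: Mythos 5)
Your proof is correct and follows the same route as the paper: the corollary is obtained exactly by summing \Cref{thm:biharmonic_distance_is_squared_electrical_flow} over all edges, interchanging the finite double sum, and applying $R_{st}=\sum_{e\in E} f_{st}(e)^2/w_e$ together with the definition of $\Rtot$ (the paper's general $k$-harmonic Foster's Theorem, \Cref{thm:general_foster}, is proved by the identical double-counting argument). No gaps; nothing further is needed.
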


\section{Biharmonic Distance and Cuts}
\label{sec:high_biharm_and_sparse_cuts}

\begin{figure}
    \centering
    \includegraphics[height=1.5in]{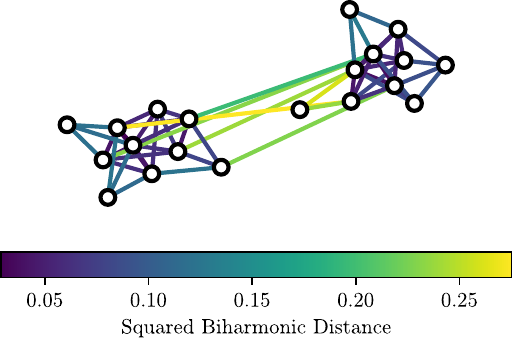}
    \caption{\Cref{thm:sparse_cut_implies_large_biharmonic,thm:large_biharmonic_edge_implies_sparse_cut} suggest that edges crossing sparse cuts will have high biharmonic distance, which we can see in this example of a block stochastic graph.}
    \label{fig:biharmonic_on_bsg}
\end{figure}

In this section, we prove several result connecting the biharmonic distance on edges to the existence of sparse cuts in unweighted graphs. Our results are in terms of the isoperimetric ratio of a cut. The isoperimetric ratio is a way of measuring the sparsity of a cut, because it is lower the fewer the edges in the cut and the closer the cut is to dividing the graph in half. The isoperimetric ratio is commonly used to quantify the quality of a clustering, from the classic Cheeger inequality~\cite{chung1997spectral} to more recent analyses of spectral clustering algorihtms~\cite{KannaVempalaVetta2004ClustGoodBadSpec, LeeGharanTrevisan2014MultiwaySpectral}.

Let $S\subset V$. Let $E(S, V\setminus S) = \{\{s,t\}\in E:s\in V,\,t\notin V\}$ be the edges leaving $S$. The \textit{\textbf{isoperimetric ratio}}\footnotemark of $S$ is 
$$
\Theta(S) = \frac{n|E(S, V\setminus S)|}{|S||V\setminus S|}.
$$
 
\footnotetext{It is common to alternatively define the isoperimetric ratio as $\Theta'(S) = \frac{|E(S, V\setminus S)|}{\max\{|S|,|V\setminus S|\}}$. However, these quantites are the same up to a constant, as $2\Theta(S) \leq \Theta'(S) \leq \Theta(S)$.}

\subsection{Biharmonic Distance on Cut Edges}

Our first result is about cut edges of a graph. For a connected graph $G$, a \textit{\textbf{cut edge}} is an edge $e$ such that $G\setminus\{e\}$ is disconnected.~\Cref{thm:biharmonic_of_cutedge_and_sparsity} shows that the squared biharmonic distance of a cut edge is the inverse isoperimetric ratio of its cut.

\begin{restatable}{theorem}{biharmoniccutedgesparsity}
\label{thm:biharmonic_of_cutedge_and_sparsity}
    Let $G=(V,E)$ be a connected, unweighted graph. Let $e$ be a cut edge of $G$. Let $S\cup T = V$ be the connected components of $G\setminus\{e\}$. Then 
    $$
        B_{e}^{2} = \frac{|S||T|}{n} = \Theta(S)^{-1}.
    $$
\end{restatable}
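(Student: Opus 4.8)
The plan is to handle the easy identity first and reduce everything to computing $B_e^2$. Since $e$ is a cut edge with $G\setminus\{e\}$ having connected components $S$ and $T=V\setminus S$, the only edge with one endpoint in $S$ and one outside $S$ is $e$ itself, so $|E(S,V\setminus S)|=1$ and $\Theta(S)=\frac{n\cdot 1}{|S||T|}=\frac{n}{|S||T|}$; hence $\Theta(S)^{-1}=\frac{|S||T|}{n}$ and it remains only to show $B_e^2=\frac{|S||T|}{n}$.

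For the main computation I would use $B_{st}^2=\|p_{st}\|^2$ from \Cref{lem:properties_of_potentials}, where $e=\{s,t\}$ with $s\in S$, $t\in T$ and $p_{st}=L^{+}(1_s-1_t)$. The structural point is that $e$ is the \emph{only} edge joining $S$ to $T$, which suggests the ansatz that $p_{st}$ is constant on each side: $p_{st}\equiv c_S$ on $S$ and $p_{st}\equiv c_T$ on $T$. Because $G$ is connected, $L$ restricted to $\mathbf 1^{\perp}$ is a bijection onto $\mathbf 1^{\perp}$, so to identify $p_{st}$ it suffices to exhibit a vector in $\mathbf 1^{\perp}$ satisfying $Lp=1_s-1_t$. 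For any vertex $v\neq s,t$ all of its neighbors lie on the same side as $v$, so $(Lp)(v)=0$ as required; at $v=s$ the only neighbor not on its side is $t$, so $(Lp)(s)=c_S-c_T$, which must equal $1$, and symmetrically at $t$. Combining $c_S-c_T=1$ with $\sum_v p_{st}(v)=|S|c_S+|T|c_T=0$ (\Cref{lem:properties_of_potentials}) gives $c_S=|T|/n$ and $c_T=-|S|/n$. Then $\|p_{st}\|^2=|S|(|T|/n)^2+|T|(|S|/n)^2=\frac{|S||T|(|S|+|T|)}{n^2}=\frac{|S||T|}{n}$, which is the claim.

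I do not expect a serious obstacle here: the only real content is guessing the piecewise-constant form of $p_{st}$ (after which the verification is a one-line linear-algebra check) together with the observation that deleting the unique cut edge makes the two sides non-interacting. As an independent cross-check and alternative route, one can instead invoke \Cref{thm:biharmonic_distance_is_squared_electrical_flow}: for any pair of vertices $u,v$, balancing the $uv$-electrical flow $f_{uv}$ across the cut $(S,T)$ (its net flux across the cut equals the net divergence in $S$) forces $f_{uv}(e)=\pm 1$ when $u$ and $v$ are on opposite sides and $f_{uv}(e)=0$ otherwise; since the number of pairs $\{u,v\}$ separated by $e$ is $|S||T|$, we get $n B_e^2=\sum_{u,v} f_{uv}(e)^2=|S||T|$. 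I would present the potential-function argument as the primary proof since it is fully self-contained given \Cref{lem:properties_of_potentials}.
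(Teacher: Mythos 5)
Your proposal is correct, and your primary argument takes a genuinely different route from the paper. The paper proves this theorem as a consequence of \Cref{thm:biharmonic_distance_is_squared_electrical_flow} together with the flow-conservation lemma (\Cref{lem:flow_on_edges_crossing_cut}): since $e$ is the unique edge crossing the cut, $|f_{st}(e)|=1$ for pairs separated by the cut and $f_{st}(e)=0$ otherwise, so $n B_e^2=\sum_{s,t}f_{st}(e)^2=|S||T|$ --- which is exactly your ``cross-check'' route. Your main proof instead identifies the potential $p_{st}=L^{+}(1_s-1_t)$ explicitly: the piecewise-constant ansatz $c_S=|T|/n$ on $S$, $c_T=-|S|/n$ on $T$ is verified to satisfy $Lp=1_s-1_t$ and $p\perp\mathbf{1}$, and uniqueness on $\mathbf{1}^{\perp}$ (for connected $G$) plus $B_e^2=\|p_{st}\|^2$ from \Cref{lem:properties_of_potentials} gives $|S||T|/n$ directly. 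This is more elementary and self-contained --- it bypasses \Cref{thm:biharmonic_distance_is_squared_electrical_flow} entirely and only needs basic pseudoinverse facts --- and as a bonus the explicit potential also recovers $R_e=p_{st}(s)-p_{st}(t)=1$ (\Cref{thm:restance_of_cutedge}). What the paper's flow-based route buys is uniformity: the same machinery extends beyond single cut edges to the general sparse-cut bounds (\Cref{thm:sparse_cut_implies_large_biharmonic}), whereas the piecewise-constant ansatz is special to the cut-edge case. One small bookkeeping point if you write up the flow version: the pair sum must be read over unordered pairs (as the paper implicitly does), otherwise a factor of $2$ appears on both sides.
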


Proofs for all theorems in this section are in~\Cref{apx:high_biharm_and_sparse_cuts}.

\paragraph{Comparision with Effective Resistance} \Cref{thm:biharmonic_of_cutedge_and_sparsity} gives a good way to compare effective resistance to biharmonic distance. The effective resistance of a cut edge is always 1, so while the effective resistance reveals when an edge is a cut edge, it does not reveal anything about the cut and is the same whether the cut separates a single vertex or half the graph. In contrast, the biharmonic distance of a cut edge equals the inverse isoperimetric ratio of the cut, so in some sense, biharmonic distance reveals which cut edges are more important for connecting the graph. See~\Cref{fig:trees}.

\begin{theorem}[Folklore]
\label{thm:restance_of_cutedge}
    Let $G$ be an unweighted graph. Let $e$ be a cut edge. Then $R_e=1$
\end{theorem}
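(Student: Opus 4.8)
The plan is to use the minimum-energy (flow) characterization of effective resistance from \Cref{lem:properties_of_electrical_flows}. Write $e=(s,t)$ and let $S,T$ be the two connected components of $G\setminus\{e\}$, with $s\in S$ and $t\in T$. Since $G$ is unweighted, $W=I$, so parts~1 and~2 of \Cref{lem:properties_of_electrical_flows} give $R_e = R_{st} = \min\{\|f\|^{2} : f:E\to\R,\ \partial f = 1_s-1_t\}$. It therefore suffices to show that this minimum equals $1$, which I would do by proving a matching upper and lower bound.

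For the upper bound, I would exhibit the flow $f = 1_e$ that puts one unit of current on $e$ and nothing elsewhere. By the definition of the boundary matrix, $\partial 1_e = 1_s - 1_t$, so $1_e$ is feasible, and $\|1_e\|^{2} = 1$; hence $R_e \le 1$. For the lower bound, I would show that \emph{every} feasible flow $f$ satisfies $f(e) = 1$, by summing the conservation constraint $\partial f = 1_s - 1_t$ over the vertices of $S$. On the left, $\sum_{v\in S}(\partial f)(v)$ is a signed sum of $f$ over the edges incident to $S$; the terms for edges with both endpoints in $S$ cancel, and since $e$ is the only edge in $E(S,V\setminus S)$, the sum collapses to $f(e)$ (with a $+$ sign because $\partial 1_e = 1_s - 1_t$ and $s\in S$). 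On the right, $\sum_{v\in S}(1_s-1_t)(v) = 1$ because $s\in S$ and $t\notin S$. Thus $f(e) = 1$ for any feasible $f$, so $\|f\|^{2} \ge f(e)^{2} = 1$, giving $R_e \ge 1$. Combining the two bounds yields $R_e = 1$.

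The argument is entirely routine; the only point requiring mild care — the main (minor) obstacle — is tracking the orientation signs in the telescoping sum over $S$, so that interior edges cancel and the boundary contribution is exactly $+f(e)$ rather than $-f(e)$; this is forced by $\partial 1_e = 1_s - 1_t$. As an alternative one could instead argue $R_e \le 1$ by Rayleigh monotonicity (the subgraph consisting of the single edge $\{s,t\}$ already has resistance $1$) and $R_e \ge 1$ by contracting $S$ and $T$ each to a single vertex, which can only decrease effective resistance and leaves just the edge $e$; but the flow argument above is cleaner and uses exactly the lemmas already established.
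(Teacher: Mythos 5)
Your proof is correct. The paper states \Cref{thm:restance_of_cutedge} only as folklore and gives no proof of it, so there is nothing to compare against literally; but your argument is exactly what the paper's own machinery supports: the minimum-norm characterization in \Cref{lem:properties_of_electrical_flows} gives the upper bound via the feasible flow $1_e$, and your cut-summation step showing every feasible flow has $f(e)=1$ is precisely the content of \Cref{lem:flow_on_edges_crossing_cut}, which the paper uses for the analogous biharmonic statement (\Cref{thm:biharmonic_of_cutedge_and_sparsity}). The only point of care is the orientation bookkeeping (if the fixed orientation of $e$ is $(t,s)$ you take $f=-1_e$ and get $f(e)=-1$), which you already flag and which does not affect $\|f\|^{2}$.
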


\subsection{Biharmonic Distance on Edges and Sparse Cuts}

\Cref{thm:biharmonic_of_cutedge_and_sparsity} shows that the squared biharmonic distance of a cut edge equals the sparsity of the cut. In this section, we generalize this theorem to all cuts, not just cuts containing a single edge. First, \Cref{thm:sparse_cut_implies_large_biharmonic} shows that the existence of a sparse cut implies the existence of an edge with high biharmonic distance. Conversely,~\Cref{thm:large_biharmonic_edge_implies_sparse_cut} shows that the existence of an edge with high biharmonic distance implies the existence of a sparse cut. See~\Cref{fig:biharmonic_on_bsg}.

\begin{restatable}{theorem}{sparsecutimplieslargebiharmonic}
\label{thm:sparse_cut_implies_large_biharmonic}
    Let $G=(V,E)$ be a connected, unweighted graph. Let $S\subset V$. Then
    $$
        \sum_{e\in E(S, V\setminus S)} B_{e}^{2} \geq \frac{|S||V\setminus S|}{|E(S, V\setminus S)|n} = \Theta(S)^{-1}.
    $$
    In particular, there is an edge $e\in E(S, V\setminus S)$ such that $B_e^{2}\geq\frac{\Theta(S)^{-1}}{|E(S, V\setminus S)|}$.
\end{restatable}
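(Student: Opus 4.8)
The plan is to lower-bound $\sum_{e\in E(S,V\setminus S)} B_e^2$ by exhibiting a single vector that is dual (in the quadratic-form sense) to the cut, and then invoke the electrical-flow characterization. Recall from \Cref{thm:biharmonic_distance_is_squared_electrical_flow} that in an unweighted graph $nB_e^2 = \sum_{s,t\in V} f_{st}(e)^2$, where $f_{st}=\partial^+(1_s-1_t)$ is the $st$-electrical flow. Summing over $e\in E(S,V\setminus S)$ and exchanging the order of summation,
$$
n\sum_{e\in E(S,V\setminus S)} B_e^2 = \sum_{s,t\in V}\ \sum_{e\in E(S,V\setminus S)} f_{st}(e)^2 .
$$
So it suffices to show that $\sum_{s,t\in V}\sum_{e\in E(S,V\setminus S)} f_{st}(e)^2 \geq \frac{|S||V\setminus S|}{|E(S,V\setminus S)|}$.

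The key observation is a conservation/flow-balance fact: for any $s\in S$ and $t\in V\setminus S$, the net flow of $f_{st}$ across the cut $E(S,V\setminus S)$ equals exactly $1$ (one unit of current must cross from $S$ to $V\setminus S$), whereas for $s,t$ on the same side the net flow across the cut is $0$. Formally, orienting the cut edges consistently from $S$ to $V\setminus S$, $\sum_{e\in E(S,V\setminus S)} \pm f_{st}(e) = \mathbf{1}[s\in S,\, t\notin S] - \mathbf{1}[s\notin S,\, t\in S]$; this is just $\mathbf 1_S^T \partial f_{st} = \mathbf 1_S^T(1_s-1_t)$. Now apply Cauchy–Schwarz on the cut edges: for a pair with $s\in S$, $t\notin S$,
$$
1 = \Bigl(\sum_{e\in E(S,V\setminus S)} \pm f_{st}(e)\Bigr)^2 \le |E(S,V\setminus S)|\cdot \sum_{e\in E(S,V\setminus S)} f_{st}(e)^2 .
$$
Hence $\sum_{e\in E(S,V\setminus S)} f_{st}(e)^2 \ge 1/|E(S,V\setminus S)|$ for each of the ordered pairs $(s,t)$ with $s\in S$, $t\notin S$ — there are $|S||V\setminus S|$ such pairs (and as many with the roles reversed, but using just these suffices, or one can count all $2|S||V\setminus S|$ ordered pairs and note the sum over unordered pairs in \Cref{thm:biharmonic_distance_is_squared_electrical_flow} is over $s,t\in V$; I'd match the convention used there). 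Summing over these pairs gives $\sum_{s,t\in V}\sum_{e\in E(S,V\setminus S)} f_{st}(e)^2 \ge |S||V\setminus S|/|E(S,V\setminus S)|$, and dividing by $n$ yields the claimed bound $\sum_{e\in E(S,V\setminus S)} B_e^2 \ge \frac{|S||V\setminus S|}{|E(S,V\setminus S)|\,n} = \Theta(S)^{-1}$.

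The final "in particular" clause is immediate: if every edge $e$ in the cut had $B_e^2 < \Theta(S)^{-1}/|E(S,V\setminus S)|$, then summing over the $|E(S,V\setminus S)|$ edges would give $\sum_{e\in E(S,V\setminus S)} B_e^2 < \Theta(S)^{-1}$, contradicting the displayed inequality; so some cut edge achieves $B_e^2 \ge \Theta(S)^{-1}/|E(S,V\setminus S)|$. The main point requiring care — though it is not really an obstacle — is bookkeeping the orientation signs of the cut edges and the ordered-versus-unordered-pair convention so that the flow-balance identity $\mathbf 1_S^T\partial f_{st} = \mathbf 1_S^T(1_s-1_t)$ is applied correctly; everything else is Cauchy–Schwarz and a sum exchange. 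One could alternatively phrase the whole argument directly via \Cref{thm:biharmonic_down_laplacian_diagonal}, writing $\sum_{e\in \text{cut}} B_e^2 = \|P \wpartial^+\|_F^2$ for the coordinate projection $P$ onto cut edges and lower-bounding by testing against $\wpartial^+{}^T$ applied to $\mathbf 1_S$, but the electrical-flow route above is cleaner and reuses the theorem the section is built around.
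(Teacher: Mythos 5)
Your proposal is correct and follows essentially the same route as the paper: the paper also combines \Cref{thm:biharmonic_distance_is_squared_electrical_flow} with the flow-conservation fact across the cut (stated there as \Cref{lem:flow_on_edges_crossing_cut}, which you re-derive inline via $1_S^T\partial f_{st}=1_S^T(1_s-1_t)$) and Cauchy--Schwarz, then sums over the $|S||V\setminus S|$ pairs with $s\in S$, $t\in V\setminus S$. The only cosmetic difference is that you exchange the order of summation up front and fold the conservation lemma into the argument rather than citing it separately.
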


\begin{restatable}{theorem}{largebiharmonicedgeimpliessparsecut}
\label{thm:large_biharmonic_edge_implies_sparse_cut}
    Let $G=(V,E)$ be an unweighted graph. Let $\{s,t\}\in E$. Then there is a subset $S\subset V$ such that $s\in S$, $t\in V\setminus S$, and 
    $$
        B_{st}^{2}\in O(d_{\max}\Theta(S)^{-2}),
    $$
    where $d_{\max}$ is the maximum degree of any vertex in $G$.
\end{restatable}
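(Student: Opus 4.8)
The plan is a sweep-cut (Cheeger-type) argument applied to the $st$-potential $p_{st}=L^{+}(1_s-1_t)$. By \Cref{lem:properties_of_potentials}, $B_{st}^{2}=\|p_{st}\|^{2}$, $\sum_{v}p_{st}(v)=0$, and $s$ (resp.\ $t$) maximizes (resp.\ minimizes) $p_{st}$; moreover $\sum_{\{u,v\}\in E}(p_{st}(u)-p_{st}(v))^{2}=p_{st}^{T}Lp_{st}=R_{st}\le 1$, the bound $R_{st}\le 1$ holding because $R_{st}=\min\{\|f\|^{2}:\partial f=1_s-1_t\}\le\|1_e\|^{2}=1$ by \Cref{lem:properties_of_electrical_flows} (route one unit of current straight along $e=\{s,t\}$). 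We may assume $G$ is connected; otherwise the argument takes place inside the component of $\{s,t\}$, whose isoperimetric ratios dominate the global $\Theta$, which only makes the bound easier. The first move is to recenter: let $c$ be a median of $\{p_{st}(v):v\in V\}$, so that $|\{v:p_{st}(v)>c\}|\le n/2$ and $|\{v:p_{st}(v)<c\}|\le n/2$, and set $g(v)=p_{st}(v)-c$. Then $\|g\|^{2}=\|p_{st}\|^{2}+nc^{2}\ge B_{st}^{2}$ (using $\sum_v p_{st}(v)=0$), while a constant shift leaves the Dirichlet energy fixed, so $\sum_{\{u,v\}\in E}(g(u)-g(v))^{2}=R_{st}$; writing $g=g^{+}-g^{-}$ with $g^{+},g^{-}\ge 0$ of disjoint support gives $\|g\|^{2}=\|g^{+}\|^{2}+\|g^{-}\|^{2}$, and each of $g^{+},g^{-}$ is supported on at most $n/2$ vertices.

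The core is a sweep estimate, applied to $h\in\{g^{+},g^{-}\}$. Let $h\ge 0$ vanish on at least $n/2$ vertices with $\sum_{\{u,v\}\in E}(h(u)-h(v))^{2}\le R_{st}$, and for $\tau\ge 0$ put $S_\tau=\{v:h(v)>\tau\}$, so $|V\setminus S_\tau|\ge n/2$ and $S_\tau$ is proper and nonempty for $\tau\in[0,\max_v h(v))$. Using the coarea identities
\begin{align*}
\sum_{\{u,v\}\in E}\bigl|h(u)^{2}-h(v)^{2}\bigr|&=\int_{0}^{\infty}\bigl|E(S_\tau,V\setminus S_\tau)\bigr|\,2\tau\,d\tau,\\
\|h\|^{2}&=\int_{0}^{\infty}|S_\tau|\,2\tau\,d\tau,
\end{align*}
Cauchy--Schwarz together with $\sum_{\{u,v\}\in E}(h(u)+h(v))^{2}\le 2\sum_v\deg(v)h(v)^{2}\le 2d_{\max}\|h\|^{2}$ bounds the first integral by $\sqrt{2d_{\max}R_{st}}\,\|h\|$, while writing $\bigl|E(S_\tau,V\setminus S_\tau)\bigr|=\Theta(S_\tau)|S_\tau||V\setminus S_\tau|/n$ and using $|V\setminus S_\tau|\ge n/2$ gives
\[
\int_{0}^{\infty}\bigl|E(S_\tau,V\setminus S_\tau)\bigr|\,2\tau\,d\tau\ \ge\ \Bigl(\min_{\tau}\Theta(S_\tau)\Bigr)\tfrac{1}{2}\|h\|^{2}.
\]
Combining, $\min_\tau\Theta(S_\tau)\cdot\|h\|\le 2\sqrt{2d_{\max}R_{st}}$, so some level set $S=\{v:h(v)>\tau^{\ast}\}$ satisfies $\|h\|^{2}\le 8d_{\max}R_{st}/\Theta(S)^{2}\le 8d_{\max}/\Theta(S)^{2}$.

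To conclude, apply this to $g^{+}$, obtaining $S_1=\{v:g(v)>\tau_1\}$ with $s\in S_1$ (as $g(s)=\max g>\tau_1$) and $t\notin S_1$ (as $g^{+}(t)=0$), and to $g^{-}$, obtaining a level set that avoids $s$ (as $g^{-}(s)=0$) and contains $t$, whose complement $S_2$ therefore contains $s$, omits $t$, and has the same isoperimetric ratio. (If $g^{+}$ or $g^{-}$ vanishes identically, omit its term; $p_{st}\ne 0$ forces at least one to be nonzero.) Then
\begin{align*}
B_{st}^{2}\le\|g\|^{2}=\|g^{+}\|^{2}+\|g^{-}\|^{2}&\le 8d_{\max}\bigl(\Theta(S_1)^{-2}+\Theta(S_2)^{-2}\bigr)\\
&\le 16\,d_{\max}\bigl(\min\{\Theta(S_1),\Theta(S_2)\}\bigr)^{-2},
\end{align*}
so taking $S$ to be whichever of $S_1,S_2$ has the smaller isoperimetric ratio proves $B_{st}^{2}\in O(d_{\max}\Theta(S)^{-2})$ with $s\in S$ and $t\in V\setminus S$.

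The hard part, and the step I expect to require the most care, is the recentering: with the naive choice $h=p_{st}^{+}$ one has no control on $|S_\tau|$, so $|V\setminus S_\tau|\ge 1$ must replace $|V\setminus S_\tau|\ge n/2$ and the conclusion degrades to $B_{st}^{2}\in O(n^{2}d_{\max}\Theta(S)^{-2})$; shifting by the median is precisely what removes the extraneous $n^{2}$. The remaining pieces --- verifying the two coarea identities, the Cauchy--Schwarz step, and the degenerate cases (a trivial $g^{+}$ or $g^{-}$, or $s,t$ not being \emph{strict} extrema of $p_{st}$ so that a one-sided sweep is empty) --- are routine.
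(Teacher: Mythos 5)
Your proof is correct and follows essentially the same route as the paper: a Cheeger-type sweep cut applied to the $st$-potential $p_{st}$, using that the Rayleigh quotient has numerator $p_{st}^{T}Lp_{st}=R_{st}\le 1$ on an edge and denominator $\|p_{st}\|^{2}=B_{st}^{2}$, with the extremality of $s$ and $t$ on $p_{st}$ guaranteeing the level-set cut separates them. The only difference is that you prove the sweep-cut inequality inline (median recentering, coarea, Cauchy--Schwarz, explicit constant $16$), whereas the paper invokes Cheeger's inequality (\Cref{lem:cheeger}) as a black box through the intermediate bound $B_{st}^{2}/R_{st}\in O(d_{\max}\Theta(S)^{-2})$ of \Cref{thm:large_biharmonic_implies_sparse_cut}.
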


\subsection{Application: Clustering Algorithms}
\label{sec:application_clustering}

The results in the previous section suggest two natural algorithms for graph clustering.

\paragraph{Biharmonic $k$-means}

\Cref{thm:sparse_cut_implies_large_biharmonic} intuitively suggests that vertices on opposite sides of a sparse cut likely have large biharmonic distance. This motivates a clustering algorithm that aims to separate vertices with large biharmonic distance. The $k$-means clustering algorithm minimizes the distance between points in the same cluster; however, $k$-means only can be applied to points in Euclidean space, not graphs. Fortunately, the biharmonic distance \textit{is} a Euclidean metric. Namely, if we consider the points $p_v:=L^{+}1_v\in\R^{n}$, then $B_{st} = \|p_u-p_v\|$. Therefore, we can cluster a graph by performing $k$-means on the points $\{p_v:v\in V\}$. This idea has previously been proposed for effective resistance by~\citet{yen2005clustering} and is analogous to the spectral clustering algorithm~\citep{SheMalik2000SpecClust, ng2001spectral} that similarly embeds vertices into Euclidean space then clusters them with $k$-means. The proposed clustering algorithm is summarized in Algorithm \ref{alg:biharmonic_kmeans_algorithm}.
\par 
We explore an alternative interpretation of this algorithm and compare it to spectral clustering in~\Cref{apx:interpretation_spectral_clustering}.

\begin{algorithm}[h]
\caption{Biharmonic $k$-Means (Graph $G$, int $k$)}
\label{alg:biharmonic_kmeans_algorithm}
\begin{algorithmic}[1]
    \STATE \# Cluster the graph $G$ into $k$ clusters
    \STATE Compute $L^{+}$ \# The columns of $L^{+}$ are the points $p_v$ 
    \STATE \textbf{return} $k$-means($L^{+}$)
\end{algorithmic}
\end{algorithm}

\paragraph{Biharmonic Girvan-Newman Algorithm}

\Cref{thm:sparse_cut_implies_large_biharmonic} implies that the existence of a sparse cut implies an edge with large biharmonic distance. We now propose a graph clustering algorithm inspired by this intuition.
\par 
Our algorithm is an instance of the generic \textit{\textbf{Girvan-Newman Algorithm}}~\cite{girvan2002community} for graph clustering, given below. The Girvan-Newman algorithm repeatedly removes the edge in a graph that maximizes some measure on the edges, typically a centrality measure. Our variant (\Cref{alg:biharmonic_girvan_newman_algorithm}) will use the (squared) biharmonic distance as the measure on the edges.

\begin{algorithm}[h]
\caption{Biharmonic Girvan-Newman (Graph $G$, int $k$)}
\label{alg:biharmonic_girvan_newman_algorithm}
\begin{algorithmic}
    \STATE \# Cluster the graph $G$ into $k$ clusters
    \WHILE{$G$ has fewer than $k$ connected components}
        \STATE $e_{\max}\gets \underset{e\in E}{\arg\max} B_{e}^{2}$ 
        \STATE $G\gets G\setminus\{e_{\max}\}$
    \ENDWHILE
    \STATE \textbf{return} connected components of $G$
\end{algorithmic}
\end{algorithm}

\section{\textit{k}-Harmonic Distance}
\label{sec:karmonic}

So far, we have been concerned with the biharmonic distance as a variant of effective resistance. The biharmonic distance alters the definition of effective resistance by using the squared pseudoinverse $L^{2+}$ instead of the pseudoinverse $L^{+}$ and adding a square root. However, we can generalize the biharmonic resistance even further by considering arbitrary powers of the pseudoinverse $L^{k+}$. We define the \textit{\textbf{k-harmonic distance}} between vertices $s$ and $t$ as
$$
    \kharmonic{st}{k} = \sqrt{(1_s-1_t)^{T} L^{k+} (1_s-1_t)}.
$$
The effective resistance between $s$ and $t$ is $R_{st} = (\kharmonic{st}{1})^{2}$, and the biharmonic distance between $s$ and $t$ is $B_{st} = \kharmonic{st}{2}$.
\par 
For the experimental part of this paper, we consider integer values of $k\geq 1$,  but generally, $k$ can take any real value. In particular, $(L^+)^{k}=L^{-k}$ for negative value of $k$, and
 $(L^+)^0$ is the orthogonal projection onto $\im(L^+)=\im(L)$.

\subsection{Low-Rank Approximation}
\label{sec:low_rank}

Another way to generalize effective resistance is to consider an approximation that only uses a subset of the eigenvectors of $L^{+}$. Since the Laplacian (and therefore all of it powers) are symmetric, we can spectrally decompose the pseudoinverse of $L^{k+} = \sum_{i=2}^{n}\frac{1}{\lambda_{i}^{k}}x_ix_{i}^{T}$. Using this decomposition, we can approximate $L^{k+}$ using only eigenvectors correspond to the the smallest $r$ eigenvalues, as these are the eigenvectors with the largest coefficients $1/\lambda_{i}^{k}$ in the spectral decomposition. We can then approximate the $k$-harmonic distance with this approximation of $L^{k+}$. The \textit{\textbf{rank-r k-harmonic distance}} is defined
$$
    H^{k,r}_{s,t} = \sqrt{\sum_{i=2}^{r+1} \frac{1}{\lambda_{i}^{k}} (1_s-1_t)^{T}x_{i}x_{i}^{T}(1_{s}-1_{t})}
$$
This sort of low rank approximation is common practice in applied spectral graph theory, e.g.~\citep{lipman2010biharmonic}.
\par 
The rank-$r$ $k$-harmonic distance allows us to place $k$-harmonic $k$-means clustering on a continuous spectrum between existing graph clustering algorithms. At one extreme of $k\to 0$, the coefficients of the eigenvalues $\lambda_{i}^{-k}\to 1$; therefore, we are clustering based on an unweighted embedding of the first $r$ eigenvectors. This is exactly the \textit{spectral clustering} algorithm \citep{SheMalik2000SpecClust,ng2001spectral}. At the other extreme of $k\to\infty$, the contribution of the smallest eigenvalue $\lambda_2$ dominates, so this is exactly the partitioning algorithm used in the proof of Cheeger's Inequality~\citep{cheeger1969, chung1997spectral}.

\subsection{A Foster's Theorem for \textit{k}-Harmonic Distance}

Unfortunately, the $k$-harmonic distance is currently less interpretable than the effective resistance and biharmonic distance (although it is still useful in practice; see~\Cref{sec:experiments}). However, some of our theorems for the biharmonic distance generalize to $k$-harmonics, even if they lose some of their interpretatiblity in the generalization.
\par 
For example, we prove a generalization of \Cref{thm:biharmonic_distance_is_squared_electrical_flow} (\Cref{thm:generalized_biharmonic_st_pairs}), noting that $f_{st}(e) = (\wpartial 1_e)L^+(1_s - 1_t)$, which equals $1_e^T\wpartial(\wpartial^+)^T\wpartial^+(1_s - 1_t) = 1_e^T\wpartial^+(1_s - 1_t)$. Likewise, \Cref{thm:generalized_biharmonic_edges} is a generalization of the fact that $R_{st}$ is the 2-norm of the $st$-electrical flow.

\begin{restatable}{theorem}{generalizedbiharmonicstpairs}
\label{thm:generalized_biharmonic_st_pairs}
    Let $G=(V,E, w)$ be a connected weighted graph with $n$ vertices, let $e\in E$, and let $k\in \R$. Then
    $$
    \sum_{s,t\in V}{\left((\wpartial 1_{e})^T (L^+)^k (1_s - 1_t)\right)^2} = n\cdot w_{e}\cdot (\kharmonic{e}{2k})^2.
    $$
\end{restatable}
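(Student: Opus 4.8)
The plan is to collapse the double sum into a sum of squared coordinate-differences of a single vector and then invoke an elementary identity, exactly as in the $k=1$ case behind \Cref{thm:biharmonic_distance_is_squared_electrical_flow}. Fix $e=\{a,b\}\in E$, so that $\boundary 1_e = 1_a-1_b$ and $\wpartial 1_e = \sqrt{w_e}\,(1_a-1_b)$, and set $c := (L^+)^k\,\wpartial 1_e\in\R^n$. Here $(L^+)^k$ is the usual functional-calculus power: $(L^+)^k = \sum_{i\ge 2}\lambda_i^{-k}x_ix_i^T$ for $L=\sum_i\lambda_ix_ix_i^T$ when $k\neq 0$ (covering integer, negative, and fractional $k$, using that $\lambda_i>0$ for $i\ge 2$ since $G$ is connected), and $(L^+)^0$ is the orthogonal projection onto $\im L$. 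In all cases $(L^+)^k$ is symmetric, so each summand becomes
$$
\left((\wpartial 1_e)^T(L^+)^k(1_s-1_t)\right)^2 = \left(c^T(1_s-1_t)\right)^2 = (c_s-c_t)^2 .
$$

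Next I would record two facts about $c$. First, $c\perp\mathbf 1$: for connected $G$ we have $L^+\mathbf 1=0$, hence $(L^+)^k\mathbf 1=0$ for every real $k$ (for $k=0$ use instead that $\im(L^+)^0=\im L\subseteq\mathbf 1^\perp$), so $\mathbf 1^Tc=\bigl((L^+)^k\mathbf 1\bigr)^T\wpartial 1_e=0$. Second, using $(L^+)^k(L^+)^k=(L^+)^{2k}$ and $\wpartial 1_e=\sqrt{w_e}(1_a-1_b)$,
$$
\|c\|^2 = (\wpartial 1_e)^T(L^+)^{2k}(\wpartial 1_e) = w_e\,(1_a-1_b)^T L^{2k+}(1_a-1_b) = w_e\,(\kharmonic{e}{2k})^2 .
$$
Now the elementary identity obtained by expanding the square and summing, $\sum_{\{s,t\}}(c_s-c_t)^2 = n\|c\|^2 - (\mathbf 1^Tc)^2$, together with $\mathbf 1^Tc=0$ gives $\sum_{s,t\in V}(c_s-c_t)^2 = n\|c\|^2$; substituting the value of $\|c\|^2$ finishes the proof.

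I do not expect a genuine obstacle here — the argument is a routine generalization of the biharmonic ($k=1$) case. The only points needing care are: (i) fixing the precise meaning of $(L^+)^k$ for arbitrary real $k$ so the three algebraic facts used — symmetry, the power law $(L^+)^k(L^+)^k=(L^+)^{2k}$, and $(L^+)^k\mathbf 1=0$ — hold uniformly, including the degenerate case $k=0$ where $(L^+)^0$ is a projection rather than the identity; and (ii) being consistent about the summation convention, since reading $\sum_{s,t\in V}$ as a sum over \emph{ordered} pairs would double the left-hand side and replace $n$ by $2n$ — matching the stated right-hand side pins down the unordered convention.
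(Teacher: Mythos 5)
Your proposal is correct and takes essentially the same route as the paper's proof: both reduce each summand to a quadratic form in the vector $c=(L^+)^k\wpartial 1_e$, and both rest on the same two facts, namely $c\perp\mathbf{1}$ and $\|c\|^2=(\wpartial 1_e)^T(L^+)^{2k}\wpartial 1_e=w_e(\kharmonic{e}{2k})^2$. The only cosmetic difference is that the paper evaluates the all-pairs sum by recognizing $\sum_{s,t}(1_s-1_t)(1_s-1_t)^T$ as the complete-graph Laplacian $L_{K_n}$ and using its spectrum, while you expand $\sum_{s,t}(c_s-c_t)^2=n\|c\|^2-(\mathbf{1}^Tc)^2$ directly — the same computation — and your observation that the stated right-hand side forces the unordered-pair reading of $\sum_{s,t\in V}$ is consistent with the paper's convention.
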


\begin{restatable}{theorem}{generalizedbiharmonicedges}
\label{thm:generalized_biharmonic_edges}
    Let $G=(V,E, w)$ be a connected weighted graph with $n$ vertices, let $s,t\in V$, and let $k\in\R$. Then
    $$
    \sum_{e\in E}{\left((\wpartial 1_{e})^T (L^+)^k (1_s - 1_t)\right)^2} = (\kharmonic{st}{2k-1})^2.
    $$
\end{restatable}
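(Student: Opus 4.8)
The plan is to reduce the left-hand side to a quadratic form in $L^+$ by recognizing the sum over edges as a squared Euclidean norm, and then to collapse that quadratic form using $L = \wpartial\wpartial^T$ together with the fact that $L$ and its pseudoinverse commute. Throughout, write $v := 1_s - 1_t \in \R^n$.

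First I would rewrite the summand. Since $\wpartial : \R^m \to \R^n$, we have $(\wpartial 1_e)^T (L^+)^k v = 1_e^T \wpartial^T (L^+)^k v$, i.e.\ the summand is the square of the $e$-th coordinate of the single vector $\wpartial^T (L^+)^k v \in \R^m$. Because $\{1_e : e \in E\}$ is an orthonormal basis of $\R^m$, summing these squares over all $e \in E$ gives
\[
  \sum_{e\in E}\bigl((\wpartial 1_e)^T (L^+)^k v\bigr)^2
  = \bigl\|\wpartial^T (L^+)^k v\bigr\|^2
  = \bigl((L^+)^k v\bigr)^T \wpartial\wpartial^T \bigl((L^+)^k v\bigr)
  = v^T (L^+)^k\, L\, (L^+)^k\, v ,
\]
where the last equality is just $L = \wpartial\wpartial^T$.

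Next I would simplify $(L^+)^k L (L^+)^k$ using the spectral decomposition $L^{k+} = \sum_{i=2}^n \lambda_i^{-k} x_i x_i^T$ from \Cref{sec:low_rank}: since all powers of $L$ and $L^+$ are simultaneously diagonalized and supported on $\im(L)$, one has $L\,(L^+)^k = (L^+)^{k-1}$, with the convention $(L^+)^0 = \proj$ (the orthogonal projection onto $\im(L)$) covering the base case. Hence $(L^+)^k L (L^+)^k = (L^+)^{2k-1}$, and so the edge-sum equals $v^T (L^+)^{2k-1} v$. As $v = 1_s - 1_t$ is orthogonal to the all-ones vector, $v \in \im(L)$, so no kernel component is lost, and $v^T (L^+)^{2k-1} v = (\kharmonic{st}{2k-1})^2$ by the definition of the $k$-harmonic distance. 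This finishes the proof.

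The computation is short and I do not anticipate a genuine obstacle; the only point needing mild care is the bookkeeping for real exponents $k$ — making sense of $(L^+)^{k-1}$, in particular the base case via $(L^+)^0 = \proj$, and checking that $L(L^+)^k = (L^+)^{k-1}$ holds at that level of generality, which is immediate from the eigen-decomposition above. I would also note that this is exactly the same two-step scheme (turn the index-sum into $\|\wpartial^T(\cdot)\|^2$, then push it through $L = \wpartial\wpartial^T$) that underlies \Cref{thm:generalized_biharmonic_st_pairs}, with the edge index and the vertex-pair index trading roles.
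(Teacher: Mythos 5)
Your proposal is correct and follows essentially the same route as the paper: recognizing the edge-sum as $\|\wpartial^T (L^+)^k(1_s-1_t)\|^2$ is the same step as the paper's $\sum_{e\in E}\wpartial 1_e 1_e^T\wpartial^T = L$, and both then collapse $(L^+)^k L (L^+)^k$ to $(L^+)^{2k-1}$ (your spectral/projection justification just makes explicit what the paper leaves implicit for real $k$).
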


From the theorems above, we prove the major result of this section, \Cref{thm:general_foster}, that relates the $2k$-harmonic distance on the edges of a graph to all-pairs $(2k-1)$-harmonic distance. This generalizes the well-known Foster theorem (\Cref{thm:foster}) and its biharmonic variant (\Cref{cor:biharmonic_foster}) for $k=1/2$ and $k=1$, respectively.

\begin{restatable}[$k$-harmonic Foster's Theorem]{theorem}{generalfoster}
\label{thm:general_foster}
Let $G = (V, E, w)$ be a weighted graph, and let $k\in \R$. Then
\[
\sum_{s,t\in V} (\kharmonic{st}{2k-1})^2 = n\cdot \sum_{e\in E} w_{e}\cdot(\kharmonic{e}{2k})^2.
\]
\end{restatable}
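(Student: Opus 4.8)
The plan is to recognize both Theorem~\ref{thm:generalized_biharmonic_st_pairs} and Theorem~\ref{thm:generalized_biharmonic_edges} as two different ways of evaluating a single double sum, and then to compare the two evaluations. Concretely, consider the quantity
$$
\Sigma := \sum_{e\in E}\,\sum_{s,t\in V} \left((\wpartial 1_{e})^T (L^+)^k (1_s - 1_t)\right)^2 .
$$
Each summand is a square, hence nonnegative, and both index sets $E$ and $V\times V$ are finite, so the order of summation may be exchanged freely.

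First I would fix an edge $e\in E$ and apply Theorem~\ref{thm:generalized_biharmonic_st_pairs} to the inner sum over $s,t$, which equals $n\cdot w_{e}\cdot(\kharmonic{e}{2k})^2$. Summing this over all $e\in E$ yields
$$
\Sigma = n\sum_{e\in E} w_{e}\,(\kharmonic{e}{2k})^2 ,
$$
the right-hand side of the claimed identity. Next I would instead fix a pair $s,t\in V$ and apply Theorem~\ref{thm:generalized_biharmonic_edges} to the sum over $e\in E$, which equals $(\kharmonic{st}{2k-1})^2$. Summing over all $s,t\in V$ gives
$$
\Sigma = \sum_{s,t\in V} (\kharmonic{st}{2k-1})^2 ,
$$
the left-hand side. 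Equating the two expressions for $\Sigma$ finishes the proof.

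There is essentially no obstacle here beyond bookkeeping. The only points to verify are that the two cited theorems range over the same index set $V\times V$ (ordered pairs, the diagonal terms $s=t$ contributing zero) so that $\Sigma$ is unambiguous, and that the edge weight $w_{e}$ is attached consistently in both — both of which hold as stated. All of the substantive analysis has already been carried out in proving the two component theorems; this result is simply the observation that they are the two iterated-sum evaluations of the common quantity $\Sigma$, in exactly the way that Foster's Theorem (\Cref{thm:foster}) and its biharmonic counterpart (\Cref{cor:biharmonic_foster}) fall out as the $k=1/2$ and $k=1$ specializations.
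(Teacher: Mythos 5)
Your proposal is correct and is essentially identical to the paper's own argument: the paper likewise introduces the double sum $S=\sum_{e\in E}\sum_{s,t\in V}\bigl(1_e^T\wpartial^T(L^+)^k(1_s-1_t)\bigr)^2$ and evaluates it once via \Cref{thm:generalized_biharmonic_st_pairs} and once, after swapping the order of summation, via \Cref{thm:generalized_biharmonic_edges}. Nothing further is needed.
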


A proof of these theorems can be found in~\Cref{apx:karmonic}.

\section{Experiments}
\label{sec:experiments}

\subsection{Centrality}
\label{sec:experiments_centrality}

\begin{figure}[h]
    \centering
    \includegraphics[width=1.6in]{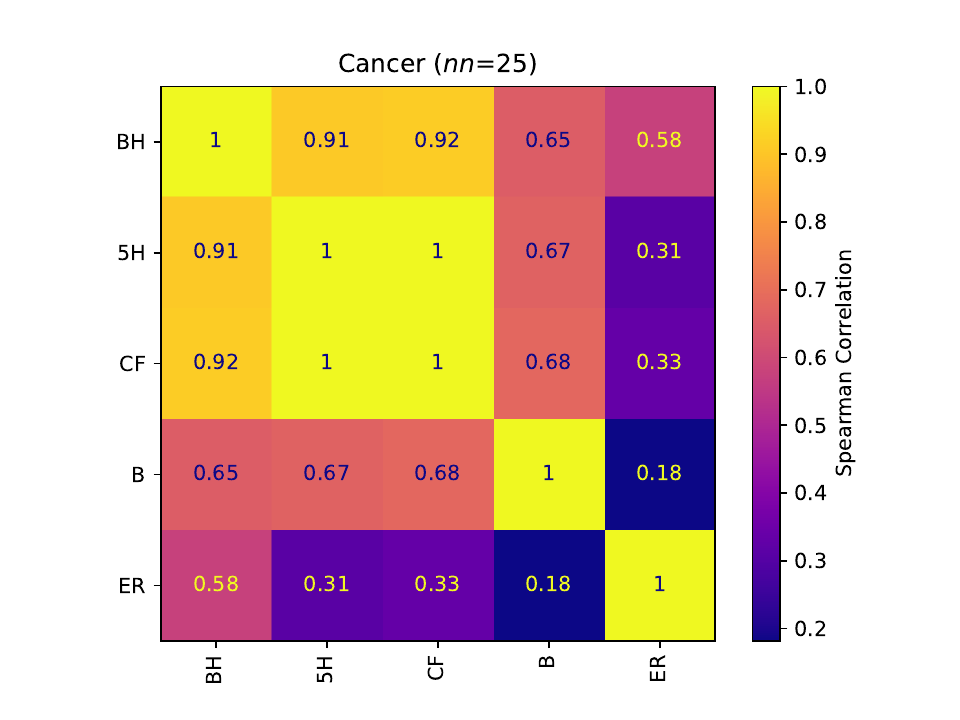}
    \includegraphics[width=1.6in]{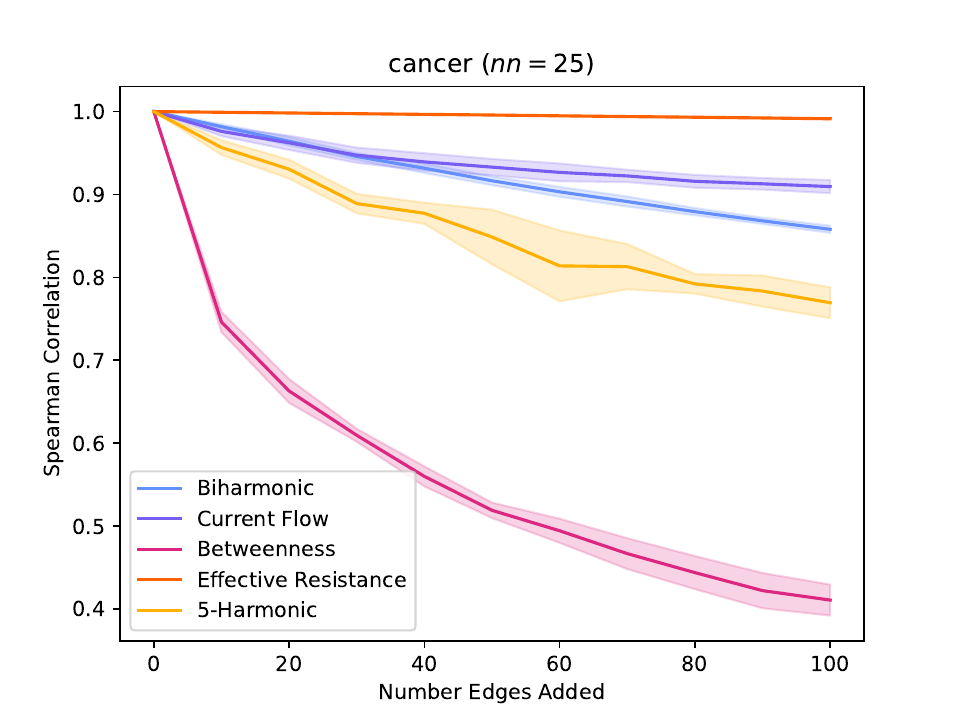}
    \caption{Left: The Spearman Rank Correlation Coefficient between different edge centrality measures on the 25-nearest neighbor graph of the Cancer dataset. Here BH is biharmonic distance, CF is current-flow centrality, B is betweenness centrality, ER is effective resistance, and 5H is 5-harmonic distance. 
    \newline Right: The Spearman Rank Correlation Correlation between an edge centrality measure and the same edge centrality measure after a number of random edges were added. Experiments were repeated 5 times for each centrality measure. Results for more graphs can be found in~\Cref{sec:centrality_experiments}.}
    \label{fig:cancer_confusion_and_resilience}
\end{figure}

This section compares the biharmonic distance and $k$-harmonic distance against the other centrality measures current-flow centrality, betweenness centrality, and effective resistance. We compare them in two ways: correlation and resilience to changes in the graph. For both experiments, we use the $k$-nearest neighbor graphs for $k\in\{25,50,75,100\}$ of the Cancer~\citep{misc_breast_cancer_14}, Wine~\citep{misc_wine_109}, and Iris~\citep{misc_iris_53} datasets from the UCI Machine Learning repository~\citep{ucirepository}. We use $k=5$ for the $k$-harmonic distance.
\par 
For the first experiment, we compare the ranking of the edges by the different centrality measures (e.g. the list of edges order from highest to lowest centrality) using the Spearman Rank Correlation Coefficient~\cite{spearman1904proof}.
\par 
For sparser graphs (fewer nearest neighbors $k$), the biharmonic distance is highly correlated with the current-flow centrality, which we might expect given their mutual connection to electrical flows. However, as $k$ increases, biharmonic distance tends to become less correlated with current-flow centrality and more correlated with effective resistance, while current-flow centrality tends to be more correlated with betweenness centrality. What is most surprising is that $5$-harmonic distance is often almost perfectly correlated with current-flow centrality. Our theoretical results do not posit any explanation for this behavior.
\par 
For the second experiment, we compare the original order of edges given by a centrality measure to the order given by the same centrality measure after adding a number of random edges. This experiment aims to test how resilient a centrality measure is to perturbations in the graph. The less the ordering of the edges changes, the more resilient the centrality measure is. This is a desirable property of a centrality measure as we would like the measure to be roughly the same given small changes in the input graph.
\par
We found the effective resistance had the highest resilience, as the ordering was often nearly unchanged by adding edges. (This extreme resilience to noise was previously noted by \citet{mavroforakis2015spanning}.) We also found that the biharmonic distance and current-flow centrality were the next most resilient to noise, followed by the 5-harmonic distance and then betweenness centrality.

Results for the 25-nearest neighbor graph of the Cancer dataset are given in~\Cref{fig:cancer_confusion_and_resilience}. Results for all graphs can be found in \Cref{sec:centrality_experiments}.

\subsection{Clustering}
\label{sec:experiments_clustering}

\begin{figure*}[ht]
    \centering
    \includegraphics[width=0.24\linewidth]{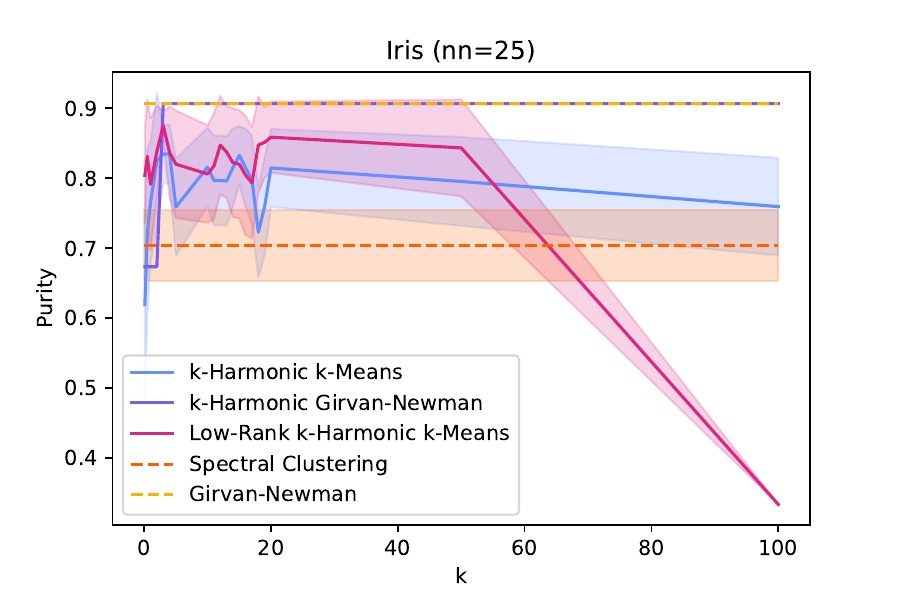}
    \includegraphics[width=0.24\linewidth]{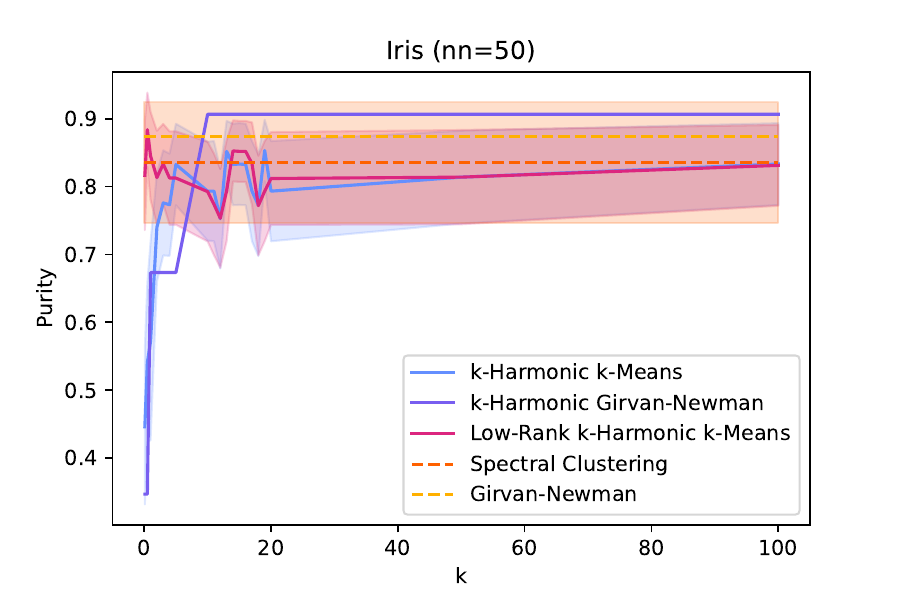}
    \includegraphics[width=0.24\linewidth]{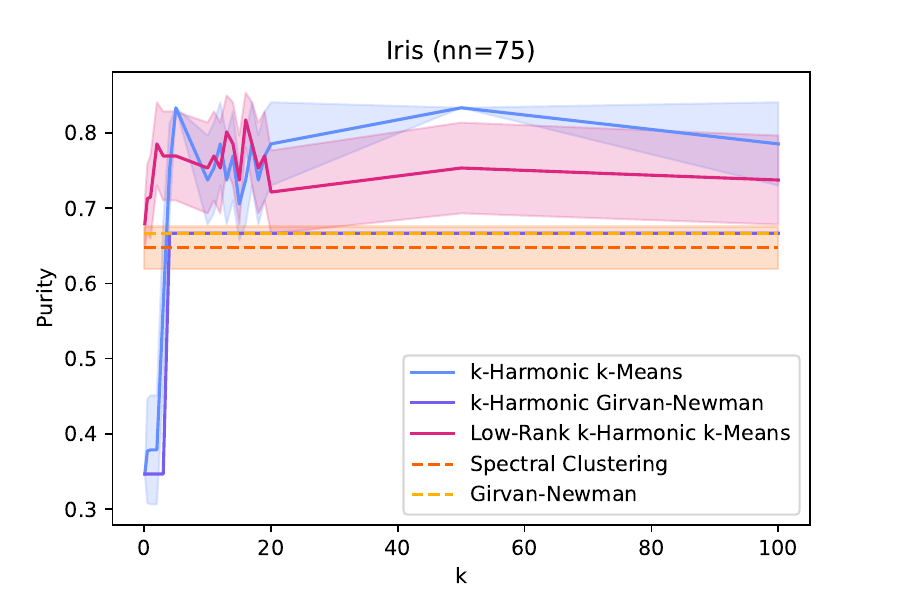}
    \includegraphics[width=0.24\linewidth]
    {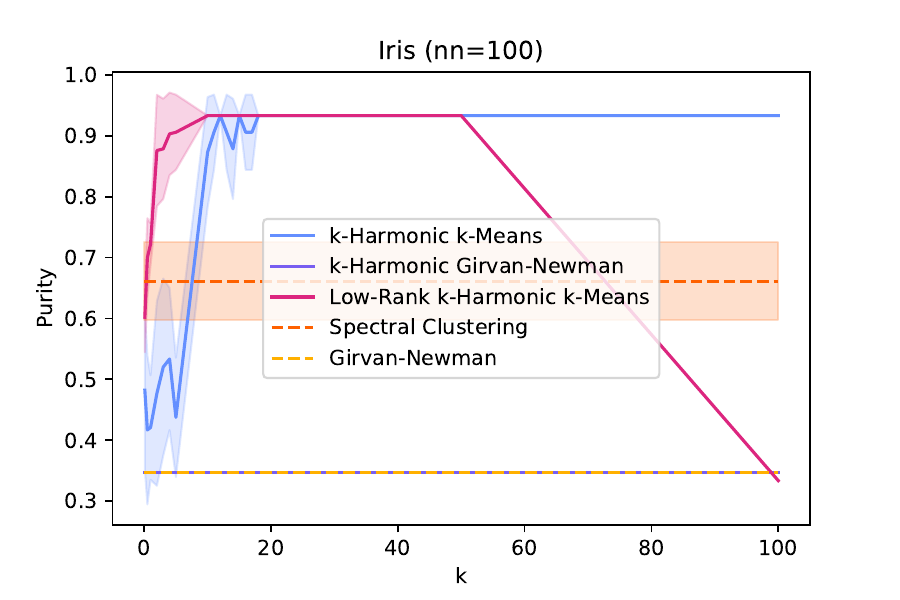}
    
    \includegraphics[width=0.24\linewidth]{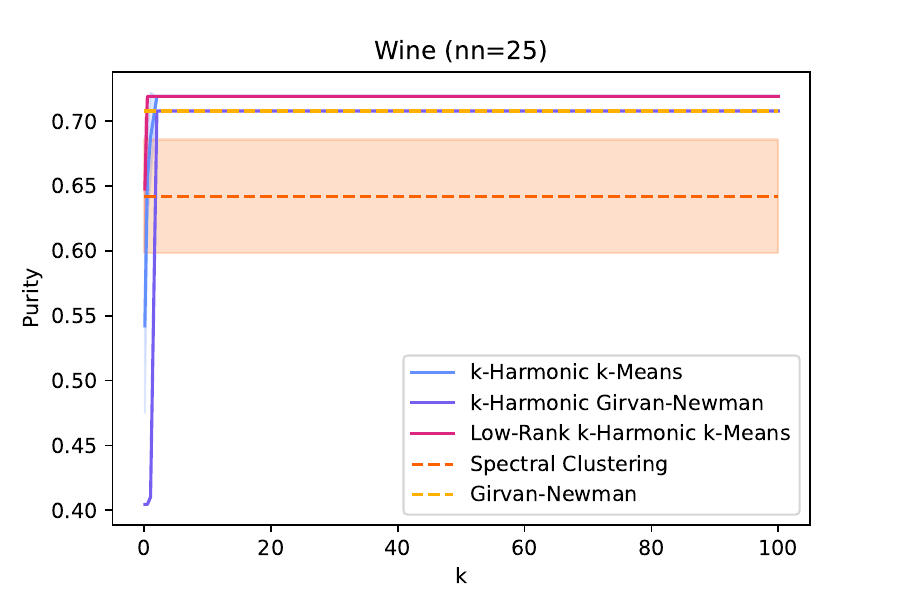}
    \includegraphics[width=0.24\linewidth]{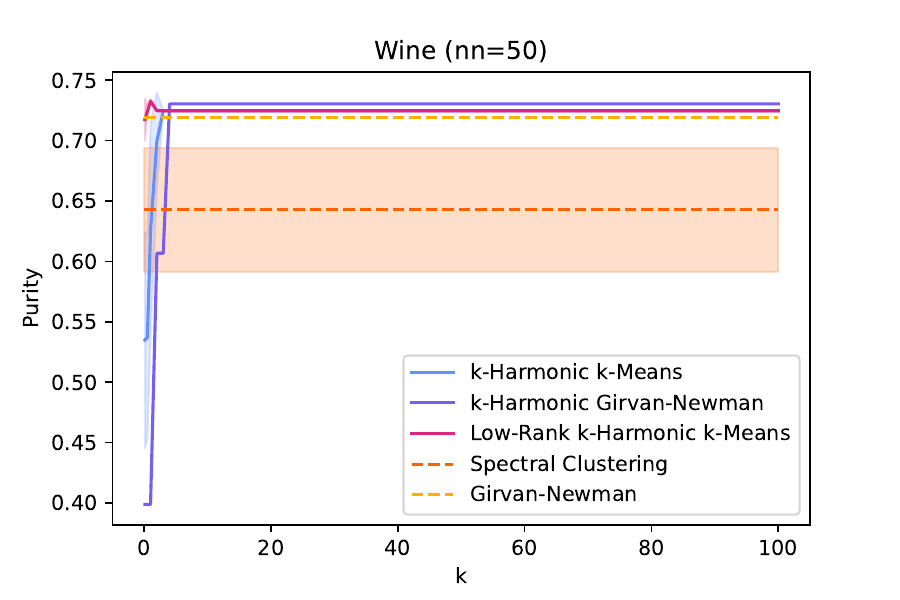}
    \includegraphics[width=0.24\linewidth]{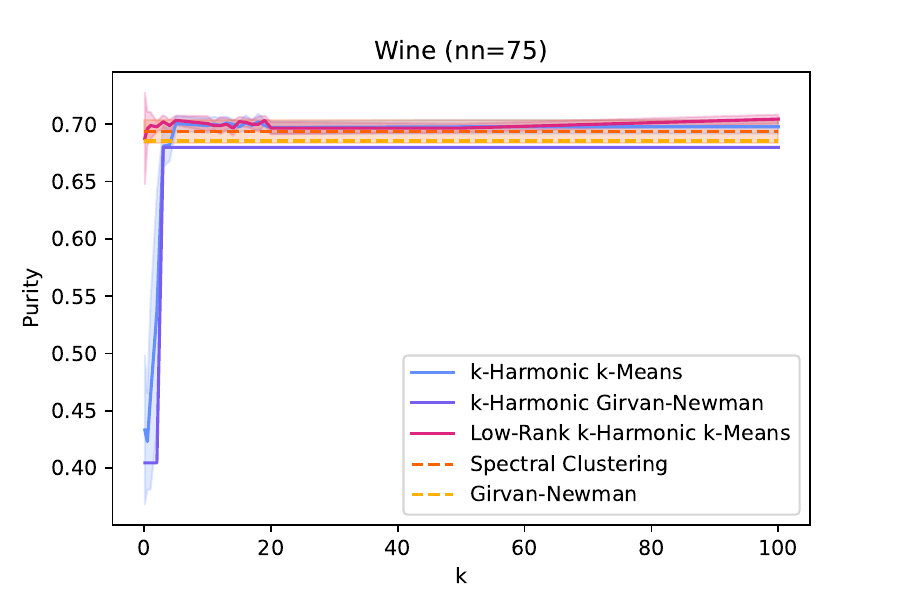}
    \includegraphics[width=0.24\linewidth]{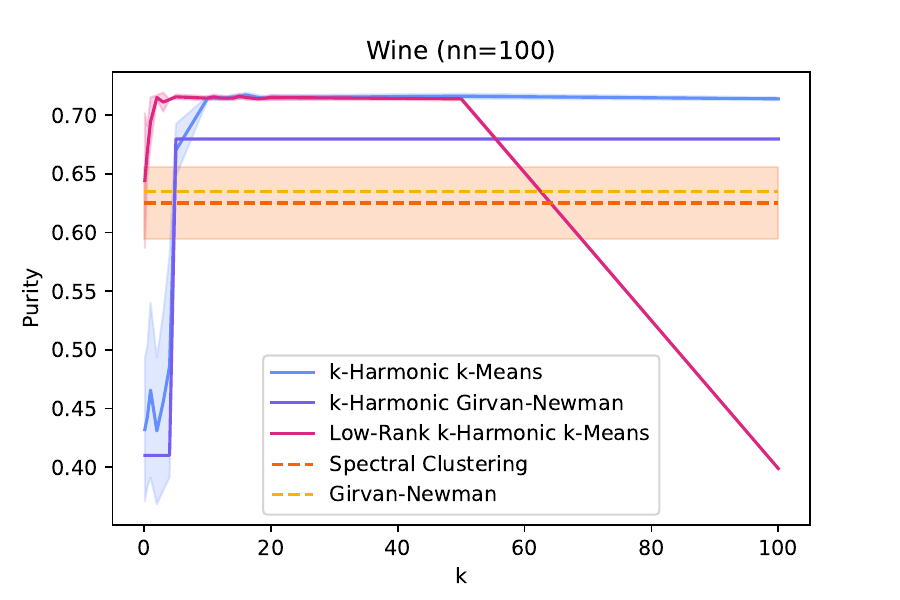}

    \includegraphics[width=0.24\linewidth]{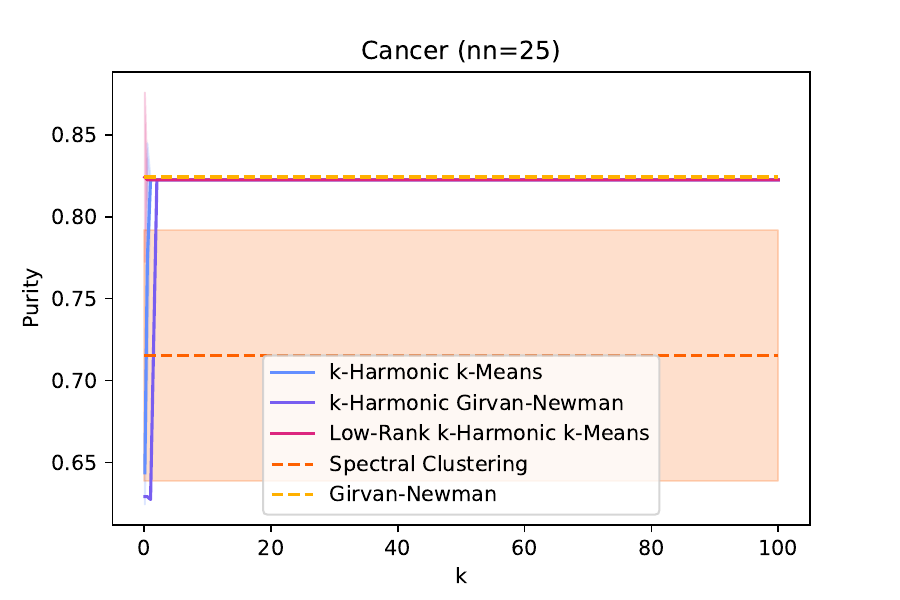}
    \includegraphics[width=0.24\linewidth]{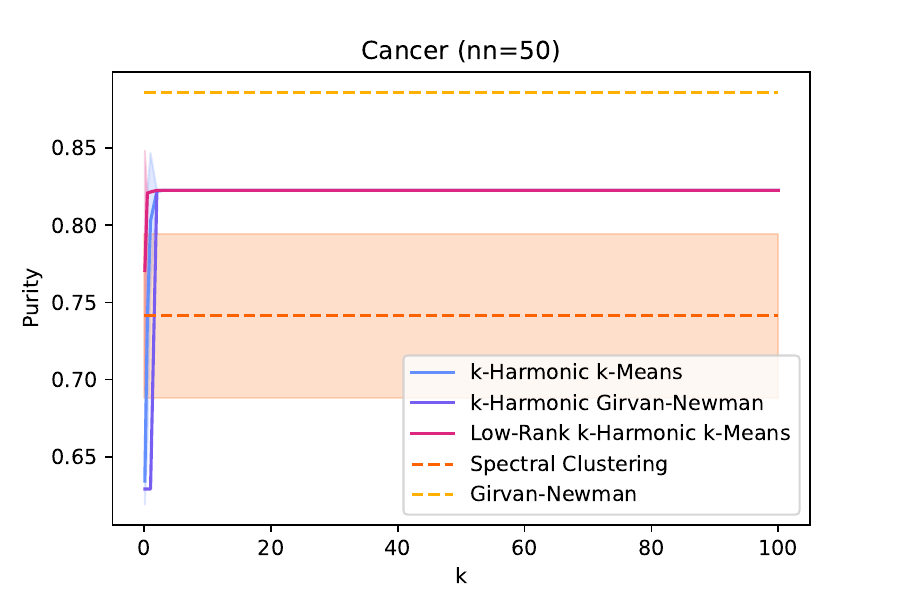}
    \includegraphics[width=0.24\linewidth]{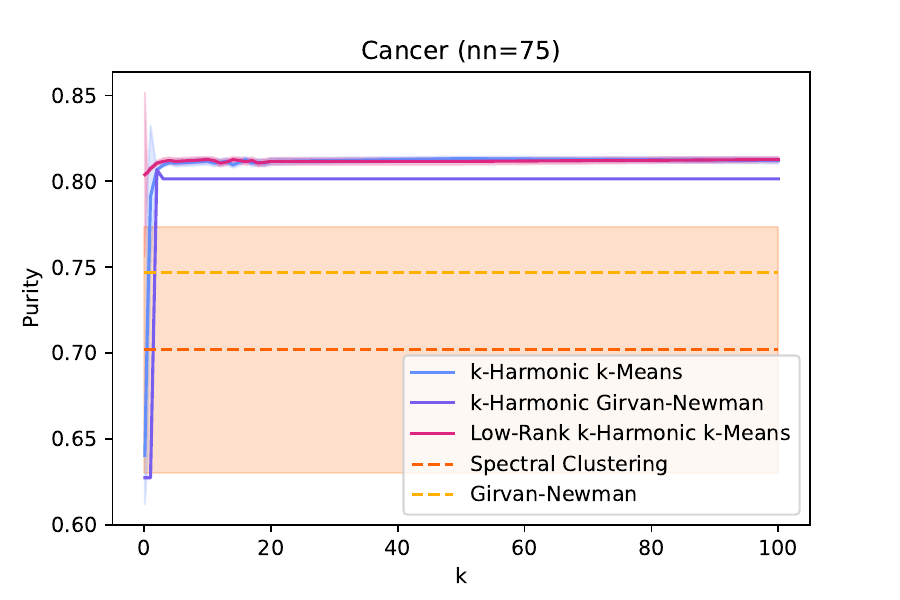}
    \includegraphics[width=0.24\linewidth]{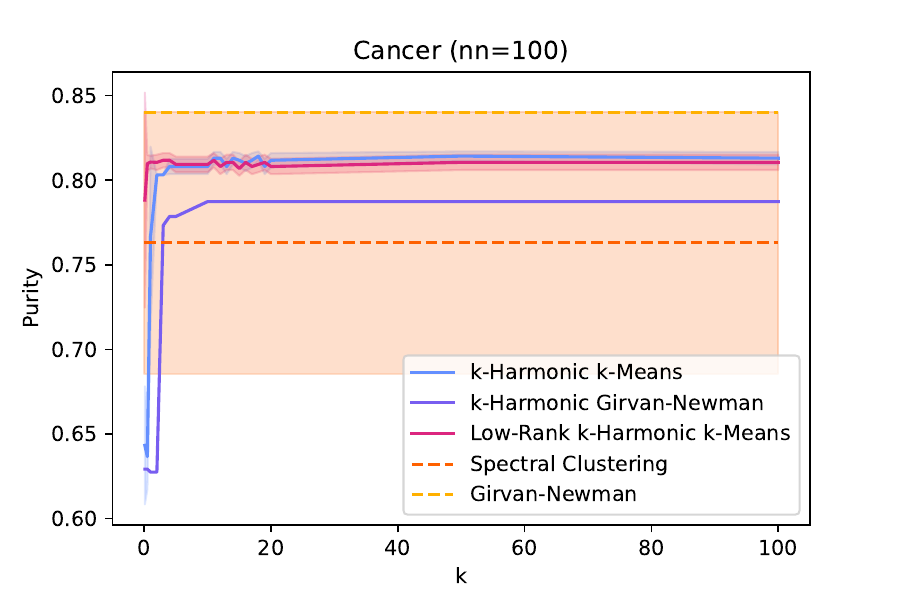}
    \caption{Plots of Purity vs.~$k$ for $k$-harmonic clustering algorithms. Different plots correspond to nearest neighbor graphs of different datasets. Algorithms not parameterized by $k$ (i.e.~Spectral Clustering and Girvan-Newman) are denoted with a dashed line. $k$-harmonic $k$-means, Low Rank $k$-harmonic $k$-means, and Spectral Clustering are averaged over 10 runs of $k$-means with different seeds.}  
    \label{fig:diagrams_in_body}
\end{figure*}

This section compares the performance of our proposed $k$-harmonic $k$-means and $k$-harmonic Girvan-Newman algorithm against classical clustering methods. We also compare $k$-means clustering using the low-rank approximation of $k$-harmonic distance from~\Cref{sec:low_rank}. We compare our algorithms against the Spectral Clustering~\cite{SheMalik2000SpecClust,ng2001spectral} and classical Girvan-Newman~\cite{girvan2002community} (using edge-betweenness centrality) clustering algorithms. As our computations of biharmonic and $k$-harmonic distances use the unnormalized Laplacian, we also use the unnormalized Laplacian for spectral clustering. 
\par 
We performed experiments of our clustering algorithms on synthetic and real world data. For the synthetic datasets, we generate connected block stochastic graphs with simulated clusters by generating edges between vertices within clusters with a probability $p$ and edges between vertices in different clusters with a probability $q$. For real world data, we use the Breast Cancer~\citep{misc_breast_cancer_14}, Wine~\citep{misc_wine_109}, and Iris~\citep{misc_iris_53} classification datasets from the UCI Machine Learning repository~\citep{ucirepository}. We create unweighted, undirected graphs from these datasets by building a $k$-nearest-neighbors graph for values of $k\in\{25, 50, 75, 100\}$. For all algorithms using the $k$-harmonic distance, we used the values of $k\in\{0.5,1,2,3,...,20,50,100\}$. For low-rank $k$-harmonic $k$-means, we set the rank $r$ to be number of clusters we are returning. We evaluate all algorithms using cluster purity. For all algorithms using k-means, we use random initialization of the centroids and give an average over ten trials with a 95\% confidence interval. Partial results of our experiments are given in~\Cref{fig:diagrams_in_body}. The results of our experiments for all datasets and algorithm are given in \Cref{apx:clustering_experiments}. 
\par 
We observe from \Cref{fig:diagrams_in_body,table:iris_experimental_results,table:cancer_experimental_results,table:wine_experimental_results,table:synthetic_experimental_results} in \Cref{apx:clustering_experiments} that algorithms using $k$-harmonic distances may be useful for clustering and can achieve results that outperform spectral clustering. Moreover, biharmonic distance almost always outperforms effective resistance, and in almost all cases, $k$-harmonic clustering and $k$-harmonic Girvan-Newman algorithms achieve optimal results at a wide range of values of $k$ larger than 2. A potential explanation for the high performance for large $k$ is given in~\Cref{apx:interpretation_spectral_clustering}. 

\textbf{Code} Code can be found at: 
{\small\url{https://github.com/ll220/biharmonic-kharmonic-clustering}}.

\section{Conclusion}

We studied the biharmonic and $k$-harmonic distance on graphs. We proved several useful theoretical properties of the biharmonic distance. We also proposed and tested graph clustering algorithms inspired by these properties. We found that our algorithms performed better with biharmonic distance than effective resistance, as expected, and the best results were achieved for $k$-harmonic distance with $k>2$. As well, our algorithms often performed comparably or better than existing algorithms.
\par 
While our theoretical results were mostly for the biharmonic distance, our experimental results suggests that the $k$-harmonic distance is also useful for edge centrality and graph clustering. It is an open question whether the $k$-harmonic distance has similar theoretical properties to the biharmonic distance.
\par 
Likewise, most of our theoretical results are for the biharmonic distance of \textit{edges}. It is an open question if the biharmonic distance between general pairs of vertices also have useful theoretical properties. 
\par 
While our experimental results on clustering appear to perform best for $k>2$, it is an open question of how to pick the best $k$ for a particular graph. Our experiments suggest that different values of $k$ are required to achieve optimal results for different graph inputs. It would be interesting to observe the behavior for $k$-harmonic clustering algorithms on a wider range of graphs to determine the graph parameters that correlate with the best $k$ value. 
\par 
Finally, we only compared biharmonic and $k$-harmonic distance to effective resistance for edge centrality and graph clustering. However, effective resistance has been used in many other areas of machine learning. (See~\Cref{sec:introduction} for examples.) Future work could explore using biharmonic and $k$-harmonic distance for these other applications.

\newpage

\section*{Acknowledgments}

Mitchell Black and Amir Nayyeri were supported by NSF Grants CCF-1816442, CCF-1941086, and CCF-2311180.

\section*{Impact Statement}

This paper presents work whose goal is to advance the field of Machine Learning. There are many potential societal consequences of our work, none which we feel must be specifically highlighted here.

\bibliography{main}
\bibliographystyle{icml2024}

\newpage
\appendix
\onecolumn

\section{Bounds on Biharmonic Distance}
\label{sec:bounds}

In this section, we prove bounds on the biharmonic distance in unweighted graphs.

\begin{theorem}[Lower Bound]
\label{thm:lower_bound_biharmonic}
    Let $G=(V,E)$ be an unweighted connected graph with $n$ vertices. Let $s,t\in V$. Then 
    $$
        \frac{2}{n^{2}}\leq B_{st}^{2}
    $$
\end{theorem}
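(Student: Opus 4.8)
The plan is to reduce the claimed bound on $B_{st}^2$ to the much more classical fact that $R_{st}\ge 2/n$, by first establishing a Cauchy--Schwarz type inequality $B_{st}^2 \ge R_{st}^2/2$ relating the squared biharmonic distance to the squared effective resistance.

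\textbf{Step 1: $B_{st}^2 \ge R_{st}^2/2$.} Recall from \Cref{lem:properties_of_potentials} that $B_{st}^2 = \|p_{st}\|^2$ and $\langle 1_s - 1_t,\, p_{st}\rangle = p_{st}(s) - p_{st}(t) = R_{st}$. Writing $p_{st}$ as its component along $u := 1_s - 1_t$ plus an orthogonal residual, and using $\|u\|^2 = 2$, I get $\|p_{st}\|^2 \ge \langle p_{st}, u\rangle^2/\|u\|^2 = R_{st}^2/2$. Equivalently, one can spectrally decompose $L^+ = \sum_{i\ge 2}\lambda_i^{-1}x_ix_i^T$, set $c_i = \langle x_i, 1_s - 1_t\rangle$ so that $R_{st} = \sum_{i\ge 2}\lambda_i^{-1}c_i^2$ and $B_{st}^2 = \sum_{i\ge 2}\lambda_i^{-2}c_i^2$, and apply Cauchy--Schwarz: $R_{st}^2 = \big(\sum_i (\lambda_i^{-1}c_i)(c_i)\big)^2 \le \big(\sum_i \lambda_i^{-2}c_i^2\big)\big(\sum_i c_i^2\big) = B_{st}^2 \cdot \|1_s - 1_t\|^2 = 2\,B_{st}^2$, where $\sum_{i\ge 2}c_i^2 = 2$ because $1_s - 1_t \perp \mathbf 1$.

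\textbf{Step 2: $R_{st} \ge 2/n$.} Since $G$ is a connected $n$-vertex graph, it is a spanning subgraph of $K_n$, so by Rayleigh monotonicity removing edges only increases effective resistance, giving $R_{st}(G) \ge R_{st}(K_n)$; a direct computation with $L(K_n)^+ = \tfrac1n\big(I - \tfrac1n J\big)$ yields $R_{st}(K_n) = 2/n$. (Alternatively, avoid monotonicity: $R_{st} = \sum_{i\ge 2}\lambda_i^{-1}c_i^2 \ge \lambda_n^{-1}\sum_{i\ge 2}c_i^2 = 2/\lambda_n \ge 2/n$, using the standard bound $\lambda_n \le n$ for Laplacian eigenvalues of an $n$-vertex graph, e.g. from $L + \bar L = nI - J$.)

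Combining the two steps gives $B_{st}^2 \ge R_{st}^2/2 \ge (2/n)^2/2 = 2/n^2$, as claimed. I do not expect a real obstacle here: the only delicate points are the orthogonality bookkeeping in Step 1 and citing a proof of $R_{st}\ge 2/n$ in Step 2, both of which are routine. It is worth remarking that the bound is tight, attained by $K_n$, where $B_{st}^2 = 2/n^2$.
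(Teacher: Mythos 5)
Your proof is correct, but it takes a genuinely different route from the paper's. The paper argues directly on the quadratic form defining $B_{st}^2$: by Courant--Fischer, the nonzero eigenvalues of $L^{2+}$ are at least $1/n^2$ (since the Laplacian eigenvalues of an $n$-vertex unweighted graph are at most $n$), and since $1_s-1_t\perp\ker L$ with $\|1_s-1_t\|^2=2$, it concludes $B_{st}^2=(1_s-1_t)^{T}L^{2+}(1_s-1_t)\geq 2/n^2$ in a single step. You instead factor the bound through effective resistance: Cauchy--Schwarz in the form $R_{st}=\langle 1_s-1_t,\,p_{st}\rangle\leq\sqrt{2}\,\|p_{st}\|$ gives $B_{st}^2\geq R_{st}^2/2$, and then $R_{st}\geq 2/n$ (by Rayleigh monotonicity against $K_n$, or by $\lambda_n\leq n$) finishes the argument; the identities you invoke are exactly items 1 and 2 of \Cref{lem:properties_of_potentials}, and the bookkeeping is sound. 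What your route buys: the intermediate inequality $B_{st}^2\geq R_{st}^2/2$ is of independent interest (it compares the two distances directly and holds for weighted graphs as well), and the monotonicity argument makes tightness at $K_n$ transparent without any eigenvalue estimates. What the paper's route buys: it is shorter and extends verbatim to the $k$-harmonic bound $(\kharmonic{st}{k})^{2}\geq 2/n^{k}$ of \Cref{thm:bounds_karmonic}, whereas your reduction would need an extra Jensen/power-mean step to pass from $R_{st}$ to $\kharmonic{st}{k}$ for general $k$. Note also that your spectral variant of Step 2 (via $\lambda_n\leq n$) is essentially the same eigenvalue input as the paper's, just applied to $L^{+}$ rather than $L^{2+}$, so the two proofs coincide in spirit along that branch; the genuinely new content is the Cauchy--Schwarz comparison and the $K_n$ monotonicity argument.
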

\begin{proof}
    This follows from the \textit{\textbf{Courant-Fischer theorem}}. While the Courant-Fischer theorem is more general, in this context, we only need the following corollary of the main theorem: for a symmetric matrix $A$ with minimal non-zero eigenvalue $\lambda_{\min}$, then for any non-zero vector $x\perp\ker A$, $\lambda_{\min}\leq x^{T}Ax/x^{T}x$.
    \par 
    The eigenvalues of the Laplacian of $G$ are bound above by $n$, so the non-zero eigenvalues of $L^{+}$ are bound below by $\frac{1}{n}$ and the non-zero eigenvalues of $L^{2+}$ are bound below by $\frac{1}{n^2}$. Moreover, the vector $(1_s-1_t)\perp \ker L$ as $\ker L$ is spanned by the all-ones vector. Therefore the squared biharmonic distance between any pair of vertices $s$ and $t$ is $B_{st}^{2} = (1_s-1_t)^{T}L^{2+}(1_s-1_t)\geq \frac{2}{n^2}$, with the factor of 2 coming from $(1_s-1_t)^{T}(1_s-1_t)$. 
\end{proof}

As the eigenvalues of $L^{2+}$ are upper-bounded by $O(n^{4})$, we can use the Courant-Fischer theorem to similarly upper-bound the biharmonic distance by $B_{st}^{2}\in O(n^{4})$; however, we can use a different argument to show that $B_{st}^{2}\in O(n^{3})$.

\begin{theorem}[Upper Bound]
\label{thm:upper_bound_biharmonic}
    Let $G=(V,E)$ be an unweighted graph with $n$ vertices. Let $s,t\in V$. Then 
    $$
        B_{st}^{2}\leq n^{3}.
    $$
\end{theorem}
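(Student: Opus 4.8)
The plan is to bound $B_{st}^2$ through the $st$-potential $p_{st} = L^{+}(1_s - 1_t)$, using the structural facts collected in \Cref{lem:properties_of_potentials} rather than a spectral estimate. Assume first that $G$ is connected. By item~1 of that lemma, $B_{st}^2 = \|p_{st}\|^2 = \sum_{v \in V} p_{st}(v)^2$, so it suffices to bound each coordinate $p_{st}(v)$.

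The crux is the pointwise estimate $|p_{st}(v)| \le R_{st}$ for every $v \in V$. By items~2 and~3 of \Cref{lem:properties_of_potentials}, every value $p_{st}(v)$ lies in the closed interval $[\,p_{st}(t),\, p_{st}(s)\,]$, whose width is $p_{st}(s) - p_{st}(t) = R_{st}$. By item~4 the mean of the $p_{st}(v)$ over $V$ is $0$, and since the mean of numbers lying in an interval also lies in that interval, we get $p_{st}(t) \le 0 \le p_{st}(s)$. Hence $|p_{st}(s)| = p_{st}(s) \le R_{st}$ and $|p_{st}(t)| = -p_{st}(t) \le R_{st}$, so $|p_{st}(v)| \le R_{st}$ for all $v$. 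Summing, $B_{st}^2 = \sum_{v\in V} p_{st}(v)^2 \le n\, R_{st}^2$. Finally, in an unweighted connected graph the effective resistance is at most the shortest-path distance, so $R_{st} \le n-1$, and therefore $B_{st}^2 \le n(n-1)^2 \le n^3$ (the intermediate bound $n(n-1)^2$ being slightly stronger than stated).

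I expect the only genuine subtlety to be the step that upgrades ``the range of $p_{st}$ has width $R_{st}$'' into an absolute bound on each coordinate; the zero-mean property (item~4) is exactly the ingredient that pins $0$ inside that range and makes this work. Everything else is routine, and $R_{st}\le n-1$ is classical. If $G$ is disconnected, one restricts to the components $C_s \ni s$ and $C_t \ni t$, writes $\|p_{st}\|^2$ as the sum of the squared norms of the two single-source potentials supported on these components, and applies the same reasoning componentwise; here one additionally invokes a discrete maximum principle to see that the source vertex maximizes its potential and that the potential's range is at most $(|C_s|-1)$, giving a bound of $|C_s|^3 + |C_t|^3 \le n^3$. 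This last point is the only place where extra care beyond the connected case is needed.
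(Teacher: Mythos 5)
Your argument is correct and is essentially the paper's own proof: both bound each coordinate via $|p_{st}(v)|\le R_{st}$ using the max/min-at-$s,t$ and zero-mean properties of \Cref{lem:properties_of_potentials}, conclude $B_{st}^2=\|p_{st}\|^2\le nR_{st}^2$, and finish with $R_{st}\le n-1$. The only difference is your additional (correct, if briefly sketched) handling of the disconnected case, which the paper's proof tacitly ignores even though its statement does not assume connectivity.
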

\begin{proof}
    By \Cref{lem:properties_of_potentials} Item 1, we know that $R_{st} = p_{st}(s) - p_{st}(t)$. Moreover, by combining \Cref{lem:properties_of_potentials} Items 2 and 3, we conclude that $p_{st}(s)$ is positive and $p_{st}(t)$ is negative, which implies that $|p_{st}(s)|, |p_{st}(t)|\leq R_{st}$. Further, \Cref{lem:properties_of_potentials} Item 2 implies that $|p_{st}(v)|\leq R_{st}$ for all $v\in V$. Therefore,
    $$
        B_{st}^{2}  = \| p_{st} \|_2^2\leq \sum_{v\in V} R_{st}^{2} = nR_{st}^{2}.
    $$
    As $R_{st} \leq n-1$, then $B_{st}^{2}\leq n^{3}$
\end{proof}

\begin{theorem}[Upper Bound on Edges]
\label{thm:upper_bound_biharmonic_edge}
    Let $G=(V,E)$ be an unweighted graph with $n$ vertices. Let $e\in E$. Then
    $$
        B_{e}^2\leq n.
    $$
\end{theorem}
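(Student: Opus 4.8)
The plan is to reuse the argument behind Theorem~\ref{thm:upper_bound_biharmonic}, specialized to the case where the two vertices are the endpoints of an edge. Write $e=\{s,t\}$, so that $B_e^2 = B_{st}^2 = \|p_{st}\|^2$ by Item~1 of Lemma~\ref{lem:properties_of_potentials}. As in the proof of Theorem~\ref{thm:upper_bound_biharmonic}, Items~2--4 of Lemma~\ref{lem:properties_of_potentials} give $p_{st}(t)\le 0 \le p_{st}(s)$ and $p_{st}(s)-p_{st}(t)=R_{st}$, and since $p_{st}$ is maximized at $s$ and minimized at $t$, every coordinate satisfies $|p_{st}(v)|\le R_{st}$. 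Hence
\[
B_e^2 = \|p_{st}\|^2 = \sum_{v\in V} p_{st}(v)^2 \le n\,R_{st}^2 .
\]

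The only remaining ingredient is the (folklore) bound $R_{st}=R_e\le 1$ for an edge of an unweighted graph, which already appears in the comparison table. This follows from Rayleigh monotonicity: deleting all edges other than $e$ cannot decrease the effective resistance between $s$ and $t$, and in the resulting single-edge graph that resistance equals $1/w_e = 1$. Plugging $R_e\le 1$ into the display yields $B_e^2\le n$, as claimed.

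An alternative route would go through Theorem~\ref{thm:biharmonic_distance_is_squared_electrical_flow}, which for an unweighted graph reads $n B_e^2 = \sum_{s,t\in V} f_{st}(e)^2$; one would then only need $|f_{st}(e)|\le 1$ for every ordered pair, which holds because the $st$-electrical flow routes a single unit of current and the current through any one edge never exceeds the injected current (e.g.\ via the spanning-tree interpretation of electrical flows), giving $n B_e^2 \le n(n-1) \le n^2$. I would present the first approach, since it is self-contained given the lemmas already proved. There is no real obstacle: the statement is essentially a one-line corollary of the techniques behind Theorem~\ref{thm:upper_bound_biharmonic} together with the fact that $R_e\le 1$ on edges.
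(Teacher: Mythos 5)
Your argument is exactly the paper's: the paper proves this theorem by repeating the proof of \Cref{thm:upper_bound_biharmonic} (namely $B_{e}^{2}=\|p_{st}\|^{2}\le nR_{st}^{2}$ via \Cref{lem:properties_of_potentials}) and then substituting the tighter bound $R_{e}\le 1$ for an edge, which is precisely your main route. Your Rayleigh-monotonicity justification of $R_e\le 1$ and the alternative via \Cref{thm:biharmonic_distance_is_squared_electrical_flow} are correct extras, but the core proof coincides with the paper's.
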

\begin{proof}
    The proof of this theorem follows the same steps as \Cref{thm:upper_bound_biharmonic}, except that we can derive a tighter upper bound as $R_{e}\leq 1$ for an edge $e$.
\end{proof}

\subsection{Tight Examples}

We now show that the bounds above are all tight up to a constant.
\begin{itemize}
    \item \textbf{Lower Bound} Consider the complete graph $K_n$ on $n$ vertices. The Laplacian $L$ of $K_n$ has $n-1$ eigenvalues of value $n$, and $1$ eigenvalue of value 0; moreover, the eigenvector associated with the 0 eigenvalue is the all-ones vector 1~\citep[Lemma 6.1.1]{spielman2019sagt}. The eigenvectors of $L$ of different eigenvalues are orthogonal as $L$ is symmetric, so because $(1_s-1_t)^{T}1=0$ for any vertices $s$ and $t$, then $(1_s-1_t)$ is an eigenvector of $L$ with eigenvalue $n$. Therefore, the squared biharmonic distance between any two vertices $s$ and $t$ is $B_{st}^{2}=(1_s-1_t)^{T}L^{2+}(1_s-1_t) = \frac{2}{n^2}$.  

    \item \textbf{Upper Bound} Consider the path graph $P_{n+1}$ on the vertices $\{v_0,\ldots,v_n\}$ for $n+1$ odd, and let $s=v_0$ and $t=v_n$ be the endpoints. The $st$-potential is the function $p_{st}(v_i) = \frac{n}{2}-i$. Therefore, the squared biharmonic distance is $B^{2}_{st} = p_{st}^{T}p_{st}\in\Omega(n^{3})$.

    \item \textbf{Upper Bound on Edges} Consider the path graph $P_{n}$ on the vertices $\{v_1,\ldots,v_n\}$ for $n$ even. Consider the center edge $e=\{v_{n/2},v_{n/2+1}\}$. This edge is a cut edge with $v_{n/2}\in S$ and $v_{n/2+1}\notin S$. Moreover, $|S|=\frac{n}{2}$ and $|V\setminus S|=\frac{n}{2}$. Therefore, by~\Cref{thm:biharmonic_of_cutedge_and_sparsity}, $B^{2}_{e} = \frac{|S||V\setminus S|}{n} = \frac{n^{2}}{4n} = \Omega(n)$.
\end{itemize}

\subsection{Bounds on k-Harmonic Distance}

Given that we are able to prove bounds on the biharmonic distance, it is a natural question if these same bounds can be generalized to the $k$-harmonic distance. The answer is only partially. By this, we mean that the technique of using the Courant-Fischer theorem (ala the proof of~\Cref{thm:lower_bound_biharmonic}) generalizes to $k$-harmonic distance, but the more advanced techniques in~\Cref{thm:upper_bound_biharmonic,thm:upper_bound_biharmonic_edge} do not generalize to $k$-harmonic distances for $k>2$. 

\begin{theorem}
\label{thm:bounds_karmonic}
    Let $G=(V,E)$ be an unweighted graph. Let $s,t\in V$. Let $k\in R$. Then
    $$
        \frac{2}{n^{k}} \leq (\kharmonic{st}{k})^{2} \leq n^{2k}.
    $$
\end{theorem}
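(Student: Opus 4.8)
The statement really packages two rather different arguments, and I would treat the bounds separately, as the paragraph preceding the theorem already hints: the lower bound is the verbatim generalization of the proof of \Cref{thm:lower_bound_biharmonic}, while the upper bound needs a (crude) lower bound on the spectral gap $\lambda_2$ of $L$, replacing the potential-function argument of \Cref{thm:upper_bound_biharmonic} (which does not survive for $k>2$). \textbf{Lower bound.} For an unweighted graph every eigenvalue of $L$ is at most $n$, so every nonzero eigenvalue of $L^{k+}$ is at least $n^{-k}$ once $k\ge 0$. Since $1_s-1_t$ is orthogonal to $\ker L=\operatorname{span}(\mathbf 1)$, Courant--Fischer gives
\[
(\kharmonic{st}{k})^2=(1_s-1_t)^T L^{k+}(1_s-1_t)\ \ge\ \frac{1}{n^{k}}\,\|1_s-1_t\|^2\ =\ \frac{2}{n^{k}}.
\]

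\textbf{Upper bound.} Here I would first reduce everything to $\lambda_2$. Using the spectral decomposition $L^{k+}=\sum_{i=2}^{n}\lambda_i^{-k}x_ix_i^{T}$ with $0<\lambda_2\le\cdots\le\lambda_n$ and again $1_s-1_t\perp x_1$,
\[
(\kharmonic{st}{k})^2=\sum_{i=2}^{n}\frac{\langle x_i,1_s-1_t\rangle^{2}}{\lambda_i^{k}}\ \le\ \frac{1}{\lambda_2^{k}}\sum_{i=2}^{n}\langle x_i,1_s-1_t\rangle^{2}\ =\ \frac{2}{\lambda_2^{k}}.
\]
It then suffices to show $\lambda_2>2/n^{2}$: for $k\ge 1$ this yields $(\kharmonic{st}{k})^2\le 2(n^{2}/2)^{k}=2^{1-k}n^{2k}\le n^{2k}$. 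To bound $\lambda_2$ I would combine $\tr(L^{+})=\sum_{i\ge 2}\lambda_i^{-1}\ge\lambda_2^{-1}$ with the identity $\Rtot=\sum_{s,t\in V}R_{st}=2n\,\tr(L^{+})$, which follows by expanding $R_{st}=L^{+}_{ss}+L^{+}_{tt}-2L^{+}_{st}$ and noting the cross terms vanish because $\mathbf 1\in\ker L^{+}$. Since $R_{st}\le n-1$ for every pair of a connected unweighted graph (folklore; it is used in the proof of \Cref{thm:upper_bound_biharmonic}), we get $\Rtot\le n(n-1)^{2}<n^{3}$, hence $\tr(L^{+})<n^{2}/2$ and $\lambda_2>2/n^{2}$. (Alternatively, Fiedler's theorem that the path $P_n$ minimizes algebraic connectivity gives the sharper $\lambda_2\ge 2(1-\cos(\pi/n))\ge 4/n^{2}$, which also suffices.)

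\textbf{Where the difficulty lies.} The argument is short, and the only delicate point is the constant in the upper bound: the naive estimate $\lambda_2>1/n^{2}$ (e.g.\ via $\lambda_2\ge 1/(n\cdot\mathrm{diam}(G))$) only gives $(\kharmonic{st}{k})^2<2n^{2k}$, so the factor-two refinement of the spectral-gap bound is genuinely what does the work. I would also flag that the natural hypothesis is $k\ge 1$ (and $G$ connected, so that $\lambda_2>0$): the lower bound already breaks for $k<0$, since then the eigenvalues of $L^{k+}$ are bounded above rather than below, and the constant in the upper bound relies on $k\ge 1$; the proof above covers exactly the regime $k\ge 1$ used throughout the paper.
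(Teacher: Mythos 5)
Your proof is correct in the regime it covers, and it follows the route the paper itself gestures at: the paper states \Cref{thm:bounds_karmonic} with no proof at all, remarking only that the Courant--Fischer technique of \Cref{thm:lower_bound_biharmonic} generalizes, and your lower bound is verbatim that argument ($\lambda_i\le n$, hence the nonzero eigenvalues of $L^{k+}$ are at least $n^{-k}$, and $1_s-1_t\perp\ker L$). The upper bound is where you supply content the paper never writes down: the reduction to $2/\lambda_2^{k}$ is the expected Courant--Fischer step (the paper's remark before \Cref{thm:upper_bound_biharmonic} only asserts an $O(n^4)$ eigenvalue bound in the case $k=2$), and your spectral-gap estimate $\lambda_2>2/n^{2}$, obtained from $1/\lambda_2\le\tr(L^{+})$ together with $\Rtot$ expressed via $\tr(L^{+})$ and $R_{st}\le n-1$ (or, alternatively, Fiedler's $\lambda_2\ge 2(1-\cos(\pi/n))\ge 4/n^{2}$), is exactly the quantitative ingredient needed to land on the stated constant; whether $\Rtot$ sums ordered or unordered pairs only shifts a factor of two that your estimate absorbs either way. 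Your caveat about the hypotheses is also warranted rather than pedantic: as literally stated, for all $k\in\R$ and without connectivity, the theorem fails --- at $k=0$ one has $(\kharmonic{st}{0})^{2}=\|1_s-1_t\|^{2}=2>1=n^{0}$ for any connected graph, by continuity the upper bound also fails for sufficiently small $k>0$, and for $k<0$ the lower bound fails --- so a restriction such as $k\ge 1$ (or $k\ge 1/2$ with the Fiedler bound) together with connectedness is genuinely needed, and your write-up makes that explicit where the paper does not.
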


\section{Proofs from \Cref{sec:background}}
\label{apx:background}

\propertiesofpotentials*

\begin{proof} 
    \begin{enumerate}
        \item This follows from the definition of biharmonic distance and $st$-potentials.
        
        \item This follows from the definition of effective resistance and $st$-potentials.

        \item Recall for any vertex $v\in V$, $Lp_{st}(u) = \sum_{\{u,v\}\in E} w_{\{u,v\}}(p_{st}(u) - p_{st}(v))$. The $st$-potentials satisfy $Lp_{st} = 1_s-1_t$. As all edge weights $w_{\{u,v\}}$ are positive, this means that for any vertex $u\in V\setminus\{s\}$, as $Lp_{st}(u)\leq 0$, then $u$ must have a neighbor that is strictly larger than it on $p_{st}$; thus, $u\neq\arg\max_{v\in V}p_{st}$. The only remaining possibility is that $s = \arg\max_{v\in V} p_{st}(v)$. We can prove $t = \arg\max_{v\in V} p_{st}(v)$ using a similar argument.

        \item This follows as $\im L^{+}\perp 1$, the all-ones vector. As $p_{st}\in\im L^{+}$, then $p_{st}^{T}1 = \sum_{v\in V} p_{st}(v)=0$. \qedhere
    \end{enumerate}
\end{proof}

\propertiesofelectricalflows*

\begin{proof}
    \begin{enumerate}
        \item This follows from the following derivation: 
        \begin{align*} 
        f_{st}^{T}W^{-1}f_{st} =& (1_s-1_t)^{T}L^{+}\partial W W^{-1}W\partial^{T}L^{+}(1_s-1_t) \\
        =& (1_s-1_t)^{T}L^{+}\partial W \partial^{T}L^{+}(1_s-1_t) \\
        =& (1_s-1_t)^{T}L^{+}LL^{+}(1_s-1_t) \\
        =& (1_s-1_t)^{T}L^{+}(1_s-1_t) = R_{st}
        \end{align*}
        \item This is an example of least-norm regression and can be solved in the standard way, e.g., Lagrange multipliers (see~\citep{boyd2004convex}).
    \end{enumerate}
\end{proof}

\section{Proofs from \Cref{sec:new_formula}}
\label{apx:new_formula}

\subsection{Proof of~\Cref{thm:biharmonic_down_laplacian_diagonal}}

\biharmonicdownlaplaciandiagonal*

\begin{proof}
    We can express the left hand side as
    \begin{align*}
        w_e B_{e}^{2} =& w_e(\boundary 1_e)^{T} L^{2+} \boundary 1_e \\
        =& (\boundary W^{1/2} 1_e)^{T} L^{2+} \boundary W^{1/2}1_e \\
        =& (\wpartial 1_e)^{T} L^{2+} \wpartial 1_e \\
        =& (1_e^T \wpartial^T) \left((\wpartial^+)^T\wpartial^{+}(\wpartial^+)^T\wpartial^{+}\right) (\wpartial 1_e),
    \end{align*}
    where we use the fact that $L^{+} = (\wpartial^+)^T\wpartial^{+}$. Further, the operators $\wpartial^{T} (\wpartial^{T})^{+} = \wpartial^{+} \wpartial = \proj_{\im(\wpartial^{T})}$, where $\proj_{\im(\wpartial^{T})}$ is the projection onto $\im(\wpartial^{T})$. As $\im(\wpartial)^{+} = \im(\wpartial)^{T}$, then $\wpartial^{T} (\wpartial^{T})^{+} \wpartial^{+} = \wpartial^{+}$ and $(\wpartial^{T})^{+} \wpartial^{+} \wpartial = (\wpartial^{T})^{+}$. Therefore, we can simplify the expression for $w_eB_e^{2}$ to 
    \begin{equation*}
        B_{e}^{2} =  1_e^{T} \wpartial^{+} (\wpartial^{T})^{+}  1_{e} 
        = 1_e^{T} (\downlap)^{+}  1_e = (\downlap)^{+}_{ee}. \qedhere
    \end{equation*}
\end{proof}

\section{Proofs from~\Cref{sec:biharmonic_and_electrical_flows}}
\label{apx:biharmonic_and_electrical_flows}

\biharmonicelectricalflow*

\begin{proof}
    It will be convenient to express $f_{st}(e)^{2}/w_e$ using vector notation.
    $$
        \frac{f_{st}(e)^{2}}{w_e}
         = \frac{1}{w_e}(1_e^{T} W\boundary^{T}L^{+} (1_s-1_t))^{2}
         = (1_e^{T} W^{-1/2} W\boundary^{T}L^{+} (1_s-1_t))^{2}
        = (1_e^{T} \wpartial^{T} L^{+} (1_s-1_t))^{2}. 
    $$
    Therefore,
    \begin{align*}
        \frac{f_{st}(e)^{2}}{w_e} =& \left(1_e^{T} \wpartial^{T}L^+ (1_s-1_t)\right)^{2} \\
        =& \left(1_e^{T} \wpartial^{T}L^+ (1_s-1_t)\right)\left(1_e^{T} \wpartial^{T}L^+ (1_s-1_t)\right)^T \\
        =& 1_e^{T} \wpartial^{T}L^+ (1_s-1_t)   (1_s-1_t)^TL^+ \wpartial 1_e.
    \end{align*}
    Now, adding the sides for all pairs $s,t\in V$, we obtain
    \begin{align*}
        & \sum_{s,t\in V}{\frac{f_{st}(e)^{2}}{w_e}} \\
        =&
        \sum_{s,t\in V}{\left(1_e^{T} \wpartial^{T}L^+ (1_s-1_t)   (1_s-1_t)^TL^+ \wpartial 1_e\right)} \\
        =& 1_e^{T} \wpartial^{T}L^+ \left(\sum_{s,t\in V}{(1_s-1_t)(1_s-1_t)^T}\right)L^+ \wpartial 1_e.
    \end{align*}
The term $(1_s-1_t)(1_s-1_t)^{T}$ is the Laplacian of the graph with the single edge $\{s,t\}$. Hence, the sum $\sum_{s,t\in V}{(1_s-1_t)(1_s-1_t)^{T}}$ is the Laplacian of the complete (unweighted) graph $K_n$. 
$$
    L_{K_n} = \sum_{s,t\in V} (1_s-1_t)(1_s-1_t)^{T}.
$$
The eigenvalues of $L_{K_n}$ are all $n$, with the exception of the all-ones vector $1$, which is the eigenvector with eigenvalue $0$~\citep[Lemma 6.1.1]{spielman2019sagt}. However, as $1\perp\im L^{+}$, then $L^{+}\wpartial 1_e$ is an eigenvector of $L_{K_n}$ with eigenvalue $n$. Therefore, 
\begin{align*}
\sum_{s,t\in V}{\frac{f_{st}(e)^{2}}{w_e}} =&
1_e^{T} \wpartial^{T}L^+ L_{K_n}L^+ \wpartial 1_e \\ =&
n\cdot 1_e^{T} \wpartial^{T}L^+ L^+ \wpartial 1_e \\ =&
n\cdot 1_e^{T} \partial^{T}W^{-1/2}L^+ L^+ \partial W^{-1/2}1_e \\ =&
w_e\cdot n  \cdot 1_e^{T} \partial^{T}L^+ L^+ \partial 1_e \\ =&
w_e\cdot n \cdot B_e. \qedhere
\end{align*}
\end{proof}

\section{Proofs from~\Cref{sec:high_biharm_and_sparse_cuts}}
\label{apx:high_biharm_and_sparse_cuts}

\Cref{lem:flow_on_edges_crossing_cut} is a useful property of cuts and flows that will be important for proving the theorems in this section.

\begin{lemma}
\label{lem:flow_on_edges_crossing_cut}
    Let $G=(V,E)$ be an unweighted graph. Let $s,t\in V$. Let $f:E\to\R$ such that $\boundary f=1_s-1_t$. Let $S\subset V$ such that $s\in S$ and $t\notin S$. Then 
    $$
        \sum_{e\in E(S,V\setminus S)} |f(e)| \geq 1
    $$
    Moreover, if $|E(S,V\setminus S)|=1$, then $\sum_{e\in E(S,V\setminus S)} |f(e)|=1$. 
\end{lemma}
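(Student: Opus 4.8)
The plan is to sum the flow-conservation (divergence) constraint $\partial f = 1_s - 1_t$ over all vertices in $S$, so that the internal edges of $S$ cancel and only the cut edges survive. Concretely, for a vertex $v$, the $v$-th coordinate of $\partial f$ is the signed sum $\sum_{e \ni v} \pm f(e)$, where the sign is $+1$ if $e$ is oriented out of $v$ and $-1$ if into $v$ (according to the fixed orientations used to define $\partial$). Adding these over $v\in S$: every edge with both endpoints in $S$ contributes $+f(e)$ at one endpoint and $-f(e)$ at the other, hence cancels, while every edge $e\in E(S,V\setminus S)$ contributes exactly once, with a sign $\sigma_e\in\{+1,-1\}$ determined by whether its fixed orientation points out of or into $S$. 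Therefore
\[
\sum_{e\in E(S,V\setminus S)} \sigma_e f(e) \;=\; \sum_{v\in S}(\partial f)(v) \;=\; \sum_{v\in S}(1_s - 1_t)(v).
\]

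First I would evaluate the right-hand side: since $s\in S$ and $t\notin S$, this sum is $1 - 0 = 1$. So the signed sum of $f$ over the cut edges equals $1$. The first claim then follows immediately from the triangle inequality,
\[
\sum_{e\in E(S,V\setminus S)} |f(e)| \;\geq\; \Bigl|\sum_{e\in E(S,V\setminus S)} \sigma_e f(e)\Bigr| \;=\; 1.
\]
For the ``moreover'' part, if $E(S,V\setminus S) = \{e_0\}$ is a single edge, then the signed sum has just one term, so $\sigma_{e_0} f(e_0) = 1$, and hence $\sum_{e\in E(S,V\setminus S)} |f(e)| = |f(e_0)| = |\sigma_{e_0} f(e_0)| = 1$ exactly.

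The only delicate point — and the one I would write out most carefully — is bookkeeping the orientation signs: verifying that an edge internal to $S$ truly cancels (its two endpoint contributions have opposite signs regardless of its fixed orientation) and that a cut edge appears exactly once with a well-defined sign. This is routine but is where an off-by-sign slip could creep in; everything else is a one-line telescoping sum plus the triangle inequality. No harder tools (pseudoinverses, spectral facts) are needed here, since the statement is purely about an arbitrary flow $f$ satisfying the boundary condition, not specifically the electrical flow.
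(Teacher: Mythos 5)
Your proposal is correct and follows essentially the same argument as the paper: sum the divergence constraint $\partial f = 1_s - 1_t$ over $v\in S$, note that internal edges cancel so only cut edges survive with signs, evaluate the right-hand side to $1$, and conclude via the triangle inequality (with the single-cut-edge case giving equality). No gaps; the sign bookkeeping you flag is handled the same way in the paper's proof.
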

\begin{proof}
    Consider the sum $\sum_{v\in S}\boundary f(v)$. As $\boundary f(v)=0$ for any $v\in S\setminus\{s\}$, then $\sum_{v\in S}\boundary f(v) = \boundary f(s) =1$. Moreover, recall that the boundary map $\boundary$ is obtained by choosing an (arbitrary) order $(u,v)$ for each edge $\{u,v\}\in E$. Therefore, we can write $\boundary f(v)$ using this ordering on the edges as $\boundary f(v)=\sum_{(u,v)\in E} f(\{u,v\}) - \sum_{(v,u)\in E} f(\{v,u\})$. Therefore, if we write $\sum_{v\in S}\boundary f(v)$ in terms of the value of $f$ on the edges, then for any edge $(u,v)$ with $u,v\in V$, the terms $f(\{u,v\})$ and $-f(\{v,u\})$ cancel, and we are left with
    \begin{align*}
        1 =& \sum_{v\in S}\boundary f(v) \\
        =& \sum_{v\in S} \left(\sum_{(u,v)\in E} f(\{u,v\}) - \sum_{(v,u)\in E}f(\{v,u\})\right)\\
        =& \sum_{e\in E(S,V\setminus S)} (\pm 1)f(e)
    \end{align*}
    where the $\pm 1$ term depends on the order of the edge. Therefore, the absolute value $1\leq \sum_{e\in E(S,V\setminus S)} |f(e)|$. Moreover, if there is a single edge in $E(S,V\setminus S)$, we find that $1= \sum_{e\in E(S,V\setminus S)} |f(e)|$.
\end{proof}

\biharmoniccutedgesparsity*

\begin{proof}
    By~\Cref{thm:biharmonic_distance_is_squared_electrical_flow}, we know that $n\cdot B_{e}^{2}=\sum_{s,t\in V} f_{st}(e)^{2}$. We can decompose the right-hand side into three sums: $\sum_{s\in S}\sum_{t\in T} f_{st}(e)^{2}$, $\sum_{s, t\in S} f_{st}(e)^{2}$, and $\sum_{s,t\in T} f_{st}(e)^{2}$.
    \par 
    For the first sum $\sum_{s\in V}\sum_{t\in T} f_{st}(e)^{2}$, we know by~\Cref{lem:flow_on_edges_crossing_cut} that $|f_{st}(e)|=1$ as $|E(S,T)|=1$. Therefore, $\sum_{s\in V}\sum_{t\in T} f_{st}(e)^{2} = |S||T|$.
    \par 
    For the second and third sums, we will reuse some ideas from the proof of~\Cref{lem:flow_on_edges_crossing_cut}. First, observe that if $s,t\in S$ or $s,t\notin S$, then $\sum_{v\in S} \boundary f(s) =0$. Therefore, $\sum_{v\in S}\boundary f(v) = \sum_{e\in E(S,T)} (\pm 1)f(e)=0$. As $|E(S,T)|=1$, then we conclude that $f(e)=0$ for $e\in |E(S,T)|$. Therefore, $\sum_{s, t\in S} f_{st}(e)^{2}=\sum_{s,t\in T} f_{st}(e)^{2}=0$.
    \par
    Therefore, $n\cdot B_{e}^{2}=\sum_{s,t\in V} f_{st}(e)^{2}=|S||T|$ and the theorem follows.
\end{proof}

\sparsecutimplieslargebiharmonic*

\begin{proof}
    Fix a pair of vertices $s\in S$ and $t\in V\setminus S$. Consider the electrical flow $f_{st}$. By \Cref{lem:flow_on_edges_crossing_cut}, we know that $f_{st}$ must send at least 1 unit of flow along the cut edges $E(S, V\setminus S)$, i.e. 
    $$\sum_{e\in E(S, V\setminus S)} | f_{st}(e) | \geq 1.$$
    The left-hand side is the 1-norm on a vector, so we can apply the Cauchy-Schwarz inequality to bound 
    \begin{align*}
    & \sqrt{\sum_{e\in E(S, V\setminus S)} f^{2}_{st}(e)} \\
    \geq & \frac{1}{\sqrt{|E(S, V\setminus S)|}}\sum_{e\in E(S, V\setminus S)} |f_{st}(e)|  \\
    \geq & \frac{1}{\sqrt{|E(S, V\setminus S)|}},
    \end{align*}
    or equivalently,
    $$
    \sum_{e\in E(S, V\setminus S)} f^{2}_{st}(e) \geq \frac{1}{|E(S, V\setminus S)|}.
    $$
    We can combine this with Theorem \ref{thm:biharmonic_distance_is_squared_electrical_flow} to get the following inequality.
    \begin{align*}
         \sum_{e\in E(S, V\setminus S)} B_{e}^{2} = & \frac{1}{n}\sum_{e\in E(S, V\setminus S)}\sum_{s,t\in V} f_{st}(e)^{2} \\
         \geq & \frac{1}{n}\sum_{e\in E(S, V\setminus S)}\sum_{s\in S}\sum_{t\in V\setminus S} f_{st}(e)^{2} \\
         \geq & \frac{1}{n}\sum_{s\in S}\sum_{t\in V\setminus S} \frac{1}{|E(S, V\setminus S)|} \\
         \geq & \frac{|S||V\setminus S|}{n|E(S, V\setminus S)|} = \Theta(S)^{-1} \qedhere
    \end{align*}
\end{proof}

To prove the converse, we use Cheeger's Inequality (\Cref{lem:cheeger}), that, for any vector $x$, proves the existence of a cut whose sparsity is upper-bounded by the Rayleigh quotient of $x$ with the Laplacian. 

\begin{lemma}[Cheeger's Inequality, e.g.~{\cite{chung1997spectral}}]
\label{lem:cheeger}
    Let $G=(V,E)$ be a graph. Let $x:V\to\R$. Let $\dmax$ be the maximum degree of a vertex in $G$. Then there is a subset of vertices $S\subset V$ such that 
    $$
        \frac{x^{T}Lx}{x^{T}x} \in \Omega(\Theta(S)^{2}\dmax^{-1}).
    $$ 
    Moreover, $S=\{v\in V: x(v)\geq t\}$ for some $t\in\R$ 
\end{lemma}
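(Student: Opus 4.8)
The plan is to prove this as the constructive (``hard'') direction of Cheeger's inequality via a \emph{sweep-cut} (threshold-rounding) argument. First I would reduce the balanced isoperimetric ratio $\Theta(S)=n|E(S,V\setminus S)|/(|S||V\setminus S|)$ to a one-sided cut ratio. Translating $x$ by a constant leaves the Dirichlet energy $x^\top L x=\sum_{\{u,v\}\in E}(x_u-x_v)^2$ unchanged, so I would shift so that $0$ is a median of the entries of $x$; for the mean-zero vectors to which the lemma is applied (e.g.\ $x=p_{st}$, by \Cref{lem:properties_of_potentials}) this shift does not decrease $\|x\|^2$ and hence does not increase the Rayleigh quotient. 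After this shift, split $x=x^{+}-x^{-}$ into positive and negative parts. Each of $x^{+},x^{-}$ is supported on at most $n/2$ vertices, so any threshold set $S$ contained in such a support has $|V\setminus S|\geq n/2$ and therefore $\Theta(S)\leq 2|E(S,V\setminus S)|/|S|$; this is exactly the quantity that sweep cuts control.

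Next I would apply the probabilistic sweep to whichever of $y\in\{x^{+},x^{-}\}$ has the smaller local Rayleigh quotient $R_y:=\sum_{\{u,v\}}(y_u-y_v)^2/\|y\|^2$; by the mediant inequality together with the pointwise bound $(x^{+}_u-x^{+}_v)^2+(x^{-}_u-x^{-}_v)^2\leq (x_u-x_v)^2$, this minimum is at most $x^\top L x/\|x\|^2$. Scaling $y$ so that $\max_v y_v=1$, sample a threshold $t$ on $[0,1]$ with density $2t$ (equivalently, sample $t^2$ uniformly) and set $S_t=\{v:y_v>t\}$. Then $\mathbb{E}[|S_t|]=\sum_v y_v^2=\|y\|^2$, and an edge $\{u,v\}$ with $y_u\geq y_v$ is cut with probability $y_u^2-y_v^2$, so $\mathbb{E}[|E(S_t,V\setminus S_t)|]=\sum_{\{u,v\}}|y_u^2-y_v^2|$. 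The crux is to bound this expected cut: by Cauchy--Schwarz, $\sum|y_u-y_v|(y_u+y_v)\leq\sqrt{\sum(y_u-y_v)^2}\,\sqrt{\sum(y_u+y_v)^2}$, and here $\sum_{\{u,v\}}(y_u+y_v)^2\leq 2\sum_v\deg(v)\,y_v^2\leq 2\dmax\|y\|^2$, which is precisely where the $\dmax$ factor enters.

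Combining the two estimates, since $\mathbb{E}[\mathrm{cut}-\rho|S_t|]\leq 0$ for $\rho=\mathbb{E}[\mathrm{cut}]/\mathbb{E}[|S_t|]$ forces some outcome below its mean, there is a threshold with $|E(S_t,V\setminus S_t)|/|S_t|\leq\sqrt{2\dmax\,R_y}$. Feeding this through the reduction gives $\Theta(S)\leq 2\sqrt{2\dmax\,R_y}\leq 2\sqrt{2}\sqrt{\dmax\,x^\top L x/\|x\|^2}$, i.e.\ $x^\top L x/\|x\|^2\geq\tfrac{1}{8}\dmax^{-1}\Theta(S)^2\in\Omega(\Theta(S)^2\dmax^{-1})$, as claimed. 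Finally, a threshold set of $x^{+}$ has the form $\{v:x(v)\geq t\}$, while a threshold set of $x^{-}$ is $\{v:x(v)\leq -t\}$, whose complement $\{v:x(v)\geq -t\}$ has the same isoperimetric ratio; so $S$ or its complement has the required upper-threshold form. The main obstacle is the expected-cut estimate --- choosing the Cauchy--Schwarz pairing so the $y_u+y_v$ factor can be charged to vertex degrees and yield the $\sqrt{\dmax}$ loss --- together with the bookkeeping, via the median shift, that reduces the balanced ratio $\Theta$ to the one-sided cut ratio.
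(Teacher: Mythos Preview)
The paper does not supply its own proof of this lemma; it is quoted from \cite{chung1997spectral} and invoked as a black box in the proof of \Cref{thm:large_biharmonic_implies_sparse_cut}. Your proposal is the standard sweep-cut (threshold-rounding) argument for the hard direction of Cheeger's inequality, and it is correct. Your restriction to mean-zero $x$ is also appropriate: as literally stated the lemma fails for constant $x$ (the Rayleigh quotient vanishes while $\Theta(S)>0$ for every proper subset of a connected graph), and in the paper's sole application the vector is $p_{st}$, which is mean-zero by \Cref{lem:properties_of_potentials}.
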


\begin{restatable}{theorem}{largebiharmonicimpliessparsecut}
\label{thm:large_biharmonic_implies_sparse_cut}
    Let $G=(V,E)$ be an unweighted graph. Let $s,t\in V$. Then there is a subset $S\subset V$ such that $s\in S$, $t\in V\setminus S$, and 
    $$
        B_{st}^{2}/R_{st}\in O(d_{\max}\Theta(S)^{-2}).
    $$
\end{restatable}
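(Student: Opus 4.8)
The plan is to express the ratio $R_{st}/B_{st}^2$ as the Rayleigh quotient $p_{st}^T L p_{st} / (p_{st}^T p_{st})$ of the $st$-potential $p_{st} = L^+(1_s - 1_t)$, and then feed the vector $p_{st}$ into Cheeger's inequality (\Cref{lem:cheeger}) to produce the cut $S$.

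First I would establish the Rayleigh-quotient identity. Using $L p_{st} = 1_s - 1_t$ together with \Cref{lem:properties_of_potentials}, the numerator is $(1_s - 1_t)^T p_{st} = p_{st}(s) - p_{st}(t) = R_{st}$ (Item 2) and the denominator is $\|p_{st}\|^2 = B_{st}^2$ (Item 1), so $p_{st}^T L p_{st} / (p_{st}^T p_{st}) = R_{st}/B_{st}^2$. I would note that $p_{st}$ is non-constant whenever $s \ne t$ lie in a common component (since then $R_{st} > 0$), so Cheeger's inequality applies non-trivially; and if $s,t$ lie in different components, one can simply take $S$ to be the component of $s$, for which $\Theta(S) = 0$ and the claimed bound holds vacuously. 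Feeding $x = p_{st}$ into \Cref{lem:cheeger} then yields a threshold set $S = \{v : p_{st}(v) \ge \tau\}$ with $R_{st}/B_{st}^2 \in \Omega(\Theta(S)^2 \dmax^{-1})$, which rearranges to the desired $B_{st}^2/R_{st} \in O(\dmax\, \Theta(S)^{-2})$.

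The part that actually requires an argument — though I expect it to be short — is verifying that the Cheeger cut $S$ separates $s$ from $t$, as the theorem demands. The key observation is that $s$ and $t$ are, respectively, the maximizer and minimizer of $p_{st}$ over $V$ (\Cref{lem:properties_of_potentials}, Item 3). Since $\Theta(S)$ is defined only when $\emptyset \ne S \ne V$, the set returned by \Cref{lem:cheeger} is non-degenerate; then $S \ne \emptyset$ forces the threshold to satisfy $\tau \le \max_v p_{st}(v) = p_{st}(s)$, hence $s \in S$, while $S \ne V$ forces $\tau > \min_v p_{st}(v) = p_{st}(t)$, hence $t \notin S$ (ties at the extremes cause no trouble, since $s$ and $t$ still attain the max and min). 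This is the only place the structure of $p_{st}$ beyond the Rayleigh-quotient computation is used, and it is where I would be most careful in the writeup. Finally, I would remark that the main-body statement \Cref{thm:large_biharmonic_edge_implies_sparse_cut} follows immediately, since $R_e \le 1$ in an unweighted graph gives $B_e^2 \le B_e^2/R_e \in O(\dmax\, \Theta(S)^{-2})$.
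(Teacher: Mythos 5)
Your proposal is correct and follows essentially the same route as the paper: apply Cheeger's inequality (\Cref{lem:cheeger}) to the $st$-potential $p_{st}=L^{+}(1_s-1_t)$, identify the Rayleigh quotient $p_{st}^{T}Lp_{st}/p_{st}^{T}p_{st}$ with $R_{st}/B_{st}^{2}$, and use that $s$ and $t$ attain the maximum and minimum of $p_{st}$ so the superlevel-set cut separates them. Your treatment of the threshold/tie cases and the disconnected case is in fact slightly more careful than the paper's own writeup, but the argument is the same.
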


\begin{proof}
    Consider the $st$-potentials $p_{st} = L^{+}(1_s-1_t)$. By \Cref{lem:cheeger}, we can find a cut $S$ with sparsity $\frac{p_{st}^{T}Lp_{st}}{p_{st}^{T}p_{st}} \in \Omega(\Theta(S)^{2}/\dmax)$. The denominator $p_{st}^{T}p_{st} = B_{st}^{2}$. To bound the numerator, we recall the identity $L^{+}=L^{+}LL^{+}$; then, the numerator is just the effective resistance:
    \begin{align*}
        p_{st}Lp_{st} =& (1_s-1_t)^{T} L^{+}LL^{+} (1_s-1_t) \\
        =& (1_s-1_t)^{T}L^{+}(1_s-1_t) \\
        =& R_{st}
    \end{align*}
    This implies that $\frac{R_{st}}{B_{st}^{2}}\in \Omega(\Theta(S)^{2}/\dmax)$.
    \par
    Finally,~\Cref{lem:properties_of_potentials} shows that the maximum and minimum values of $p_{st}$ are achieved at $s$ and $t$. As $S$ is a superlevel set of $p_{st}$, then $s\in S$ and $t\in V\setminus S$.
\end{proof}

\largebiharmonicedgeimpliessparsecut*

\begin{proof}
    This follows from \Cref{thm:large_biharmonic_implies_sparse_cut} by observing that $R_{st}\leq 1$ for an edge $\{s,t\}$.
\end{proof}

\section{Proofs from~\Cref{sec:karmonic}}
\label{apx:karmonic}

\generalizedbiharmonicstpairs*

\begin{proof}
    For any $s,t\in V$, we have
    \begin{align*}
        ((\wpartial& 1_{e})^T (L^+)^k (1_s - 1_t))^2 \\
        =& (\wpartial 1_{e})^T (L^+)^k (1_s - 1_t)(1_s - 1_t)^T (L^+)^k (\wpartial& 1_{e}).
    \end{align*}
    Since the sum $\sum_{s,t\in V}{(1_s-1_t)(1_s-1_t)^{T}}$ is the Laplacian of the complete (unweighted) graph $K_n$,
    $$
        L_{K_n} = \sum_{s,t\in V} (1_s-1_t)(1_s-1_t)^{T},
    $$
    we obtain,
    \begin{align*}
        \sum_{s.v\in V}&{\left((\wpartial 1_{e})^T (L^+)^k (1_s - 1_t))\right)^2} \\
        =& (\wpartial 1_{e})^T (L^+)^{k} L_{K_n} (L^+)^{k} (\wpartial 1_{e}) \\
        =& n\cdot (\wpartial 1_{e})^T (L^+)^{2k} (\wpartial 1_{e})^T.
    \end{align*}
    The last inequality holds because the eigenvalues of $L_{K_n}$ are all $n$, with the exception of the all-$1$s vector denoted $1$, which is the eigenvector with eigenvalue $0$. However, as $1\perp\im (L^{+})^k$, then $(L^{+})^k(1_u - 1_v)$ is an eigenvector of $L_{K_n}$ if $k>0$  with eigenvalue $n$.  The same fact is true for the case that $k=0$, as $1\perp (1_u - 1_v)$.

    Finally, since $\wpartial 1_e = \sqrt{w_e}\cdot \partial 1_e$, we obtain
    \[
    n\cdot (\wpartial 1_e)^T (L^+)^{2k} \wpartial 1_e = 
    n\cdot w_e \cdot (\kharmonic{e}{2k})^2 \qedhere
    \]
\end{proof}

\generalizedbiharmonicedges*

\begin{proof}
    For any $e\in E$, we have
    \begin{align*}
         & (1_{e}^T \wpartial^T (L^+)^k (1_s - 1_t))^2 \\
        =& (1_{e}^T \wpartial^T (L^+)^k (1_s - 1_t))^T(1_{e}^T \cdot \wpartial^T (L^+)^k (1_s - 1_t)) \\
        =& (1_s - 1_t)^T (L^+)^k \wpartial 1_{e} 1_{e}^T \wpartial^T (L^+)^k (1_s - 1_t).
    \end{align*}
    Since the sum $\sum_{e\in E}{\wpartial 1_{e} 1_{e}^T \wpartial^T}=L$, the Laplacian of $G$, then
    \begin{align*}
        \sum_{e\in E}&{\left((\wpartial 1_{e})^T (L^+)^k (1_s - 1_t)\right)^2} \\
        =& (1_s - 1_t)^T (L^+)^k L (L^+)^k (1_s - 1_t) \\
        =& (1_s - 1_t)^T (L^+)^{k-1} L^+ (L^+)^{k-1} (1_s - 1_t) \\
        =& (1_s - 1_t)^T (L^+)^{2k-1} (1_s - 1_t) \\
        =& (\kharmonic{st}{2k-1})^2 \qedhere
    \end{align*}
\end{proof}

\generalfoster*

\begin{proof}
    Consider
    \[
    S := \sum_{e\in E}\sum_{s,t\in V}{\left(1_{e}^T\wpartial^T (L^+)^k (1_s - 1_t)\right)^2}
    \]
    By \Cref{thm:generalized_biharmonic_st_pairs},
    \[
    S = n\cdot \sum_{e\in E} w_{e}\cdot (\kharmonic{e}{2k})^2 .
    \]
    By \Cref{thm:generalized_biharmonic_edges}, when flipping the sums in $S$,
    \begin{align*}
    S &= \sum_{s,t\in V}\sum_{e\in E}{\left(1_{e}^T\wpartial^T (L^+)^k (1_s - 1_t)\right)^2} \\
    &= \sum_{s,t\in V} (\kharmonic{st}{2k-1})^2 . 
    \end{align*}
    Therefore,
    \[
    \sum_{s,t\in V} (\kharmonic{st}{2k-1})^2 = S = n\cdot \sum_{e\in E} w_{e}\cdot (\kharmonic{e}{2k})^2 \qedhere
    \]
\end{proof}

\section{An Alternative Interpretation of the \textit{k}-Harmonic \textit{k}-Means Algorithm}
\label{apx:interpretation_spectral_clustering}

In addition to the connection between biharmonic distance and sparse cuts (\Cref{thm:large_biharmonic_edge_implies_sparse_cut,thm:sparse_cut_implies_large_biharmonic}), another potential explanation for the success of the $k$-harmonic distance for clustering is its connection to the spectral clustering algorithm~\citep{SheMalik2000SpecClust,ng2001spectral} and the higher-order Cheeger inequality~\citep{louis2012many,LeeGharanTrevisan2014MultiwaySpectral}. As mentioned in \Cref{sec:low_rank}, the spectral clustering and $k$-harmonic $k$-means algorithms have some similarities. The spectral clustering algorithms performs $k$-means clustering on the embedding of the vertices using the first $k$ eigenvectors of the Laplacian. In contrast, our $k$-harmonic $k$-means performs $k$-means clustering on the embedding of the vertices using \textit{all} eigenvectors of the Laplacian, except with the eigenvectors scaled by the inverse polynomial of their eigenvalue $\lambda_i^{-k/2}$. Therefore, the eigenvectors that most contribute to the distribution of the vertices are the ones with the smallest eigenvalues (as these have the largest coefficient $\lambda_i^{-k/2}$.) Moreover, the larger the value of $k$, the more the embedding is weighted towards the eigenvectors with the smallest eigenvalues.
\par 
Both algorithms embed the vertices using the eigenvectors of the Laplacian and give particular preference to the small eigenvectors in different ways, either by dropping the larger eigenvectors (in the case of spectral clustering) or by weighting the smaller eigenvector more heavily (in the case of $k$-harmonic $k$-means.) This idea of using the smallest eigenvectors to cluster the graph is also supported theoretically. The higher-order Cheeger inequality \citep{louis2012many,LeeGharanTrevisan2014MultiwaySpectral} proves that the ability to partition a graph into $k$ subsets of small isoperimetric ratio is upper bounded by the $k$th smallest eigenvector. Moreover, the algorithm to find this partition uses a projection of the $k$ smallest eigenvectors. This reinforce the idea of clustering based on the smallest eigenvectors.
\par 
The connection between spectral clustering and the effective resistance has also been posed as an explanation for the success of spectral clustering; see \citep[Section 6.2]{vonluxburg2007tutorial}.  

\section{Empirical Analysis of Time to Compute Biharmonic Distance}
\label{apx:time_comparision}

\begin{figure}[h]
    \centering
    \includegraphics[width=2in]{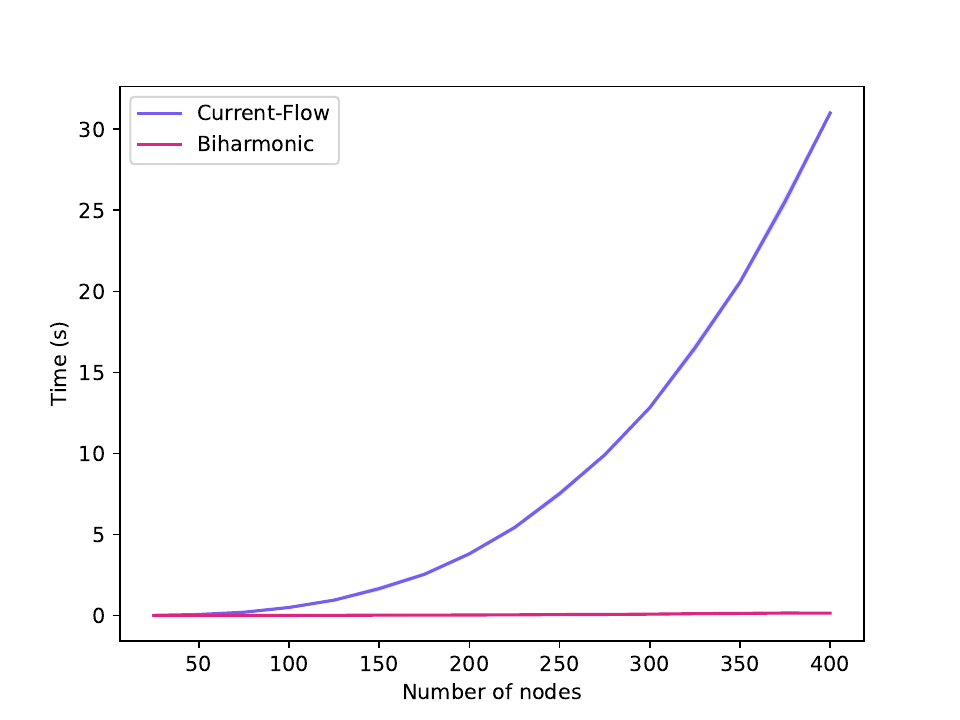}
    \caption{Time needed to compute biharmonic distance or current-flow centrality for all edges vs.~number of vertices in the graph.}
    \label{fig:enter-label}
\end{figure}

We compare the time needed to compute the biharmonic distance for all edges in a graph to the time needed to compute the current-flow centrality for all edges. We compare our implementation of the algorithm to compute biharmonic distance against the \small{\texttt{edge\_current\_flow\_betweenness\_centrality}} method from the NetworkX package~\cite{hagberg2008networkx}. We use the naive $O(n^{3})$ time algorithm for biharmonic distance; see~\Cref{sec:centrality}. We tested our method on Erdos-Renyi graphs with $n=25,50,...,400$ vertices and edge probability $p=0.5$. Results are averaged over 5 trials. Experiments were run on a 2018 MacBook Pro with a 2.3 GHz Quad-Core Intel Core i5 processor.
\par 
We found that our implementation was significantly faster than the NetworkX algorithm for current-flow centrality. While the difference in running times could be attributed to one of many differences in the implementation of the two algorithms, we feel reasonably confident this is a fair comparison as our implementation is a naive implementation with no significant optimizations. 

\newpage

\section{Figures}
\label{apx:examples}

\begin{figure}[ht]
    \centering
    \includegraphics[width=0.24\linewidth]{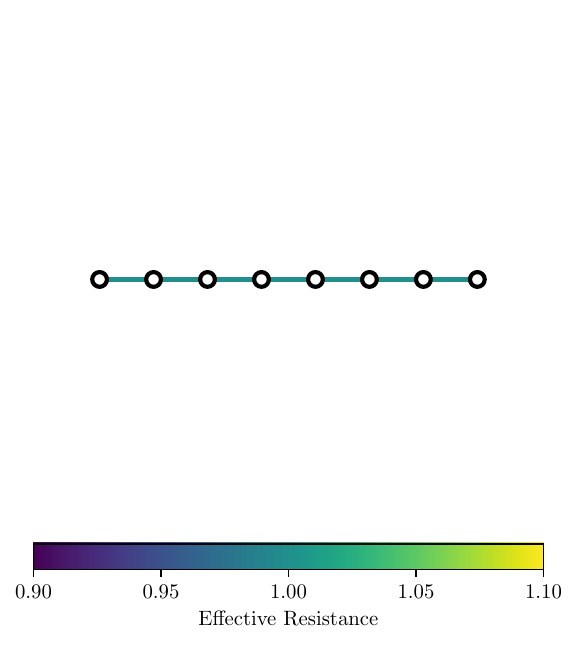}
    \includegraphics[width=0.24\linewidth]{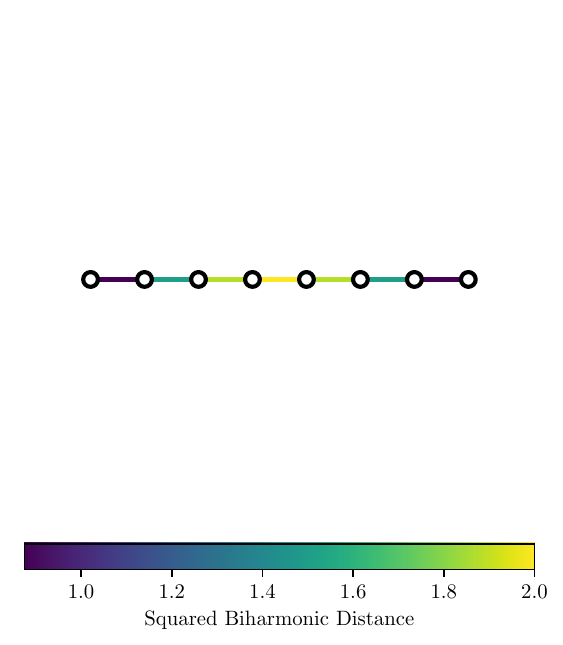}
    \includegraphics[width=0.24\linewidth]{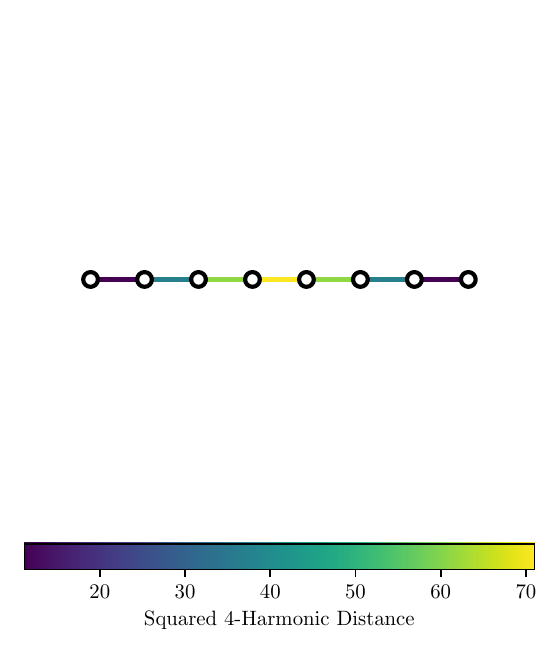}
    \includegraphics[width=0.24\linewidth]{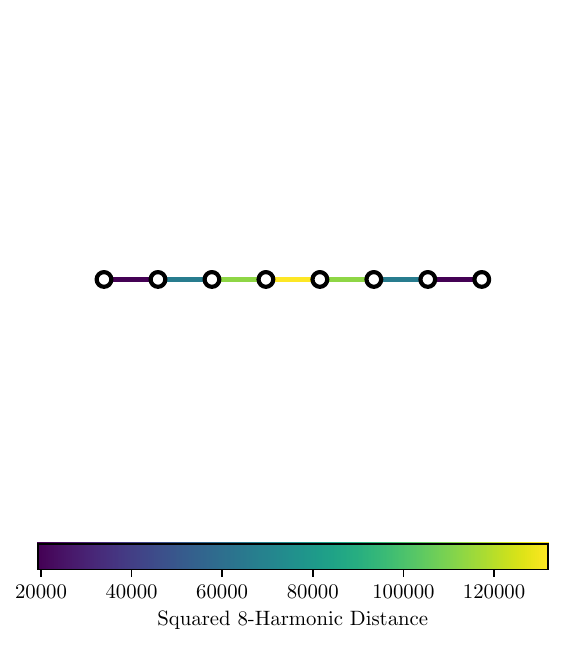}
                
    \includegraphics[width=0.24\linewidth]{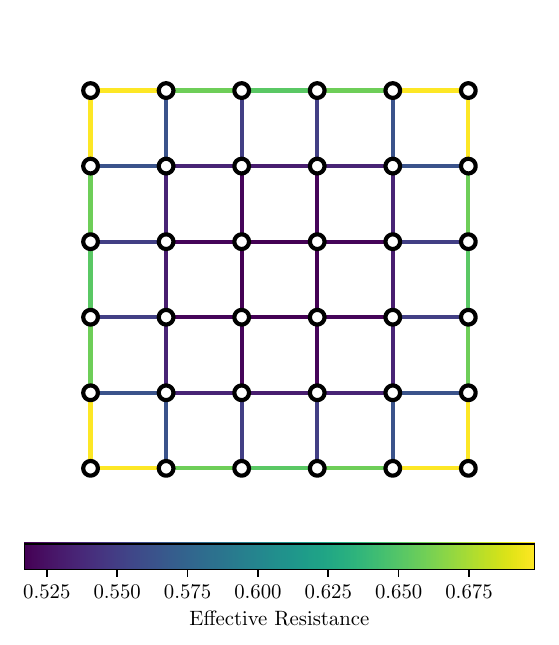}
    \includegraphics[width=0.24\linewidth]{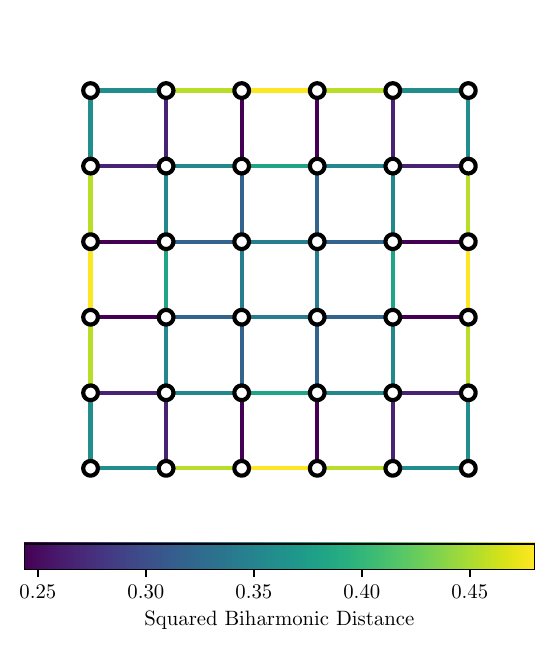}
    \includegraphics[width=0.24\linewidth]{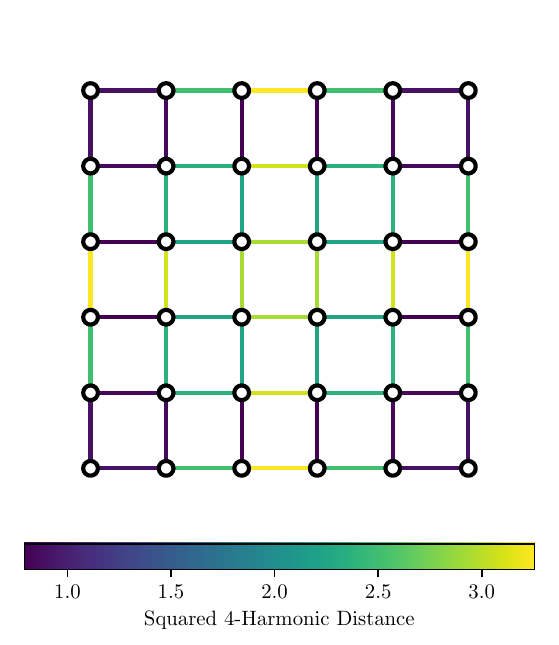}
    \includegraphics[width=0.24\linewidth]{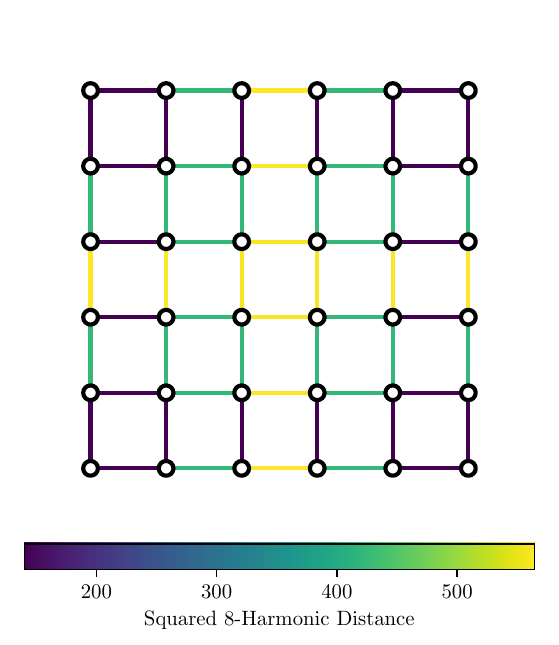}

    \includegraphics[width=0.24\linewidth]{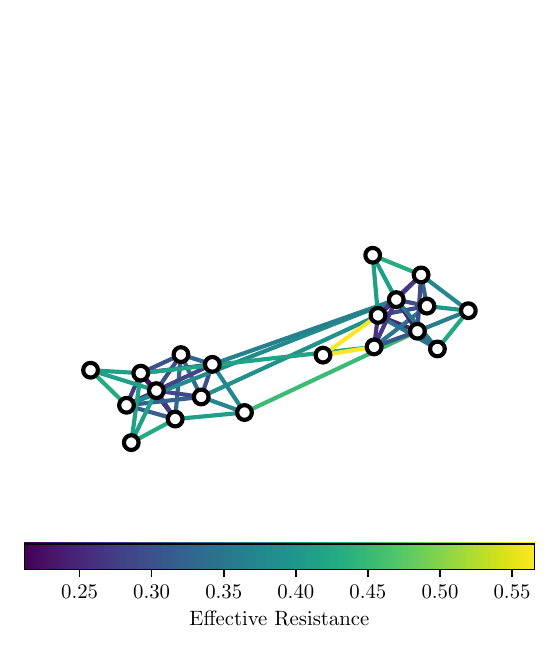}
    \includegraphics[width=0.24\linewidth]{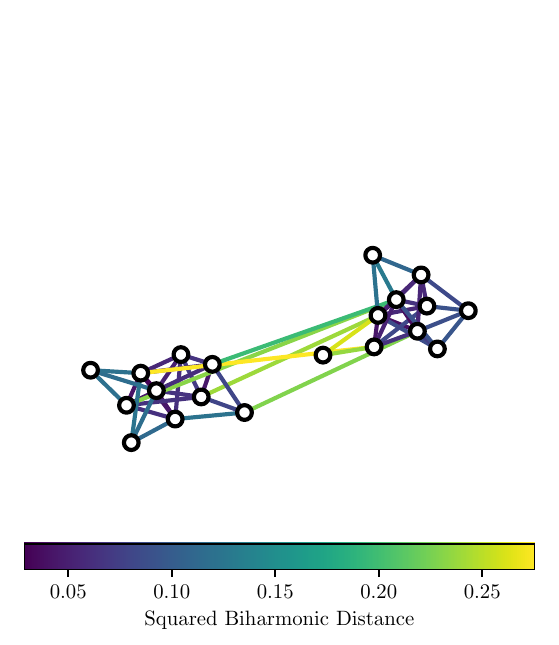}
    \includegraphics[width=0.24\linewidth]{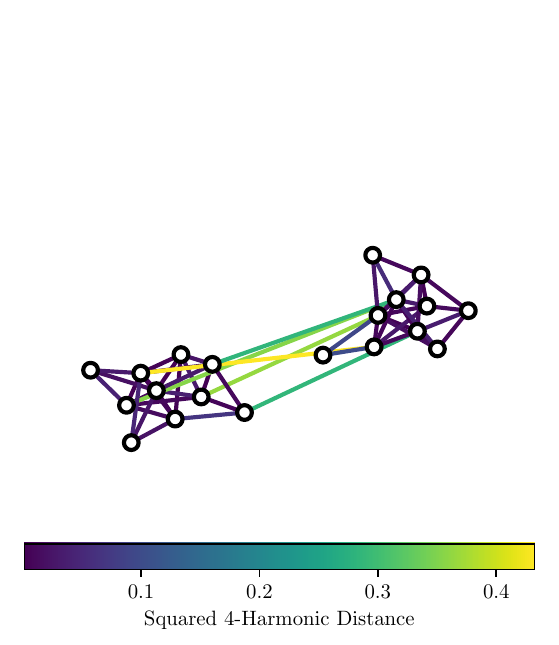}
    \includegraphics[width=0.24\linewidth]{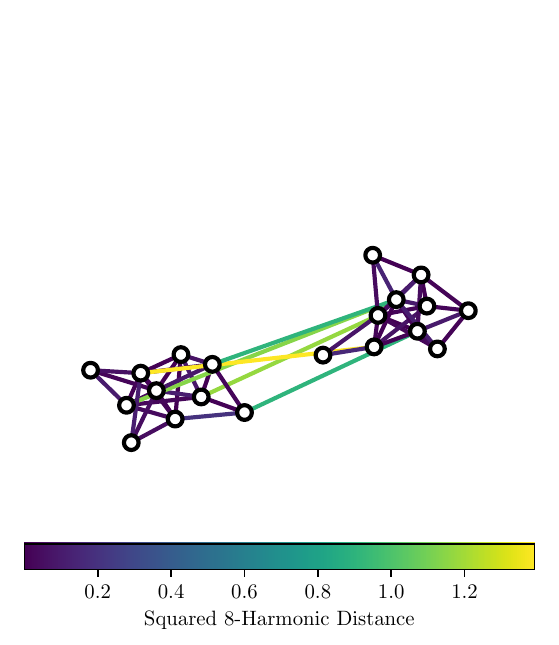}
    
    \caption{Examples of the squared $k$-harmonic distance for different graphs.}
    \label{fig:examples_karmonic}
\end{figure}

\newpage

\section{Results from Centrality Experiments}
\label{sec:centrality_experiments}

\newcommand{\mywidth}{1.5in}
\begin{figure}[H]
    \centering
    \begin{subfigure}
        \centering
        \includegraphics[width=\mywidth]{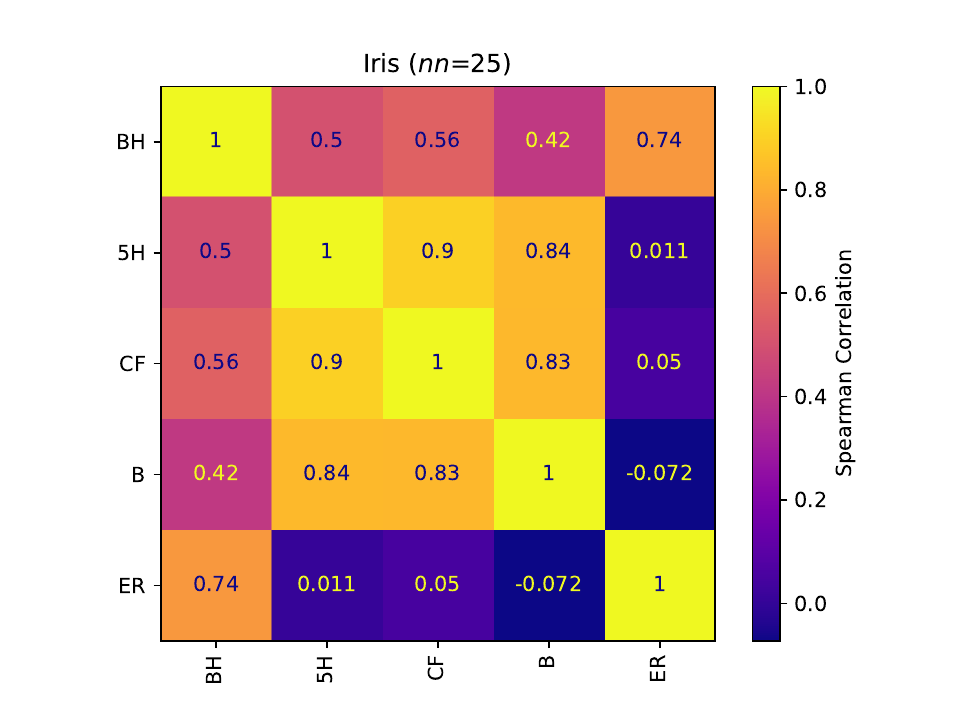}
    \end{subfigure}
    \begin{subfigure}
        \centering
        \includegraphics[width=\mywidth]{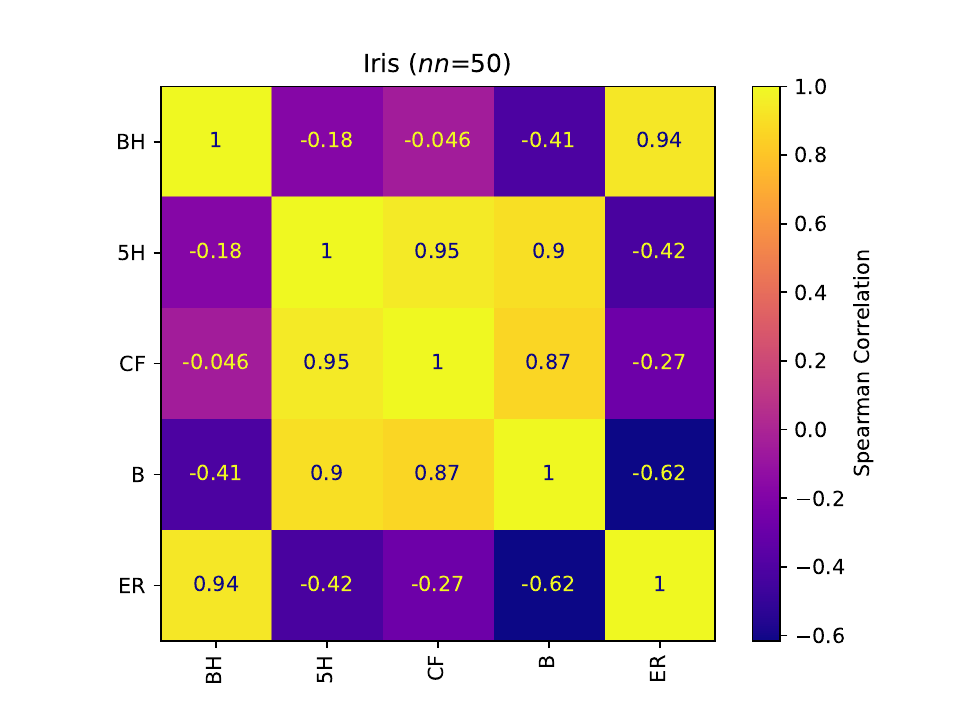}
    \end{subfigure}
    \begin{subfigure}
        \centering
        \includegraphics[width=\mywidth]{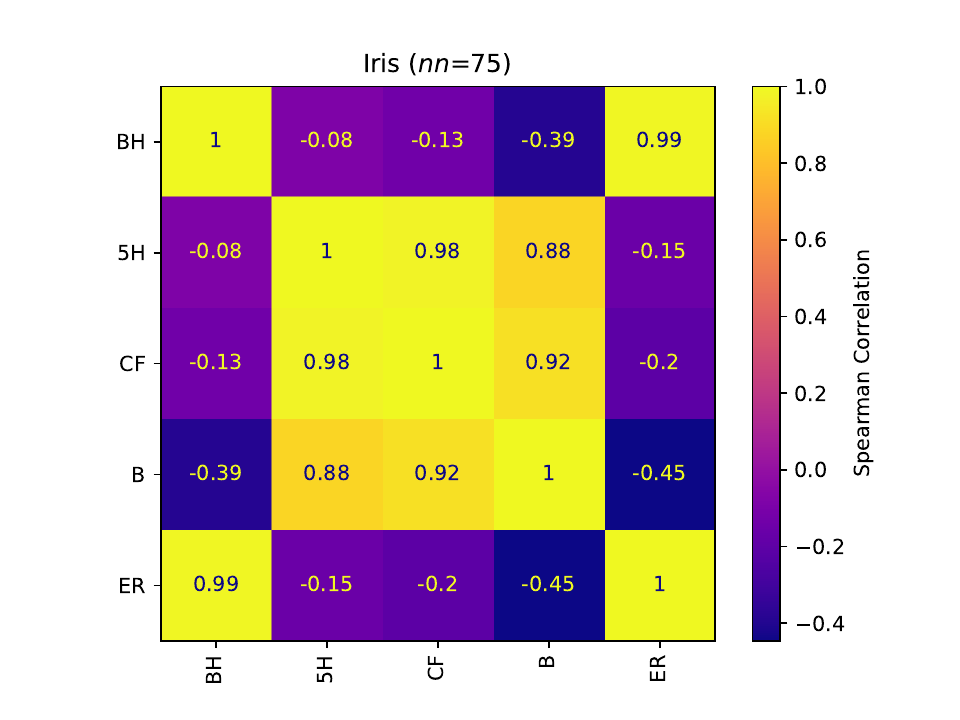}
    \end{subfigure}
    \begin{subfigure}
        \centering
        \includegraphics[width=\mywidth]{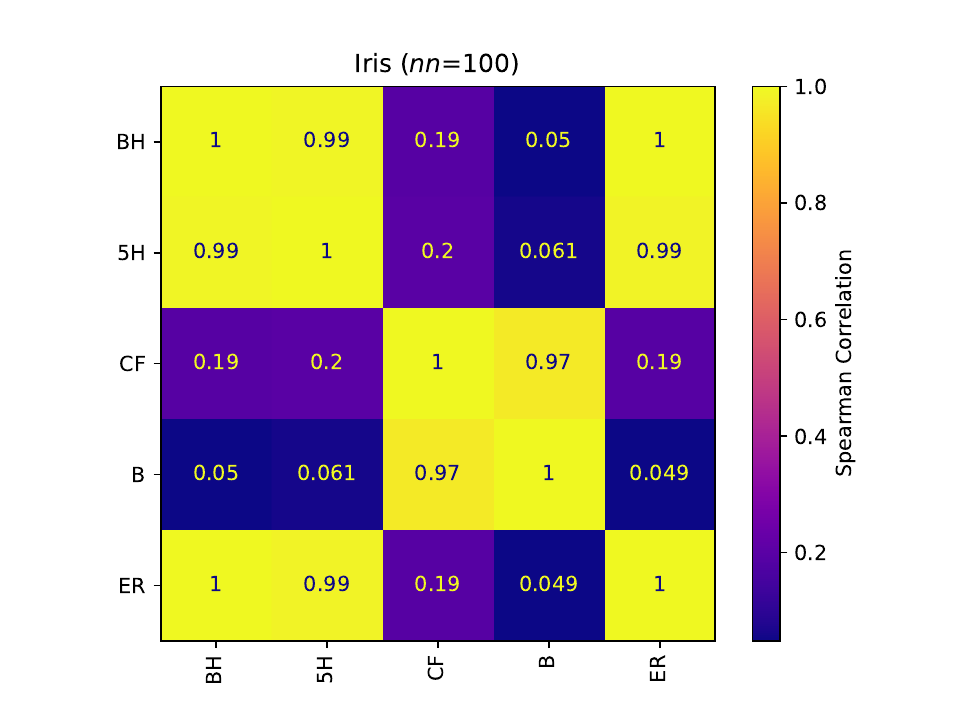}
    \end{subfigure}

    \begin{subfigure}
        \centering
        \includegraphics[width=\mywidth]{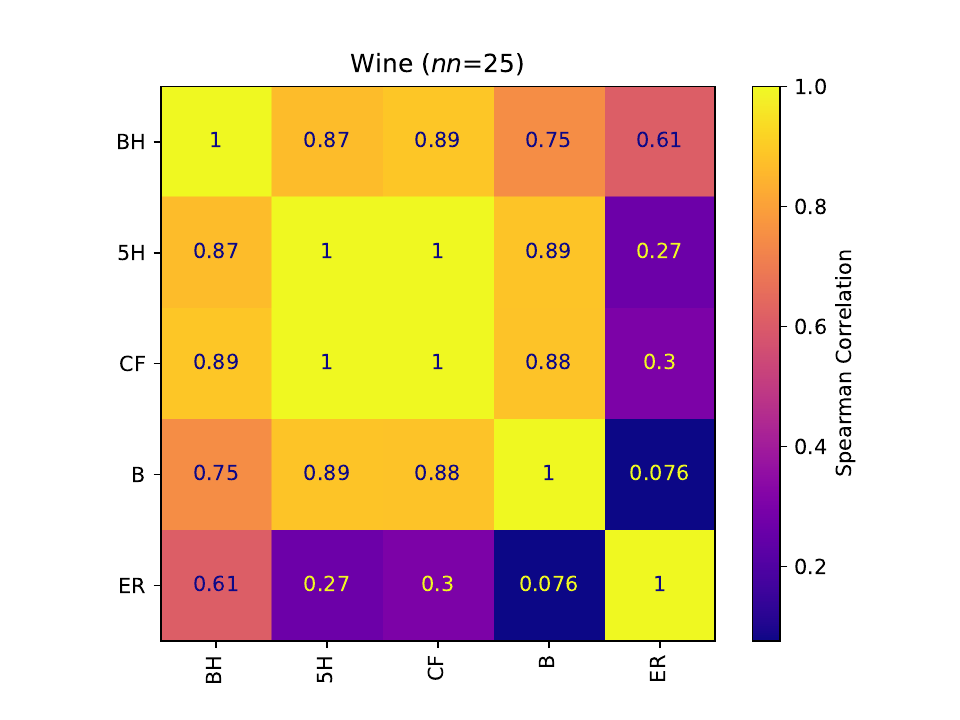}
    \end{subfigure}
    \begin{subfigure}
        \centering
        \includegraphics[width=\mywidth]{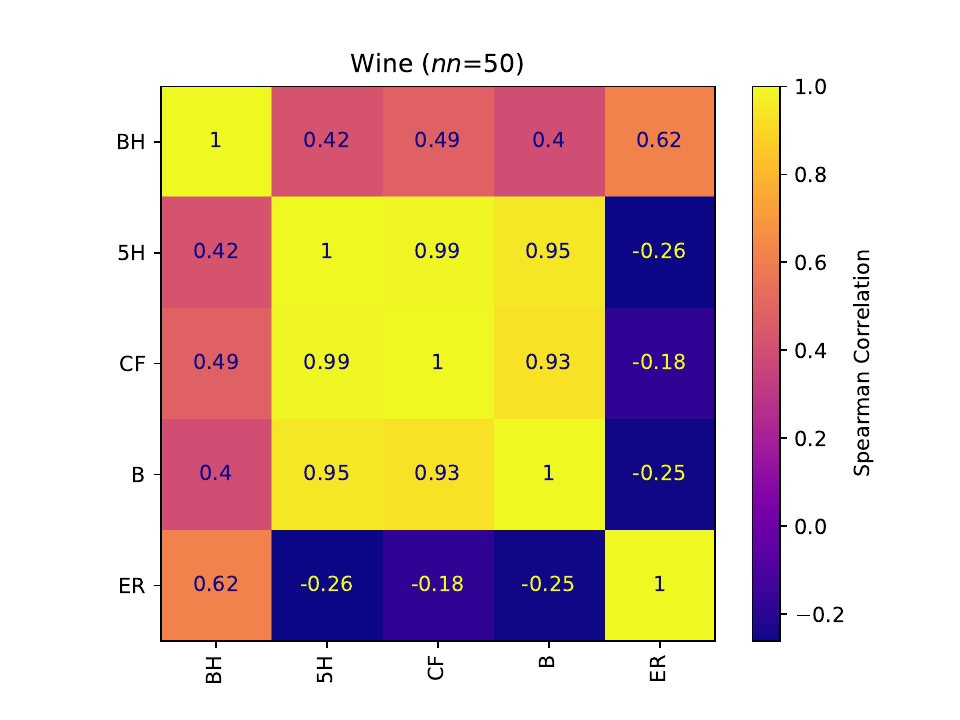}
    \end{subfigure}
    \begin{subfigure}
        \centering
        \includegraphics[width=\mywidth]{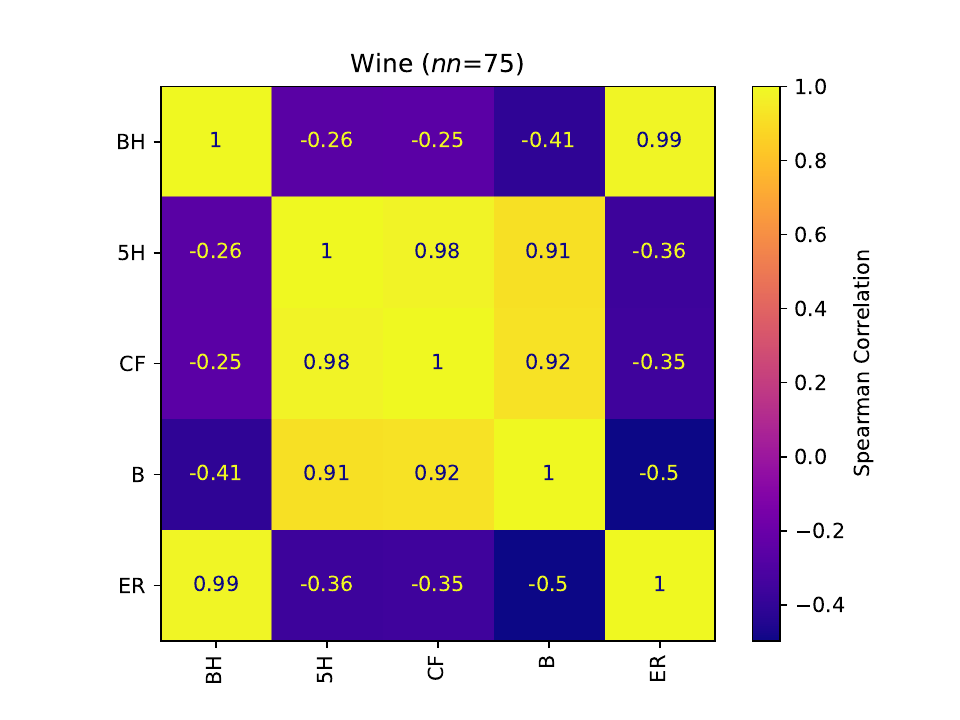}
    \end{subfigure}
    \begin{subfigure}
        \centering
        \includegraphics[width=\mywidth]{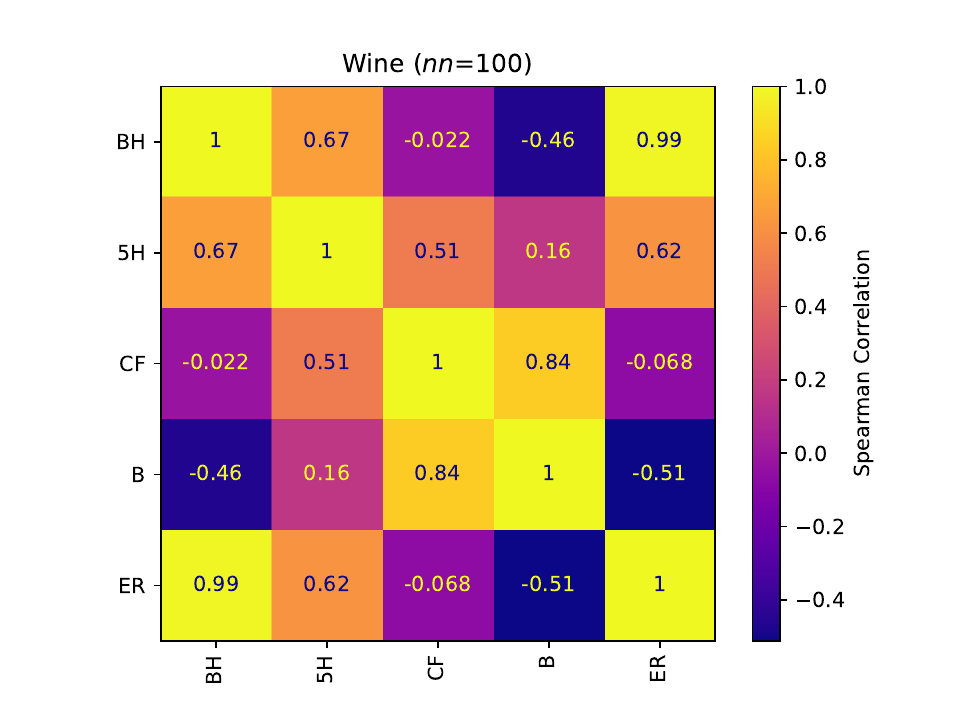}
    \end{subfigure}

    \begin{subfigure}
        \centering
        \includegraphics[width=\mywidth]{Figures/Cancer_nn25_confusion_matrix.pdf}
    \end{subfigure}
    \begin{subfigure}
        \centering
        \includegraphics[width=\mywidth]{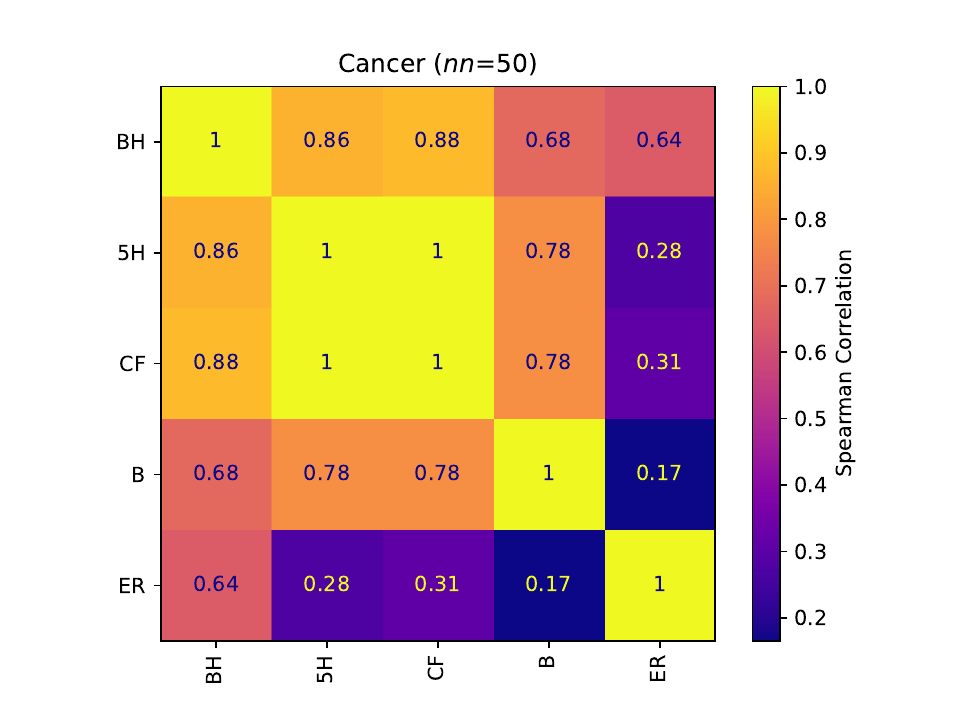}
    \end{subfigure}
    \begin{subfigure}
        \centering
        \includegraphics[width=\mywidth]{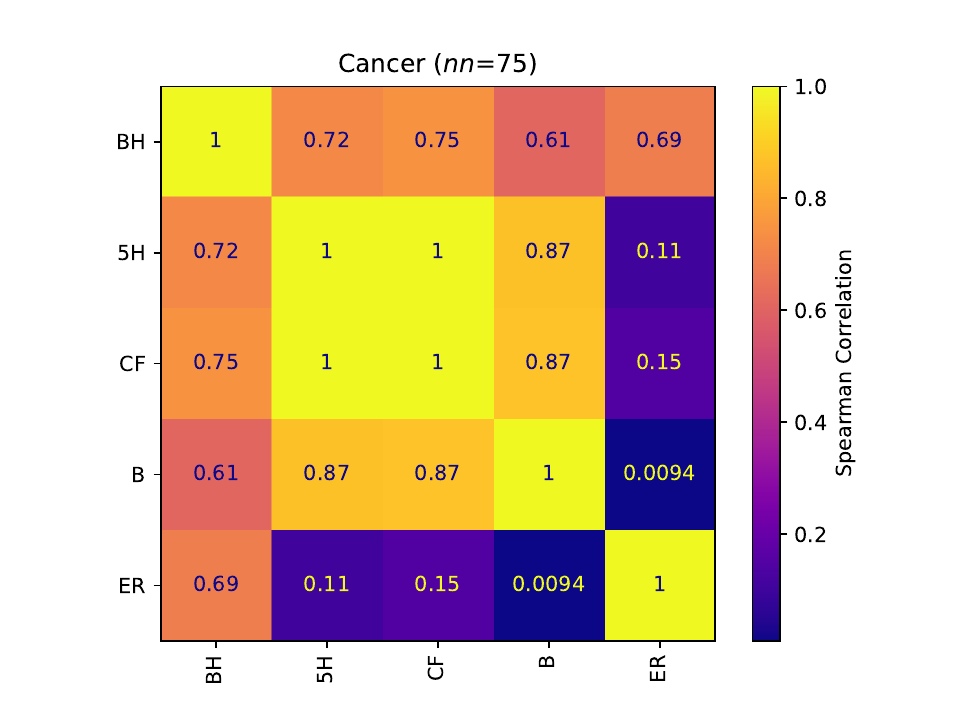}
    \end{subfigure}
    \begin{subfigure}
        \centering
        \includegraphics[width=\mywidth]{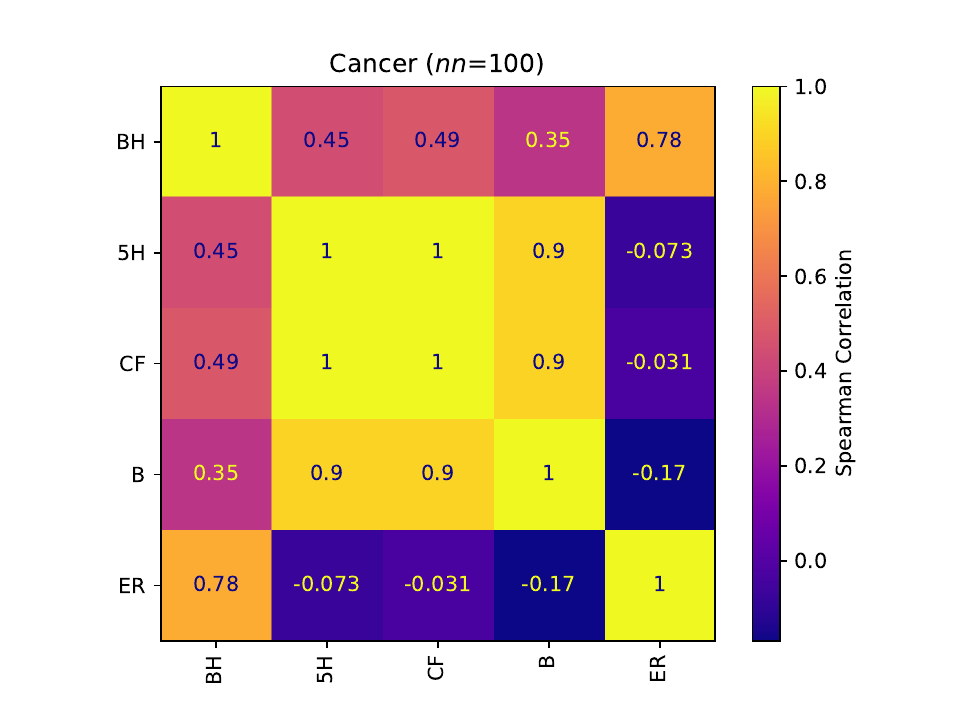}
    \end{subfigure}
    \caption{The Spearman Rank Correlation Coefficient between different centrality measures on the nearest neighbor graphs of various datasets. Here BH is biharmonic distance, CF is current-flow centrality, B is betweenness centrality, ER is effective resistance, and 5H is 5-harmonic distance.}
\end{figure}

\begin{figure}[H]
    \centering
    \begin{subfigure}
        \centering
        \includegraphics[width=\mywidth]{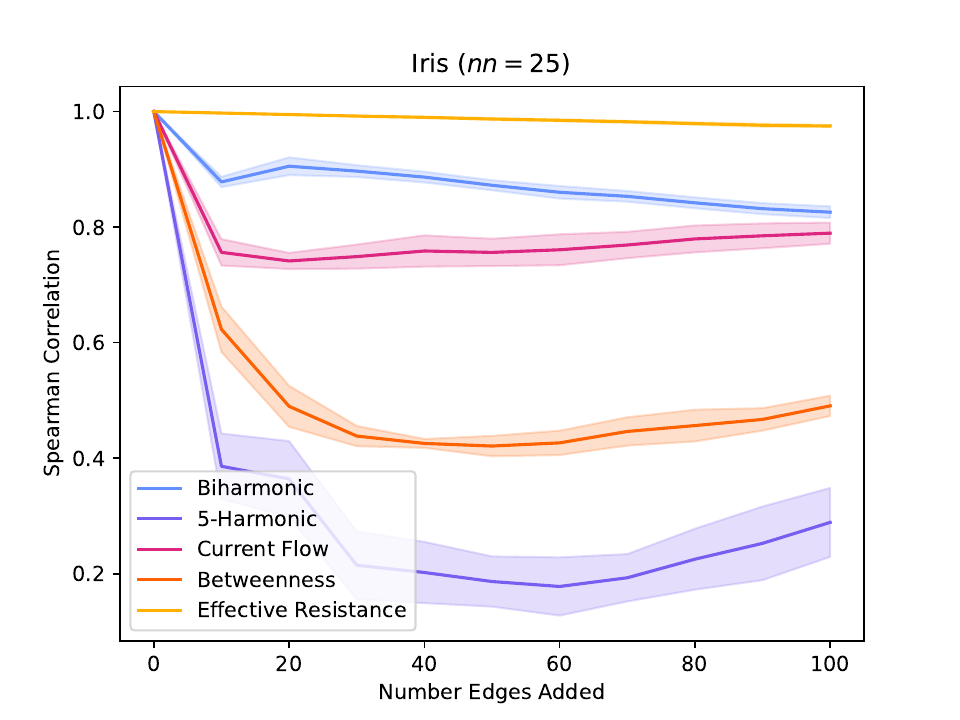}
    \end{subfigure}
    \begin{subfigure}
        \centering
        \includegraphics[width=\mywidth]{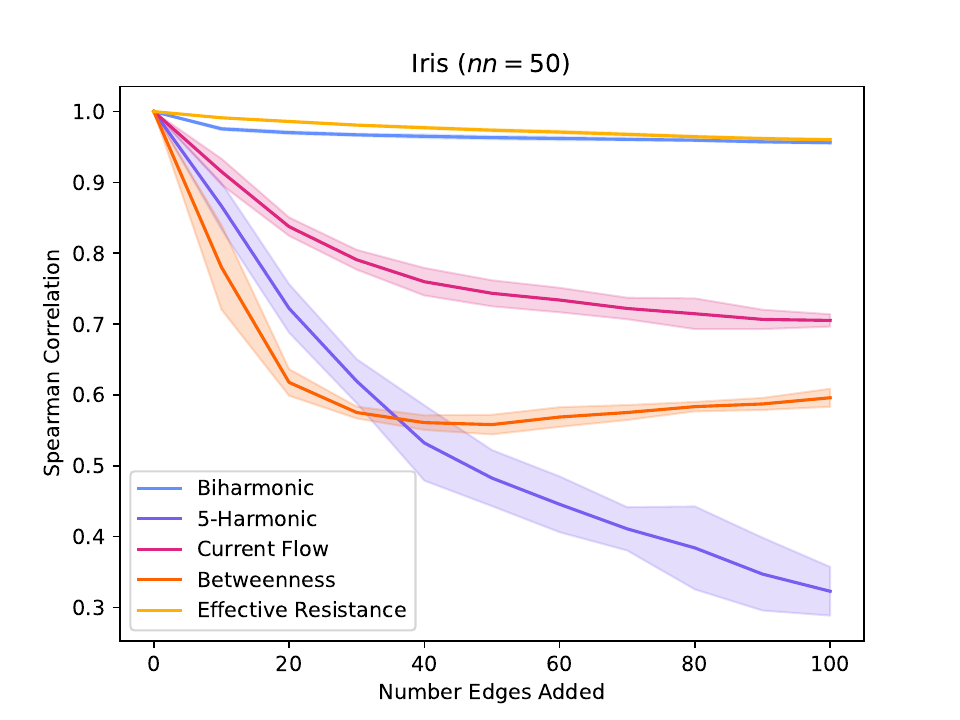}
    \end{subfigure}
    \begin{subfigure}
        \centering
        \includegraphics[width=\mywidth]{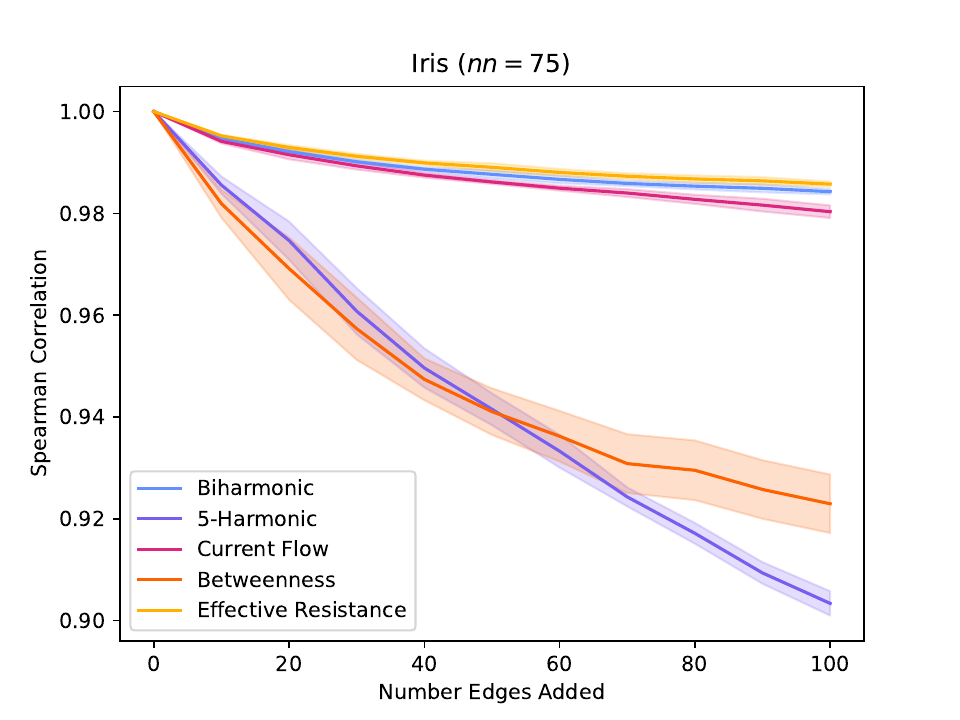}
    \end{subfigure}
    \begin{subfigure}
        \centering
        \includegraphics[width=\mywidth]{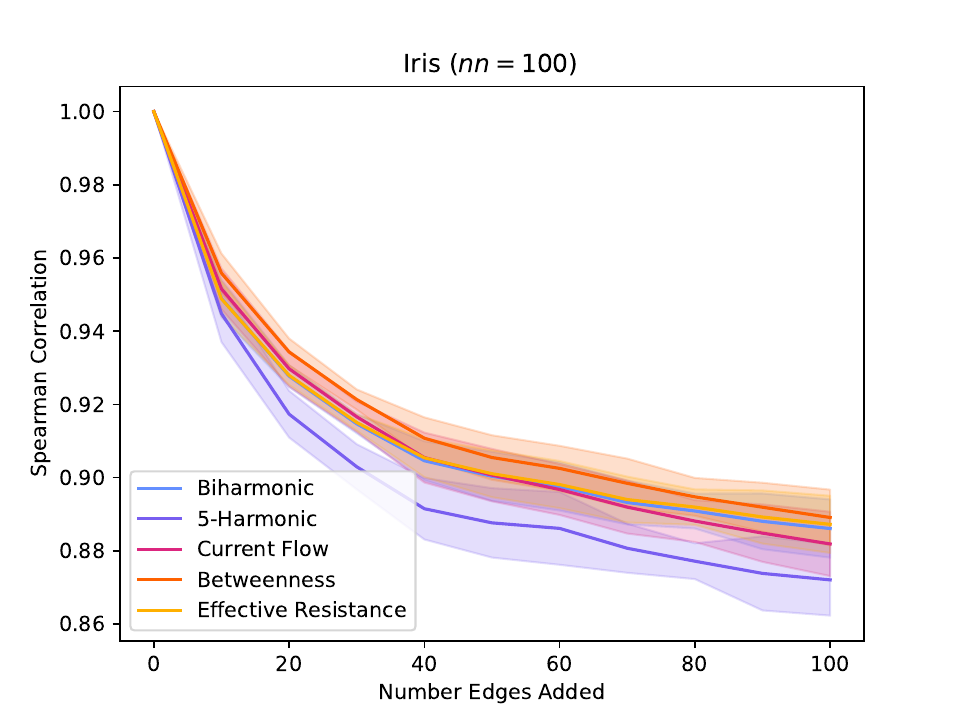}
    \end{subfigure}

    \begin{subfigure}
        \centering
        \includegraphics[width=\mywidth]{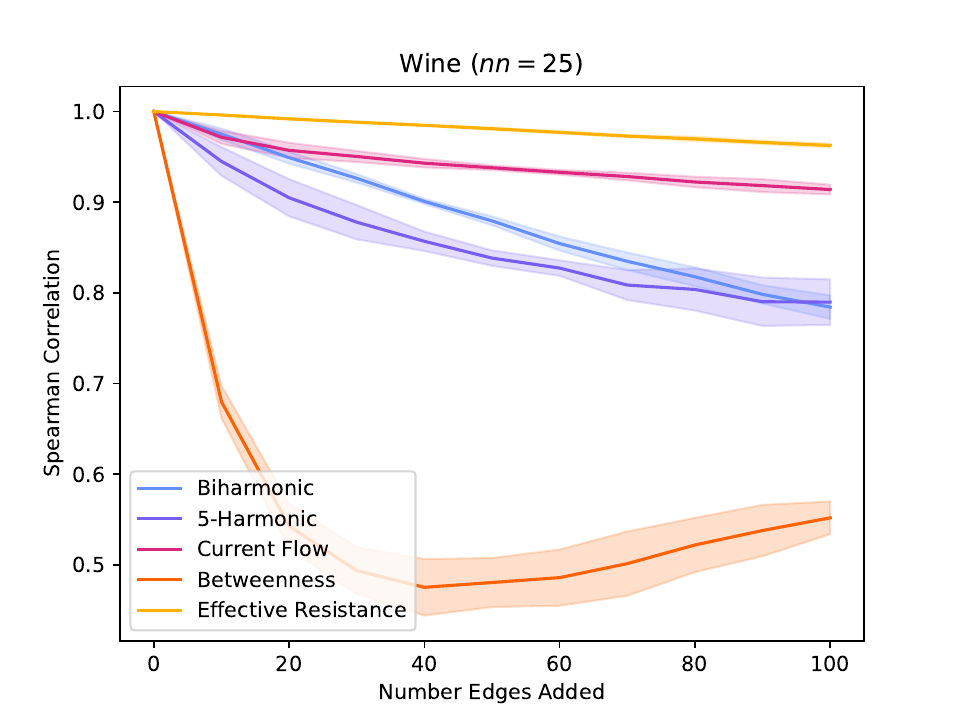}
    \end{subfigure}
    \begin{subfigure}
        \centering
        \includegraphics[width=\mywidth]{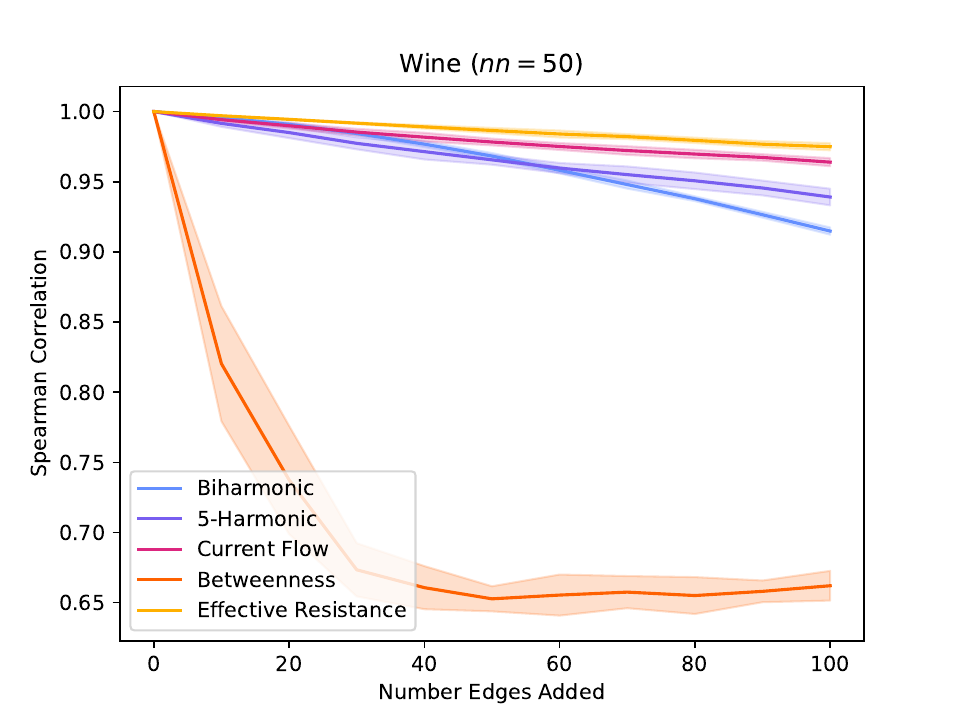}
    \end{subfigure}
    \begin{subfigure}
        \centering
        \includegraphics[width=\mywidth]{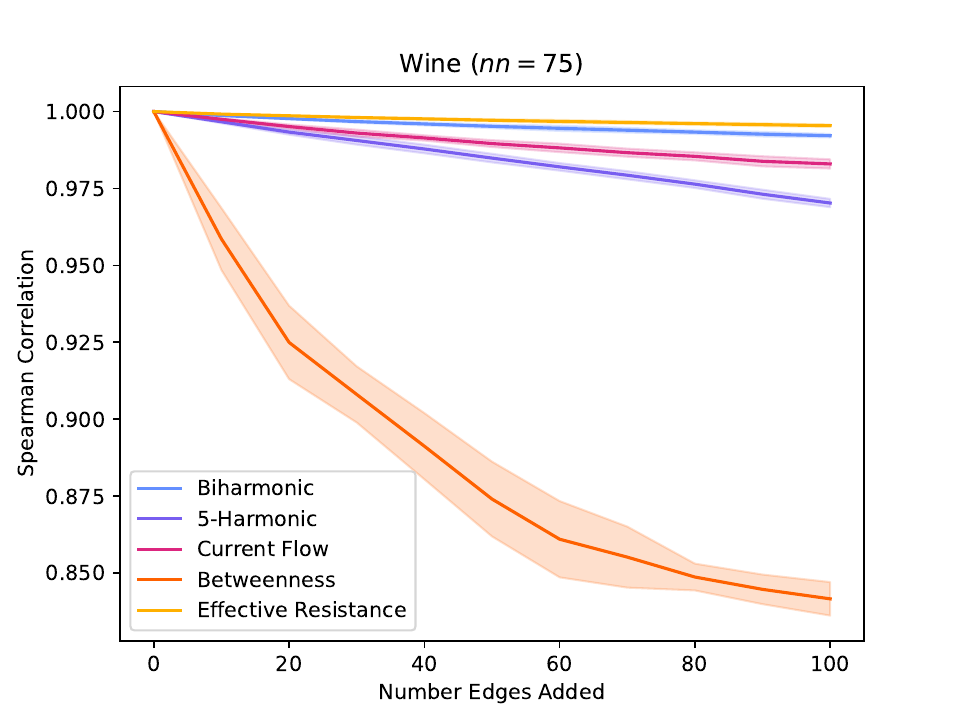}
    \end{subfigure}
    \begin{subfigure}
        \centering
        \includegraphics[width=\mywidth]{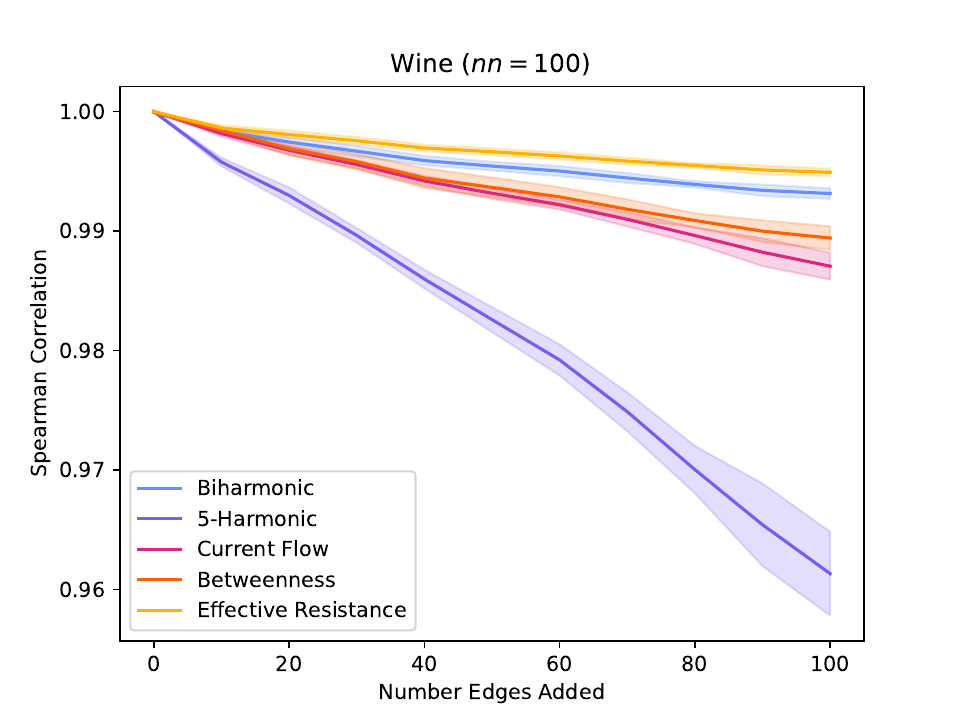}
    \end{subfigure}

    \begin{subfigure}
        \centering
        \includegraphics[width=\mywidth]{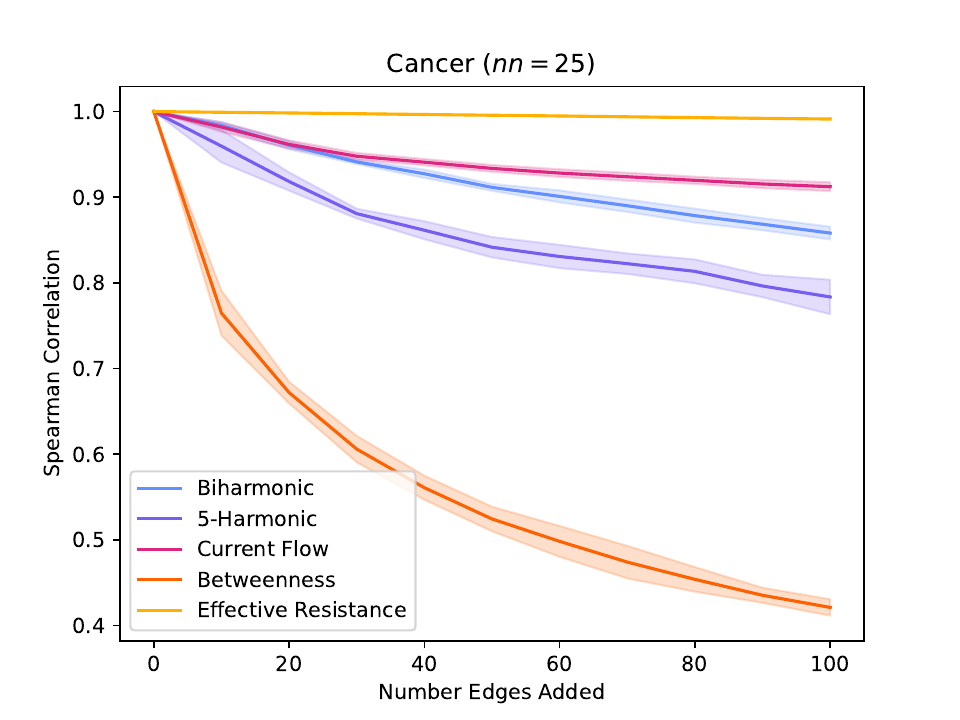}
    \end{subfigure}
    \begin{subfigure}
        \centering
        \includegraphics[width=\mywidth]{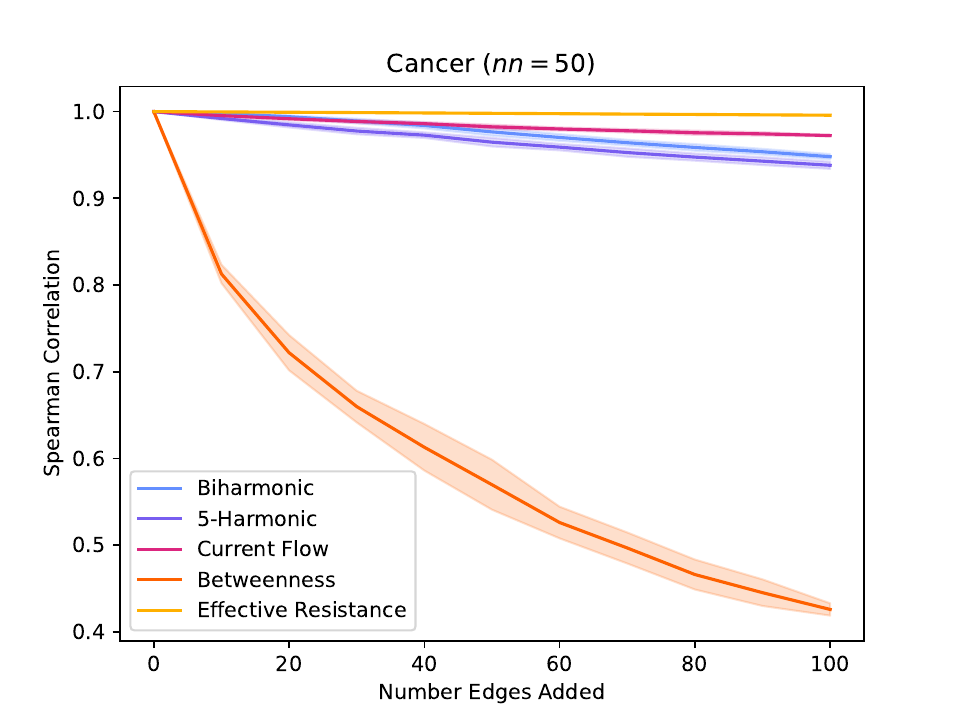}
    \end{subfigure}
    \begin{subfigure}
        \centering
        \includegraphics[width=\mywidth]{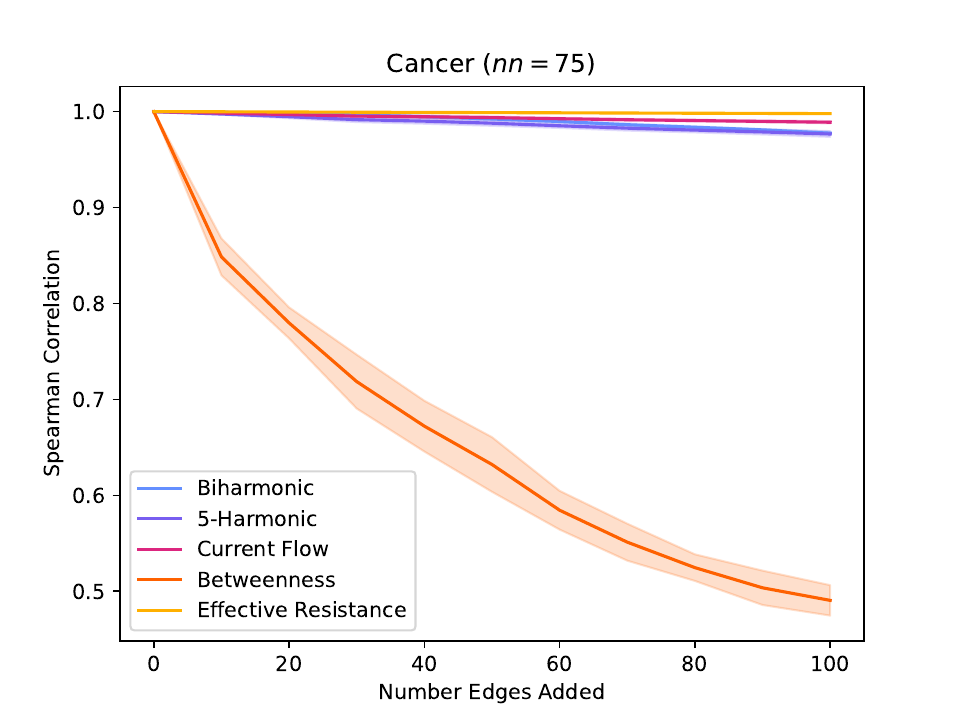}
    \end{subfigure}
    \begin{subfigure}
        \centering
        \includegraphics[width=\mywidth]{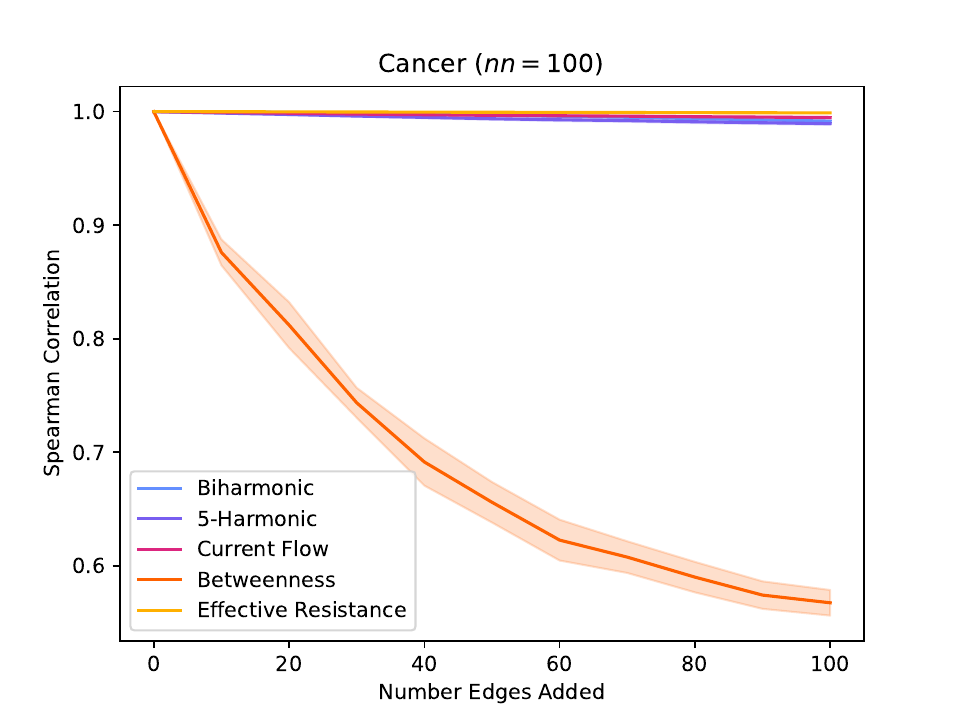}
    \end{subfigure}
    \caption{The Spearman Rank Correlation Correlation between an edge centrality measure and the same edge centrality measure after a number of random edges were added. Experiments were repeated 5 times for each centrality measure.}    
\end{figure}

\section{Results from Clustering Experiments}
\label{apx:clustering_experiments}

\vspace{-0.75cm}

\begin{table*}[ht]
    \centering
    \begin{tabular}{|p{0.1\linewidth}|p{0.15\linewidth}|p{0.15\linewidth}|p{0.15\linewidth}|p{0.15\linewidth}|}
        \hline
        \multicolumn{5}{|c|}{\textbf{Vertex Classification: Purity $(\uparrow)$}} \\
        \hline
        & \textbf{Iris} (nn=25) & \textbf{Iris} (nn=50)  & \textbf{Iris} (nn=75) & \textbf{Iris} (nn=100) \\
        \hline
        \multicolumn{5}{|c|}{\textbf{$k$-harmonic Girvan-Newman}} \\
        \hline
        $k=0.5$ & 0.673 & 0.347 & 0.347 & 0.347 \\
        $k=1$ & 0.673 & 0.673 & 0.347 & 0.347 \\
        $k=2$ & 0.673 & 0.673 & 0.347 & 0.347 \\
        $k=5$ & \first{0.907} & 0.673 & 0.347 & 0.347 \\
        $k=10$ & \first{0.907} & \second{0.907} & 0.667 & 0.347 \\
        $k=20$ & \first{0.907} & \first{0.913} & 0.667 & 0.347 \\
        \hline
        \multicolumn{5}{|c|}{\textbf{$k$-harmonic $k$-means}} \\
        \hline
        $k=0.5$ & 0.723 $\pm$ 0.119 & 0.541 $\pm$ 0.117 & 0.378 $\pm$ 0.072 & 0.417 $\pm$ 0.122 \\
        $k=1$ & 0.765 $\pm$ 0.087 & 0.572 $\pm$ 0.146 & 0.379 $\pm$ 0.072 & 0.421 $\pm$ 0.086 \\
        $k=2$ & 0.825 $\pm$ 0.098  & 0.740 $\pm$ 0.080 & 0.379 $\pm$ 0.072 & 0.477 $\pm$ 0.152 \\
        $k=5$ & 0.758 $\pm$ 0.069 & 0.833 $\pm$ 0.060 & \first{0.833 $\pm$ 0.000} & 0.437 $\pm$ 0.098 \\
        $k=10$ & 0.816 $\pm$ 0.056 & 0.793 $\pm$ 0.073 & 0.737 $\pm$ 0.059 & \third{0.873 $\pm$ 0.090} \\
        $k=20$ & 0.815 $\pm$ 0.056 & 0.793 $\pm$ 0.074 & \second{0.785 $\pm$ 0.055} & \first{0.933 $\pm$ 0.000} \\
        \hline
        \multicolumn{5}{|c|}{\textbf{Low-Rank $k$-harmonic $k$-means}} \\
        \hline
        $k=0.5$ & 0.831 $\pm$ 0.081 & 0.884 $\pm$ 0.055 & 0.713 $\pm$ 0.046 & 0.700 $\pm$ 0.058 \\
        $k=1$ & 0.791 $\pm$ 0.094 & 0.845 $\pm$ 0.065 & 0.715 $\pm$ 0.055 & 0.722 $\pm$ 0.034 \\ 
        $k=2$ & \third{0.838 $\pm$ 0.007} & 0.813 $\pm$ 0.069 & \second{0.785 $\pm$ 0.055} & \third{0.876 $\pm$ 0.091} \\
        $k=5$ & 0.820 $\pm$ 0.077 & 0.813 $\pm$ 0.069 & \third{0.769 $\pm$ 0.059} & \second{0.906 $\pm$ 0.061} \\
        $k=10$ & 0.806 $\pm$ 0.070  & 0.793 $\pm$ 0.073 & 0.753 $\pm$ 0.060 & \first{0.933 $\pm$ 0.000} \\
        $k=20$ & \second{0.859 $\pm$ 0.051} & 0.812 $\pm$ 0.068 & 0.721 $\pm$ 0.055 & \first{0.933 $\pm$ 0.000} \\
        \hline
        \multicolumn{5}{|c|}{\textbf{Girvan-Newman}} \\
        \hline
        - & \first{0.907} & \third{0.873} & 0.667 & 0.346  \\
        \hline
        \multicolumn{5}{|c|}{\textbf{Spectral Clustering}} \\
        \hline
        - & 0.704 $\pm$ 0.051 & 0.835 $\pm$ 0.089 & 0.647 $\pm$ 0.028  & 0.661 $\pm$ 0.064 \\
        \hline
    \end{tabular}
    \caption{Results of clustering experiments on nearest neighbor graphs of the Iris dataset~\cite{misc_iris_53}, evaluated using cluster purity. Averages were taken over ten trials for all algorithms using k-Means, provided with a 95\% confidence interval. The Girvan-Newman algorithm is deterministic, so there is no confidence interval for these results. \first{First}, \second{second}, and \third{third} best results for each graph are colored.}
    \label{table:iris_experimental_results}
\end{table*}

\begin{table*}[ht]
    \centering
    \begin{tabular}{|p{0.1\linewidth}|p{0.15\linewidth}|p{0.15\linewidth}|p{0.15\linewidth}|p{0.15\linewidth}|}
        \hline
        \multicolumn{5}{|c|}{\textbf{Vertex Classification: Purity $(\uparrow)$}} \\
        \hline
        & \textbf{Cancer} (nn=25) & \textbf{Cancer} (nn=50)  & \textbf{Cancer} (nn=75) & \textbf{Cancer} (nn=100)  \\
        \hline
        \multicolumn{5}{|c|}{\textbf{$k$-harmonic Girvan-Newman}} \\
        \hline
        $k=0.5$ & 0.629 & 0.629 & 0.627 & 0.629 \\
        $k=1$  & 0.627 & 0.629 & 0.627 & 0.627 \\
        $k=2$  & \second{0.822} & \second{0.822} & 0.807  & 0.627 \\
        $k=5$  & \second{0.822} & \second{0.822} & 0.801  & 0.779 \\
        $k=10$  & \second{0.822} & \second{0.822} & 0.801  & 0.787 \\
        $k=20$  & \second{0.822}  & \second{0.822} & 0.801 & 0.787 \\
        \hline
        \multicolumn{5}{|c|}{\textbf{$k$-harmonic $k$-means}} \\
        \hline
        $k=0.5$ & \third{0.774 $\pm$ 0.071}  & 0.739 $\pm$ 0.082 & 0.716 $\pm$ 0.082  & 0.637 $\pm$ 0.019 \\
        $k=1$ & \second{0.822 $\pm$ 0.000}  & 0.803 $\pm$ 0.044 & 0.791 $\pm$ 0.041  & 0.767 $\pm$ 0.053 \\
        $k=2$  & \second{0.822 $\pm$ 0.000} &  \second{0.822 $\pm$ 0.001} & 0.807 $\pm$ 0.000 &  0.803 $\pm$ 0.000 \\
        $k=5$  & \second{0.822 $\pm$ 0.000} &  \second{0.822 $\pm$ 0.000} & \second{0.812 $\pm$ 0.002} & 0.808 $\pm$ 0.004 \\
        $k=10$ & \second{0.822 $\pm$ 0.000} & \second{0.822 $\pm$ 0.000} & \second{0.812 $\pm$ 0.002} & 0.808 $\pm$ 0.004 \\
        $k=20$ & \second{0.822 $\pm$ 0.000} & \second{0.822 $\pm$ 0.000} & \second{0.812 $\pm$ 0.002} & \second{0.811 $\pm$ 0.004} \\
        \hline
        \multicolumn{5}{|c|}{\textbf{Low-Rank $k$-harmonic $k$-means}} \\
        \hline
        $k=0.5$ & \second{0.822 $\pm$ 0.000} & \third{0.821 $\pm$ 0.000} & 0.805 $\pm$ 0.002  & 0.809 $\pm$ 0.005 \\
        $k=1$  & \second{0.822 $\pm$ 0.000} & \third{0.821 $\pm$ 0.001} & 0.807 $\pm$ 0.001  & \third{0.810 $\pm$ 0.004} \\
        $k=2$  & \second{0.822 $\pm$ 0.000} & \second{0.822 $\pm$ 0.000} & \third{0.811 $\pm$ 0.001} & \third{0.810 $\pm$ 0.004} \\
        $k=5$  & \second{0.822 $\pm$ 0.000} & \second{0.822 $\pm$ 0.000} & \second{0.812 $\pm$ 0.002}  & 0.809 $\pm$ 0.005 \\
        $k=10$ & \second{0.822 $\pm$ 0.000} & \second{0.822 $\pm$ 0.000} & \first{0.813 $\pm$ 0.002} & 0.809 $\pm$ 0.005 \\
        $k=20$ & \second{0.822 $\pm$ 0.000} & \second{0.822 $\pm$ 0.000} & \second{0.812 $\pm$ 0.002} & 0.808 $\pm$ 0.005 \\
        \hline
        \multicolumn{5}{|c|}{\textbf{Girvan-Newman}} \\
        \hline
        - & \first{0.824} & \first{0.886} & 0.747  & \first{0.840} \\ 
        \hline
        \multicolumn{5}{|c|}{\textbf{Spectral Clustering}} \\
        \hline
        - &  0.715 $\pm$ 0.077 &  0.741 $\pm$ 0.014 & 0.702 $\pm$ 0.133 & 0.763 $\pm$ 0.078 \\
        \hline
    \end{tabular}
    \caption{Results of clustering experiments on nearest neighbor graphs of the Cancer dataset~\cite{misc_breast_cancer_14}, evaluated using cluster purity. Averages were taken over ten trials for all algorithms using k-Means, provided with a 95\% confidence interval. The Girvan-Newman algorithm is deterministic, so there is no confidence interval for these results. \first{First}, \second{second}, and \third{third} best results for each graph are colored.}
    \label{table:cancer_experimental_results}
\end{table*}

\begin{table*}[ht]
    \centering
    \begin{tabular}{|p{0.1\linewidth}|p{0.15\linewidth}|p{0.15\linewidth}|p{0.15        \linewidth}|p{0.15\linewidth}|}
        \hline
        \multicolumn{5}{|c|}{\textbf{Vertex Classification: Purity $(\uparrow)$}} \\
        \hline
        & \textbf{Wine} (nn=25) & \textbf{Wine} (nn=50)  & \textbf{Wine} (nn=75) & \textbf{Wine} (nn=100) \\
        \hline
        \multicolumn{5}{|c|}{\textbf{$k$-harmonic Girvan-Newman}} \\
        \hline
        $k=0.5$ & 0.404 & 0.399 & 0.404 & 0.410   \\
        $k=1$ & 0.410 & 0.399 & 0.404 & 0.410 \\
        $k=2$ & \second{0.708} & 0.607 & 0.404  & 0.410  \\
        $k=5$ & \second{0.708} & \second{0.730} & 0.680 & 0.680 \\
        $k=10$ & \second{0.708} & \second{0.730} & 0.680 & 0.680  \\
        $k=20$ & \second{0.708} &  \second{0.730} & 0.680 & 0.680 \\
        \hline
        \multicolumn{5}{|c|}{\textbf{$k$-harmonic $k$-means}} \\
        \hline
        $k=0.5$ & 0.656 $\pm$ 0.067 & 0.537 $\pm$ 0.085 & 0.423 $\pm$ 0.042  & 0.443 $\pm$ 0.060  \\ 
        $k=1$ & \third{0.688 $\pm$ 0.034} & 0.626 $\pm$ 0.088 & 0.463 $\pm$ 0.082  & 0.466 $\pm$ 0.074  \\
        $k=2$ & \first{0.719 $\pm$ 0.000} & 0.699 $\pm$ 0.040 & 0.537 $\pm$ 0.104  & 0.430 $\pm$ 0.063  \\ 
        $k=5$ & \first{0.719 $\pm$ 0.000} & \third{0.725 $\pm$ 0.000} & \second{0.701 $\pm$ 0.007} & \first{0.760 $\pm$ 0.002}  \\
        $k=10$ & \first{0.719 $\pm$ 0.000} & \third{0.725 $\pm$ 0.000} & \third{0.699 $\pm$ 0.006} & \third{0.715 $\pm$ 0.002}  \\ 
        $k=20$ & \first{0.719 $\pm$ 0.000}  & \third{0.725 $\pm$ 0.000} & 0.697 $\pm$ 0.006  & \third{0.715 $\pm$ 0.002} \\
        \hline
        \multicolumn{5}{|c|}{\textbf{Low-Rank $k$-harmonic $k$-means}} \\
        \hline
        $k=0.5$ & \first{0.719 $\pm$ 0.000} & 0.724 $\pm$ 0.004 & 0.696 $\pm$ 0.015  & 0.670 $\pm$ 0.021  \\
        $k=1$ & \first{0.719 $\pm$ 0.000} & \first{0.733 $\pm$ 0.002} & \third{0.699 $\pm$ 0.012} & 0.694 $\pm$ 0.021 \\
        $k=2$ & \first{0.719 $\pm$ 0.000} & \third{0.725 $\pm$ 0.000} & 0.698 $\pm$ 0.005  & \third{0.715 $\pm$ 0.002}  \\ 
        $k=5$ & \first{0.719 $\pm$ 0.000} & \third{0.725 $\pm$ 0.000} & \first{0.703 $\pm$ 0.004}  & \second{0.716 $\pm$ 0.002} \\ 
        $k=10$ & \first{0.719 $\pm$ 0.000} & \third{0.725 $\pm$ 0.000} & \second{0.701 $\pm$ 0.007} & \third{0.715 $\pm$ 0.002}  \\
        $k=20$ & \first{0.719 $\pm$ 0.000} & \third{0.725 $\pm$ 0.000} & 0.697 $\pm$ 0.005 & \third{0.715 $\pm$ 0.002} \\
        \hline
        \multicolumn{5}{|c|}{\textbf{Girvan-Newman}} \\
        \hline
        - & \second{0.708} & 0.719 & 0.685  & 0.685  \\ 
        \hline
        \multicolumn{5}{|c|}{\textbf{Spectral Clustering}} \\
        \hline
        - & 0.642 $\pm$ 0.043 & 0.647 $\pm$ 0.028  & 0.661 $\pm$ 0.064 & 0.625 $\pm$ 0.030 \\
        \hline
    \end{tabular}
    \caption{Results of clustering experiments on nearest neighbor graphs of the Wine dataset~\cite{misc_wine_109}, evaluated using cluster purity. Averages were taken over ten trials for all algorithms using k-Means, provided with a 95\% confidence interval. The Girvan-Newman algorithm is deterministic, so there is no confidence interval for these results. \first{First}, \second{second}, and \third{third} best results for each graph are colored.}
    \label{table:wine_experimental_results}
\end{table*}

\begin{table*}[th]
    \centering
    \begin{tabular}{|p{0.1\linewidth}|p{0.20\linewidth}|p{0.20\linewidth}|}
        \hline
        \multicolumn{3}{|c|}{\textbf{Vertex Classification: Purity $(\uparrow)$}} \\
        \hline
        & \textbf{Synthetic} (c=3,p=0.6,q=0.2) & \textbf{Synthetic} (c=5,p=0.7,q=0.15) \\
        \hline
        \multicolumn{3}{|c|}{\textbf{Girvan-Newman}} \\
        \hline
        - & 0.347 & 0.800 \\ 
        \hline
        \multicolumn{3}{|c|}{\textbf{$k$-harmonic Girvan-Newman}} \\
        \hline
        $k=0.5$ & 0.347 & 0.216  \\
        $k=1$ & 0.347 & 0.216 \\
        $k=2$ & 0.347 & 0.216 \\
        $k=5$ & 0.347 & 0.216 \\
        $k=10$ & 0.347 & 0.216 \\
        $k=20$ & 0.347 & 0.216 \\
        \hline
        \multicolumn{3}{|c|}{\textbf{$k$-harmonic $k$-means}} \\
        \hline
        $k=0.5$ & 0.426 $\pm$ 0.074 & 0.282 $\pm$ 0.043  \\ 
        $k=1$ & 0.471 $\pm$ 0.107 & 0.304 $\pm$ 0.059 \\
        $k=2$ & 0.471 $\pm$ 0.107 & 0.327 $\pm$ 0.097 \\ 
        $k=5$ & 0.669 $\pm$ 0.156 & 0.567 $\pm$ 0.129 \\
        $k=10$ & 0.900 $\pm$ 0.114 & \second{0.940 $\pm$ 0.069} \\ 
        $k=20$ & \third{0.926 $\pm$ 0.098} & 0.880 $\pm$ 0.073 \\
        \hline
        \multicolumn{3}{|c|}{\textbf{Low Rank $k$-harmonic $k$-means}} \\
        \hline
        $k=0.5$ & 0.903 $\pm$ 0.112 & 0.880 $\pm$ 0.100 \\
        $k=1$ &  0.867 $\pm$ 0.123 & \first{0.960 $\pm$ 0.060} \\
        $k=2$ &  0.967 $\pm$ 0.075 & 0.860 $\pm$ 0.100 \\ 
        $k=5$ &  0.867 $\pm$ 0.123 & 0.860 $\pm$ 0.118 \\ 
        $k=10$ & \first{0.988 $\pm$ 0.002} & \second{0.940 $\pm$ 0.070} \\
        $k=20$ & \second{0.960 $\pm$ 0.000} & \first{0.960 $\pm$ 0.060} \\
        \hline
        \multicolumn{3}{|c|}{\textbf{Spectral Clustering}} \\
        \hline
        -& 0.893 $\pm$ 0.109 & \third{0.899 $\pm$ 0.102} \\
        \hline
    \end{tabular}
    \caption{Results of experiments on two synthetically generated stochastic block graphs with 50 vertices per cluster. The 3-cluster graph (left) was generated with edges within clusters occurring with probability $p$=0.6 and between clusters with probability $q$=0.2. The 5-cluster graph (right) is similarly generated, but with $p$=0.7 and $q$=0.15. Results were evaluated with cluster purity, and averages were taken over ten trials for all algorithms using $k$-means, provided with a 95\% confidence interval. \first{First}, \second{second}, and \third{third} best results for each graph are colored.}
    \label{table:synthetic_experimental_results}
\end{table*}

\end{document}